\documentclass[a4paper,UKenglish]{article}

\usepackage{microtype}

\bibliographystyle{Common/good}


\usepackage[utf8]{inputenc}
\usepackage[british]{babel}
\usepackage{a4}
\usepackage{a4wide}
\usepackage[a4paper,colorlinks,pagebackref,linkcolor=blue,citecolor=red,urlcolor=cyan]{hyperref}
\usepackage{amsmath,amssymb,stmaryrd,amsfonts}

\usepackage{Common/theorems}
\usepackage{Common/prooftree}
\usepackage{Common/appendix}
\capturecounter{theorem}
\capturecounter{definition}

\title{A simple presentation of the effective topos}

\author{Alexis Bernadet$^{1}$, Stéphane Graham-Lengrand$^{1,2}$\\
  $^{1}$\'Ecole Polytechnique, France\\
  $^{2}$CNRS, France}







\long\def\ignore#1{\relax}

\newcommand\struto[1][15pt]{{\raise #1 \hbox{\strut}}}%
\newcommand\strutb[1][15pt]{{\raise-#1 \hbox{\strut}}}%


\setcounter{totalnumber}{50}
\setcounter{topnumber}{50}
\setcounter{bottomnumber}{50}
\floatsep5pt  
\intextsep5pt 

\interfootnotelinepenalty=1000000000






\newcommand\olditem{}
\newcommand\olditemize{}
\newcommand\oldenditemize{}
\newcommand\oldenumerate{}
\newcommand\oldendenumerate{}
\let\olditem\item
\let\olditemize\itemize
\let\oldenditemize\enditemize
\let\oldenumerate\enumerate
\let\oldendenumerate\endenumerate

\newcommand\myitem{}
\makeatletter
\def\myitem{\@ifnextchar[\@myitemwith\@myitemwithout}
\long\def\@myitemwith[#1]{\olditem[{#1}]\unskip}
\long\def\@myitemwithout{\olditem\unskip}
\makeatother

\renewenvironment{itemize}[1][0]{%
  \def\item{\removelastskip\myitem}%
  \removelastskip\olditemize\removelastskip}
{\removelastskip\oldenditemize\removelastskip%
  \def\item\olditem%
}

\renewenvironment{enumerate}[1][0]{%
  \def\item{\removelastskip\myitem}%
  \removelastskip\oldenumerate\removelastskip}
{\removelastskip\oldendenumerate\removelastskip%
  \def\item\olditem%
}



\newcommand\mybox[1]{\fbox{\vbox{#1}}}
\renewcommand\[[1][3]{\par\removelastskip\vskip#1pt\vbox\bgroup\hrule height0pt\vfil\hbox to\hsize\bgroup\hfil\(}
\renewcommand\][1][3]{\)\hfil\egroup\vfil\hrule height0pt\egroup\vskip#1pt\nointerlineskip\noindent}

\newbox\columnsbox
\newbox\tmpbox
\newdimen\columnsheight
\newdimen\columnwidth
\newdimen\remainingwidth
\newdimen\textwidthsave
\def\mycolumnsheight{}

\newcommand\columns[1]{%
  \def\mycolumnsheight{}%
  \setlength\remainingwidth\textwidth%
  \setbox\columnsbox=\vbox\bgroup\vskip0pt\vfil\hbox to\textwidth\bgroup#1\egroup\vfil\egroup%
  \columnsheight=\ht\columnsbox%
  \def\mycolumnsheight{to\columnsheight}%
  \hrule height 0pt\vtop{\hbox to\wd\columnsbox\bgroup#1\egroup}%
}

\makeatletter

\def\commonpart{%
  \setlength\columnwidth{\wd\tmpbox}%
  \vtop{\vskip0pt\hbox to\columnwidth{{\box\tmpbox}}}%
  \advance\remainingwidth-\columnwidth%
  \setlength\textwidth\textwidthsave%
  \hsize\textwidthsave%
}
\def\column{\unskip\setlength\textwidthsave\textwidth\@ifnextchar[\@columnwith\@columnwithout}
\long\def\@columnwith[#1]#2{%
  \def\newhsize{#1\dimexpr\textwidth\relax}%
  \hsize\newhsize%
  \ifdim\hsize<0.1pt\hsize\remainingwidth\fi%
  \setlength\textwidth\hsize%
  \setbox\tmpbox=\hbox to\hsize\bgroup\hfil\vtop\mycolumnsheight{\vskip0pt#2\vskip0pt}\hfil\egroup%
  \commonpart%
}
\long\def\@columnwithout#1{%
  \hsize\remainingwidth%
  \setlength\textwidth\hsize%
  \setbox\tmpbox=\hbox\bgroup\vtop\mycolumnsheight{\vskip0pt#1\vskip0pt}\egroup%
  \commonpart%
}
\makeatother




\newcommand{\eqdef}{:=\ }


\renewcommand\l{\lambda}



\newcommand\mathFomega{F_\omega}
\newcommand\Fomega{\ifmmode\mathFomega\else$\mathFomega$\fi}
\newcommand\mathFomegaC{F_\omega^{\mathcal C}}
\newcommand\FomegaC{\ifmmode\mathFomegaC\else$\mathFomegaC$\fi}
\newcommand\mathDNE{\mathrm{DNE}}
\newcommand\DNE{\ifmmode\mathDNE\else$\mathDNE$\fi}



\newcommand{\ie}{i.e.~}

\ifdefined\url
\else
\def\url[#1]#2{\texttt{#2}}
\fi

\let\oldurl\url

\makeatletter
\ifdefined\href
\def\myurl{\@ifnextchar[\@myurlwith\@myurlwithout}
\long\def\@myurlwith[#1]#2{\mbox{\href{#2}{#1}}}
\long\def\@myurlwithout#1{\mbox{\href{#1}{#1}}}
\else
\def\myurl{\@ifnextchar[\@myurlwith\@myurlwithout}
\long\def\@myurlwith[#1]#2{\mbox{\oldurl[#1]{#2}}}
\long\def\@myurlwithout#1{\mbox{\oldurl{#1}}}
\fi
\makeatother

\def\url{\myurl}







\newcommand\monthdisplay[1]{\unskip}



\newcommand{\eqeff}[3]{| #2 =_{#1} #3|}
\newcommand{\eqeffEl}[3]{\eqeff {El(#1)} {#2} {#3}}

\newcommand{\Nat}[0]{\mathbb{N}}

\newcommand\sep\mid









\setcounter{toappendix}{2}

\begin{document}

\maketitle

\begin{abstract}
  We propose for the Effective Topos an alternative construction: a realisability framework composed of two levels of abstraction.

  This construction simplifies the proof that the Effective Topos is a topos (equipped with natural numbers), which is the main issue that this paper addresses.
  In this our work can be compared to Frey's monadic tripos-to-topos construction.

  However, no topos theory or even category theory is here required for the construction of the framework itself, which provides a semantics for higher-order type theories, supporting extensional equalities and the axiom of unique choice.
\end{abstract}


\section{Introduction}

Topos Theory~\cite{LawvereFW:eletcs} can be used to abstract models of
higher-order logic the same way Heyting algebras can be used for propositional
logic. Contrary to set
theories such as ZF, in Topos Theory, the expressions are typed.

A topos is a Cartesian closed category that satisfies certain properties. Even
if the definition is simple, a topos describes an entire mathematical universe
of high-order logic. The category of sets is a topos. So, when dealing with the
internal logic of a topos, if we prove a result that is true in every topos,
then this result is true in Set Theory.
Therefore, the internal logic of a topos is not too "exotic".

In a topos, it is interesting to consider whether it has the following
properties:
\begin{itemize}
\item The internal logic of the topos has the law of excluded middle
\item Every morphism of the topos is ``computable''
\item The topos has \emph{an object of natural integers} (definition of Lawvere)
\end{itemize}
With these three properties we can prove that the halting problem is computable
which is absurd.
Hence, a (non-degenerated) topos cannot have these three properties at the same
time, so we have to use a different topos according to which high roder logic
we want.
\begin{itemize}
\item If we have the first and the third properties, we have classical logic
with arithmetics:
the category of sets is such a topos.
\item If we have the first and second properties but not the third one we have
finite logic: the category of finite sets is such a topos.
\item If we have the second and third properties but not the first one we have
  intuitionistic logic. There are several topos that satisfy these properties.
  One of the most famous ones is Hyland's effective topos~\cite{HylandJ:efft,PhoaW:fttetms}, which can
be seen as the universe of realisability.
  Topos satisfying the second and third properties are the most interesting ones
for computer science because they ensure that a programming language based on
the Topos Theory can be given a constructive semantics.

In such a programming language, we can only write functions that terminate,
as in proof assistants like Coq, so the language cannot be Turing-complete.
The main advantage of having a programming language based on Topos Theory over
more usual intuitionistic systems such as Martin-Loef type theory is the notion
of equality:
it is extensional, has proof-irrelevance, and allows the axiom of unique choice.
\end{itemize}
It is quite simple to check that the category of sets and the category of
finite sets are topos.
However, proving that the effective topos is indeed a topos is much harder
\cite{HylandJ:efft}.
This is mainly do to the fact that the notion of morphism in this category is
not intuitive.
The proof that the effective topos is a topos can be generalised with the Tripos
Theory~\cite{PittsAM:thet}, but this does not simplify the structure of the
proof.

In this paper we present an alternative and simplier construction of a topos
that turns out to be equivalent to Hyland's effective topos:
This construction is based on a realisability framework with two levels of
abstraction: a \emph{low level}, comprising all the objects of the topos,
and a \emph{high level} used to define the morphisms.

Moreover, the high level
\begin{itemize}
\item identifies the properties that are needed to prove
that the framework froms a topos, as simply as proving that the category of sets
is a topos;

\item can be directly used as a model of higher-order intuitionistic systems:
Building such a semantics within the high level of the framework relies on the
properties we prove to show that the full framework forms a topos.

\end{itemize}
This work can be compared to the monadic construction from Tripos to Topos
\cite{Frey2Cat2011} but the constructions and properties of the framework
does not require knowledge about category theory nor Topos Theory.

To our knowledge, the construction of this framework and this presentation of
the effective Topos is new.

In Section 2, we define the core of the framework and prove its basic
properties, especially how the results on the low level can be lifted to the
high level.
In Section 3, we give our own definition of an effective topos, we enrich the
framework and prove that our effective topos is indeed a topos with an object of
natural integers.


\section{Presentation and general tools of the framework}

In this section we are going to define the core and basic tools of our
realisability framework.
The main reason we choose to base our framework on realisability is to be able
to do program extraction (see Theorem \ref{th:ProgramExtraction}).
In Section \ref{sec:Realisability} we define the realisability part of the
framework.
In Section \ref{sec:Eff} we define the effective sets which will be the objects
of the effective topos.
In Section \ref{sec:Elements} we construct the high level part of the framework
which is needed if we want the axiom of unique choice and the fact that the
category that we construct is a topos (see Appendix \ref{sec:Naive}).
In Section \ref{sec:SP} we prove that a property that is true at the low
level is also true at the high level, which is useful, because most of the
properties we need for the framework are high-level properties.
In Section \ref{sec:Function} we define what a function in the framework is and
how to build a high-level function from a low-level one.
This is useful because most of the functions we need in our framework are
high-level ones.

\subsection{Realisability}

\label{sec:Realisability}

Our framework is based on a notion of realisability that interprets formulae
as sets of ``proofs''.
As in Hyland's construction of the effective topos, we shall use integers to
represent ``proofs'', but it could in fact be done with other well-known
structures such as $\l$-terms.

\begin{notation}
If $n$ and $m$ integers, then we can code $(n, m)$ by an integer and we write it
$<n, m>$.

If $e$ is an integer that codes a partial recursive function, then we write
$\varphi_e$ this function. And for all $n$, we write $\varphi_e(n) \downarrow$
if $\varphi_e$ is defined in $n$ and we write $\varphi_e(n)$ the image of $n$ by
$\varphi_e$.
\begin{itemize}

\item We write $\varphi_e(n) \downarrow = m$ for $\varphi_e(n) \downarrow$ and
$\varphi_e(n) = m$.

\item We write $\varphi_e(n) \downarrow \in F$ for $\varphi_e(n) \downarrow$ and
$\varphi_e(n) \in F$

\end{itemize}
We write $Prop = P(\Nat)$ with $\Nat$ the set of integers.
\end{notation}
The definition of logical operators is inspired by Heyting semantics.
\begin{definition}[Logical operators]\strut

If $F, G \in Prop$ and $H \in X \rightarrow Prop$ with $X$ a set, then we write:
\[
\begin{array}{l@{\eqdef}l}
\top & \Nat \\
\bot & \emptyset \\
F \wedge G & \{ < n, m > \mid n \in F ~ m \in G \}\\
F \Rightarrow G & \{ e \mid \varphi_e ~ exists \wedge \forall n \in F,
\varphi_e(n) \downarrow \in G\}\\
\forall x \in X, H(x) & \bigcap_{x \in X} H(x)\\
\exists x \in X, H(x) & \bigcup_{x \in X} H(x)
\end{array}
\]
$F\Leftrightarrow G$ is an abbreviation for $(F \Rightarrow G) \wedge
(G \Rightarrow F)$, and we write $\vDash F$ if and only if
$F \neq \emptyset$.
\end{definition}
\begin{theorem}[Capturing intuitionistic provability]\strut

The notion of inhabitation denoted $\vDash F$ with logic operators defined
above admits the rules of deduction of intuitionistic first-order logic.
\end{theorem}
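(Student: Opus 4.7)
The plan is to proceed by induction on the derivation in some standard presentation of intuitionistic first-order natural deduction (say $\NJ$), proving the slightly stronger statement that for a derivable sequent $F_1,\dots,F_k \vdash G$ there is an explicitly constructible integer $e$ such that $\varphi_e$ is defined on all tuples $\langle n_1,\dots,n_k\rangle$ with $n_i \in F_i$ and $\varphi_e(\langle n_1,\dots,n_k\rangle) \in G$. Soundness of each rule is then witnessed by a concrete combinator on integers, built from the pairing $\langle\cdot,\cdot\rangle$, its projections, composition of partial recursive functions, and the s-m-n theorem. Taking $k=0$ recovers $\vDash G$ whenever $\vdash G$ is derivable.

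\textbf{Case analysis.} The axiom rule $F \vdash F$ is realized by (a code of) the identity function. For $\wedge$-introduction one pairs the realizers given by the two subderivations, using $\langle\cdot,\cdot\rangle$; $\wedge$-elimination uses the corresponding projections, which are partial recursive. For $\Rightarrow$-elimination a realizer is obtained by application: given codes $e$ of $F\Rightarrow G$ and $n$ of $F$, compute $\varphi_e(n)$. For $\forall$-elimination the same realizer serves, since a realizer of $\forall x\in X, H(x)$ lies in every $H(x)$. For $\exists$-introduction, a realizer of $H(x)$ for some concrete $x$ is already an element of $\bigcup_{x\in X} H(x)$, so no reindexing is needed. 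The $\bot$-elimination rule is vacuously sound: if we had a realizer of $\bot=\emptyset$ we could return anything, so any total recursive function will do.

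\textbf{Main obstacle.} The delicate rules are $\Rightarrow$-introduction and $\forall$-introduction, whose induction hypotheses produce, parametrically in a hypothetical realizer of the discharged assumption (resp.\ in $x\in X$), a realizer of the conclusion. The task is to internalize this parametric process as a single integer. This is exactly what the s-m-n theorem provides: from a partial recursive function of two arguments one obtains, uniformly, a code for each of its curried sections. In the $\forall$ case one must additionally observe that the code thus produced does not depend on $x\in X$ (only its behaviour does), so that a single integer realizes $\bigcap_{x\in X} H(x)$; this uniformity is guaranteed because the subderivation does not mention $x$ as a realizer. A similar remark applies to $\exists$-elimination, where one must discharge the existence of a witness without having it explicitly: the realizer of the existential is itself a realizer of $H(x)$ for some $x$, and by uniformity in $x$ of the subderivation for the conclusion one obtains a realizer that does not require knowing which $x$ was used. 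Once these uniformity points are checked, every rule reduces to composing and pairing partial recursive functions, so the argument carries through rule by rule.
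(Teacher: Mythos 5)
Your proposal is correct, and it is essentially the paper's own (implicit) approach: the paper states this theorem without proof, explicitly admitting that the reader can turn an intuitionistic first-order derivation into an integer via Curry--Howard, which is exactly the standard Kleene-style realizability soundness argument you carry out rule by rule. The two points that genuinely need care with these operators --- using the s-m-n theorem for $\Rightarrow$- and $\forall$-introduction, and the uniformity of the extracted code in the individual variable (since $\forall$ is interpreted as intersection and $\exists$ as union, so no witness coding is available for $\exists$-elimination) --- are precisely the ones you identify and resolve, though for full rigour the induction hypothesis in your plan should be stated parametrically in the free individual variables, as your own discussion already presupposes.
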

In the rest of the paper, we use this property implicitly to derive
inhabitation results from simple intuitionist reasoning.
For instance to prove $\vDash F \Rightarrow G$ we suppose $F$ and then
prove $G$: We admit that the reader can transform a intuitionistic
first-order proof to an integer representing a program (via Curry-Howard).
We can do this because there is no ambiguity between elements of $Prop$ and
real mathematical formulae.
The only theorems where we need to explicitly manipulate the integers as proofs
are Lemma \ref{lem:NProperties} and Theorem \ref{th:ProgramExtraction}.
\remark{Had we chosen $\l$-terms to represent proofs, every result of the form
$\vDash F$ that we prove in this paper would be such that $F$ is inhabited by a
$\lambda$-term typable in an extension of $F_\omega$.
\footnote{We would need product
sorts, and some notions of arithmetic -see the required notions in Section 3.}
}

\subsection{Effective sets}

\label{sec:Eff}

The definition of effective sets is the same as the usual
one~\cite{HylandJ:efft}.
It can be seen as a set with a partial equivalence relation
(in the internal logic of the realisability).

\begin{definition}[Effective sets]

$X = (|X|, |.=_X.|)$ is an effective set if and only if :
\begin{itemize}

\item $|X|$ is a set.

\item $|.=_X.| \in |X| \times |X| \rightarrow Prop$, and we write
$\eqeff X x y$ for $|.=_X.|(x, y)$.

\item $\vDash \forall x, y \in |X|, \eqeff X x y \Rightarrow \eqeff X y x$
(Symmetry)

\item $\vDash \forall x, y, z \in |X$, $\eqeff X x y \Rightarrow \eqeff X y z
\Rightarrow \eqeff X x z$ (Transitivity)

\end{itemize}
\end{definition}
The main reason we do no require reflexivity in the definition is that a proof
of $\eqeff X x x$ may contain information.
See Section \ref{sec:Nat} for example.
\begin{remark}
Assume $X$ is an effective set.
Then by symetry and transitivity we have\\
$\vDash \forall x, y \in |X|, \eqeff X x y \Rightarrow
(\eqeff X x x \wedge \eqeff X y y)$
\end{remark}

\begin{notation}[Quantification over an effective set]\strut

Assume $X$ is an effective set and $F \in |X| \rightarrow Prop$.
We write:
\begin{itemize}

\item $\forall x \in X, F(x)$ for $\forall x \in |X|,
\eqeff X x x \Rightarrow F(x)$

\item $\exists x \in X, F(x)$ for $\exists x \in |X|,
\eqeff X x x \wedge F(x)$.

\end{itemize}
\end{notation}
This notation will shorten many properties and will make them more readable.

\subsection{Elements}
\label{sec:Elements}
Contrary to the usual presentation of the effective topos, we use the
notion of $El(X)$ as a core feature of the framework.
\begin{definition}[Elements of an effective set]\strut

If $X$ is an effective set, we write $El(X)$ the effective set defined by:
\[\begin{array}{lcll}
|El(X)| &\eqdef& |X| \rightarrow Prop & \\
\eqeffEl X u v &\eqdef&
(\forall x, x' \in |X|, u(x) \Rightarrow \eqeff X x {x'} \Rightarrow u(x'))
& \text{(Stability)}\\
&\wedge &(\forall x, x' \in |X|, u(x) \Rightarrow u(x') \Rightarrow
\eqeff X x {x'}) & \text{(Unicity)} \\
&\wedge &(\exists x \in |X|, u(x)) & \text{(Existence)} \\
&\wedge &(\forall x \in |X|, u(x) \Leftrightarrow v(x)) & \text{(Equivalence)}
\end{array}\]
It is straightforward to show that $El(X)$ is indeed an effective set.

$El(X)$ can be seen as the type of singletons included in $X$.

\end{definition}
\begin{remark}
Assume $X$ is an effective set.
Then by uncity of $u$ we have: \\
$\vDash \forall u, v \in |El(X)|, x \in |X|,
\eqeffEl X u v \Rightarrow u(x) \Rightarrow \eqeff X x x$
\end{remark}


Manipulating $x \in |X|$ is considered low-level
and manipulting $u \in |El(X)|$ without knowing that
$|El(X)| = |X| \rightarrow Prop$ is considered high-level.
With enough high-level theorems, it is possible to prove that the category
we build is a topos by adapting the proof that the category of sets is a topos.

To achieve this we need a systematic way to lift structures and properties from
the low level to the high level.
\begin{definition}[Injection from the low level to the high level]
If $X$ is an effective set and $x \in |X|$, we define $el_X(x)$ as follows:
For all, $y \in |X|$, $el_X(x)(y) \eqdef \eqeff X x y$.
\end{definition}

\begin{toappendix}

\appendixbeyond 0

\begin{lemma}[Basic relation between the low level and the high level]\strut

\label{lem:LowLevelHighLevel}

If $X$ is an effective set then:
\begin{enumerate}

\item $\vDash \forall u, v \in |El(X)|,
\eqeffEl X u u \Rightarrow (\forall x \in |X|, u(x) \Leftrightarrow v(x))
\Rightarrow \eqeffEl X u v$

\item $\vDash \forall u, v \in El(X),
(\forall x \in |X|, u(x) \Rightarrow v(x)) \Rightarrow \eqeffEl X u v$

\item $\vDash \forall u \in |El(X)|,
\eqeffEl X u {el_X(x)} \Leftrightarrow (\eqeffEl X u u \wedge u(x))$

\item $\vDash \forall x, y \in |X|, \eqeff X x y \Leftrightarrow
\eqeffEl X {el_X(x)} {el_X(y)}$

\item $\vDash \forall u \in |El(X)|,
\eqeffEl X u u \Leftrightarrow \exists x \in |X|, \eqeff X u {el_X(x)}$

\end{enumerate}

\end{lemma}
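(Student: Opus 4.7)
The plan is to unfold the definition of $\eqeffEl X u v$ and exploit its asymmetry in $u$ and $v$: stability, unicity, and existence all refer only to $u$, while only the last clause (equivalence) genuinely relates $u$ and $v$. This asymmetry makes (1) essentially immediate, and it guides the proofs of the other four parts. All the remaining work reduces to routine manipulations of symmetry and transitivity of $\eqeff X \cdot \cdot$ together with the remark, already stated above, that $\eqeff X x y$ entails $\eqeff X x x$ and $\eqeff X y y$.

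For (1), from $\eqeffEl X u u$ I already have stability, unicity and existence for $u$; combining these three with the hypothesised pointwise equivalence between $u$ and $v$ is literally the definition of $\eqeffEl X u v$. For (2), by (1) it suffices to prove $\forall x \in |X|, u(x) \Leftrightarrow v(x)$; the forward direction is the hypothesis, so I focus on $v(x) \Rightarrow u(x)$. Using existence of $u$ (available since $u \in El(X)$) I pick some $y$ with $u(y)$; the hypothesis gives $v(y)$; unicity of $v$ then gives $\eqeff X y x$ from $v(y)$ and $v(x)$; and stability of $u$ finally propagates $u(y)$ to $u(x)$ along $\eqeff X y x$.

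For (3), I unfold $\eqeffEl X u {el_X(x)}$: the first three conjuncts are exactly $\eqeffEl X u u$, and the fourth is $\forall y \in |X|, u(y) \Leftrightarrow \eqeff X x y$. The forward direction then just reads off $\eqeffEl X u u$ and instantiates the equivalence at $y=x$, using $\eqeff X x x$ (available since $x$ ranges over $X$). The backward direction uses (1): given $\eqeffEl X u u$ and $u(x)$, I verify the pointwise equivalence by unicity of $u$ (for $u(y) \Rightarrow \eqeff X x y$) and stability of $u$ (for $\eqeff X x y \Rightarrow u(y)$). For (4), the forward direction reduces to symmetry and transitivity of $\eqeff X \cdot \cdot$ after extracting $\eqeff X x x$ and $\eqeff X y y$ from $\eqeff X x y$; the backward direction extracts a witness $z$ with $\eqeff X x z$ from the existence clause of $\eqeffEl X {el_X(x)} {el_X(y)}$, applies the equivalence clause to obtain $\eqeff X y z$, and concludes $\eqeff X x y$ by symmetry and transitivity. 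Finally (5) follows from (3): for the forward direction, existence and unicity of $u$ yield some $x$ with both $u(x)$ and $\eqeff X x x$, which (3) then repackages as $\eqeffEl X u {el_X(x)}$; for the backward direction, (3) immediately gives $\eqeffEl X u u$ as the first conjunct.

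The only potential pitfall is the asymmetry in the definition of $\eqeffEl X u v$, which forces careful tracking of which of stability, unicity and existence one has access to from a given equality assumption; in particular in (2) and (4) the passage between "properties of $u$" and "properties of $v$" must always be routed through existence of the correct side. Once that bookkeeping is made explicit, the whole lemma is a transparent internal translation of the set-theoretic identification of a point $x$ with its associated singleton $\{y \mid y = x\}$.
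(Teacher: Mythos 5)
Your proposal is correct and follows essentially the same route as the paper: part~1 exploits the asymmetry of the definition (stability, unicity, existence concern only $u$), part~2 goes through the existence witness for $u$, unicity of $v$, and stability of $u$, and parts~3--5 are the same symmetry/transitivity bookkeeping, with 5 derived from 3. The one step you should tighten is the forward direction of~3: you invoke $\eqeff X x x$ as ``available since $x$ ranges over $X$'', but in the statement $x$ is not quantified with the reflexivity guard (it effectively ranges over $|X|$), so $\eqeff X x x$ is not a hypothesis; it must be extracted from $\eqeffEl X u {el_X(x)}$ itself --- take the existence witness $y$ with $u(y)$, get $\eqeff X x y$ from the equivalence clause, and either derive $\eqeff X x x$ by symmetry and transitivity, or (as the paper does) bypass it entirely by concluding $u(x)$ from $u(y)$, $\eqeff X y x$ and stability of $u$. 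With that one-line repair the argument is complete and matches the paper's.
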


\end{toappendix}

\begin{toappendix}[\begin{proof}

Straightforward. See Appendix \thisappendix.

\end{proof}]

\begin{proof}

\begin{enumerate}

\item $\eqeffEl X u u$.
Hence, all the properties of $\eqeffEl X u v$ that do not talk about $v$
are true (stability, unicity and existence).
Therefore, with the equivalence of $u$ and $v$ we have $\eqeffEl X u v$.

\item From the previous point, we only have to prove that if
$v(x)$ then $u(x)$ for all $x \in |X|$:
Assume we have $v(x)$.
From $\eqeffEl X u u$, there exists $y$ such that $u(y)$.
By hypothesis, we have $v(y)$.
From $\eqeffEl X v v$ we have $\eqeff X y x$.
Therefore, from $\eqeffEl X u u$, we have $u(x)$.

\item

\begin{itemize}

\item If $\eqeffEl X u {el_X(x)}$ then by symmetry and transitivity
we have $\eqeffEl X u u$.
Also, there exists $y \in |X|$ such that $u(y)$.
So we have $el_X(x)(y)$ which means $\eqeff X x y$.
Therefore, $\eqeff X y x$ and $u(x)$ (stability of $u$).

\item Assume we have $\eqeffEl X u u$ and $u(x)$.
Let $y$ such that $u(y)$, so we have $\eqeff X x y$ (by unicity).
Let $y$ such that $\eqeff X x y$, so we have $u(y)$ (stability).
Hence, for all $y \in |X|$, $u(y)$ if and only if $el_X(x)(y)$.
Therefore, from the first point we have $\eqeffEl X u {el_X(x)}$.

\end{itemize}

\item

\begin{itemize}

\item Assume $\eqeff X x y$.

\begin{itemize}

\item (Stability) If $\eqeff X x z$ and $\eqeff X z {z'}$ then
$\eqeff X x {z'}$.
\item (Unicity) If $\eqeff X x z$ and $\eqeff X x {z'}$ then $\eqeff X z {z'}$.
\item (Existence) $\eqeff X x y$ so there exists $z$ such that $\eqeff X x z$.
\item (Equivalence) $\eqeff X x z$ if and only if $\eqeff X y z$.

\end{itemize}

Therefore $\eqeffEl X {el_X(x)} {el_X(y)}$.

\item Assume $\eqeffEl X {el_X(x)} {el_X(y)}$.

So there exists $z$ such that $el_X(x)(z)$.
Hence $el_X(y)(z)$.
Therefore we have $\eqeff X x z$ and $\eqeff X y z$.
Hence $\eqeff X x y$.

\end{itemize}

\item

\begin{itemize}

\item Assume $\eqeffEl X u u$.
So there exists $x$ such that $u(x)$.
Therefore we have $\eqeffEl X u {el_X(x)}$ from the third point.

\item Assume there exists $x$ such that $\eqeffEl X u {el_X(x)}$.
From the third point we have $\eqeffEl X u u$.

\end{itemize}

\end{enumerate}

\end{proof}

\end{toappendix}

\subsection{Stable predicates}

\label{sec:SP}

In this section we prove that properties that are true at the low-level are
also true at the high-level.
However, this is true only if the property is stable by equality.
Hence we gave the following definition:
\begin{definition}[Stable predicates]

Assume $X_1$, \ldots, $X_n$ are effective sets.

A \emph{stable predicate} on $(X_1, \ldots, X_n)$ is a
$F \in (|X_1| \times \ldots |X_n|) \rightarrow Prop$ such that:

\[\begin{array}{l}
\vDash \forall x_1 \in |X_1|, \ldots, x_n \in |X_n|,
x_1' \in |X_1|, \ldots, x_n' \in |X_n|,\\
\qquad\qquad
\eqeff {X_1} {x_1} {x_1'} \Rightarrow \ldots \Rightarrow \eqeff {X_n} {x_n}
{x_n'} \Rightarrow F(x_1, \ldots, x_n) \Rightarrow F(x_1', \ldots, x_n')
\end{array}
\]
We write $SP(X_1, \ldots, X_n)$ the set of stable predicates on
$(X_1, \ldots, X_n)$.
\end{definition}
Generally, it is straightforward to prove that a property is stable because we
only manipulate stable predicates and stable functions
(see Section \ref{sec:Function}).

Then we can prove the main goal of this section.
Notice that the lifting does not have to be on all the arguments of the
predicate.
\begin{theorem}[Extension of truth]

\label{th:ExtensionTruth}

Assume $X_1$, \ldots, $X_n$ are effective sets, $k \leq n$, \\
$F \in SP(El(X_1), \ldots, El(X_k), X_{k + 1}, \ldots X_n)$
such that:

\[\vDash \forall x_1 \in X_1, \ldots , x_n \in X_n,
F(el_{X_1}(x_1), \ldots , el_{X_k}(x_k), x_{k + 1}, \ldots x_{n})\]
Then:
\[\vDash \forall u_1 \in El(X_1), \ldots , u_k \in El(X_k),
x_{k + 1} \in X_{k + 1}, \ldots x_n \in X_n,
F(u_1, \ldots , u_k, x_{k + 1}, \ldots x_n)\]
\end{theorem}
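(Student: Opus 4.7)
The plan is to reduce the high-level statement about $u_i \in El(X_i)$ to the low-level hypothesis about $el_{X_i}(x_i)$ by extracting, for each $u_i$, a low-level representative, and then transport the conclusion along the resulting high-level equality using stability of $F$. The key lemma is point~5 of Lemma~\ref{lem:LowLevelHighLevel}, which says that every inhabited element of $El(X)$ is equal to some $el_X(x)$.

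More precisely, I would unfold the quantifiers $\forall u_i \in El(X_i)$ and $\forall x_j \in X_j$: we are given $u_1,\ldots,u_k$ with $\eqeffEl{X_i}{u_i}{u_i}$ and $x_{k+1},\ldots,x_n$ with $\eqeff{X_j}{x_j}{x_j}$. By Lemma~\ref{lem:LowLevelHighLevel}(5), applied to each $u_i$, there exists $x_i\in|X_i|$ such that $\eqeffEl{X_i}{u_i}{el_{X_i}(x_i)}$. From this and symmetry/transitivity of equality in $El(X_i)$ we get $\eqeffEl{X_i}{el_{X_i}(x_i)}{el_{X_i}(x_i)}$, which via Lemma~\ref{lem:LowLevelHighLevel}(4) yields $\eqeff{X_i}{x_i}{x_i}$. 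Thus each $x_i$ meets the premise needed to instantiate the low-level hypothesis.

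Instantiating the hypothesis at $(x_1,\ldots,x_n)$ gives
\[\vDash F(el_{X_1}(x_1),\ldots,el_{X_k}(x_k),x_{k+1},\ldots,x_n).\]
Now I apply the stability of $F$ as a predicate on $(El(X_1),\ldots,El(X_k),X_{k+1},\ldots,X_n)$: for the first $k$ arguments I use the equalities $\eqeffEl{X_i}{el_{X_i}(x_i)}{u_i}$ (obtained by symmetry from step~1), and for the last $n-k$ arguments I just use reflexivity $\eqeff{X_j}{x_j}{x_j}$. This transports the predicate to $F(u_1,\ldots,u_k,x_{k+1},\ldots,x_n)$, which is the desired conclusion.

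The proof is essentially bookkeeping; the only subtle point is remembering that the shorthand $\forall u \in El(X),\ldots$ expands to $\forall u\in|El(X)|,\eqeffEl X u u\Rightarrow\ldots$, so that Lemma~\ref{lem:LowLevelHighLevel}(5) becomes applicable. The main obstacle, if any, is keeping the notation readable while juggling $n$ variables on each side, but no new realisability construction is required beyond what Lemma~\ref{lem:LowLevelHighLevel} already provides. Since intuitionistic reasoning inside $\vDash$ is admissible, the argument above can be written as a direct intuitionistic derivation.
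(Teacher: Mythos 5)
Your proposal is correct and follows essentially the same route as the paper's proof: extract low-level representatives via Lemma~\ref{lem:LowLevelHighLevel}.5, derive $\eqeff{X_i}{x_i}{x_i}$ through symmetry, transitivity and Lemma~\ref{lem:LowLevelHighLevel}.4, instantiate the low-level hypothesis, and transport the result by stability of $F$. No discrepancies to note.
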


\begin{proof}

If $\eqeffEl {X_1} {u_1} {u_1}$, \ldots $\eqeffEl {X_k} {u_k} {u_k}$,
$\eqeff {X_{k + 1}} {x_{k + 1}} {x_{k + 1}}$, \ldots
$\eqeff {X_n} {x_n} {x_n}$,
then by Lemma \ref{lem:LowLevelHighLevel}.5 there exist
$x_1 \in |X_1|$, \ldots , $x_k \in |X_k|$ such that
$\eqeffEl {X_1} {u_1} {el_{X_1}(x_1)}$, \ldots ,
$\eqeffEl {X_k} {u_k} {el_{X_k}(x_k)}$.
Then by symmetry we have $\eqeffEl {X_1} {el_{X_1}(x_1)} {u_1}$, \ldots ,
$\eqeffEl {X_k} {el_{X_k}(x_k)} {u_k}$.
And by transitivity we have
$\eqeffEl {X_1} {el_{X_1}(x_1)} {el_{X_1}(x_1)}$, \ldots ,
$\eqeffEl {X_k} {el_{X_k}(x_k)} {el_{X_k}(x_k)}$.
So by Lemma \ref{lem:LowLevelHighLevel}.4 we have $\eqeff {X_1} {x_1} {x_1}$,
\ldots , $\eqeff {X_k} {x_k} {x_k}$.
Hence by hypothesis we have
$F(el_{X_1}(x_1), \ldots , el_{X_k}(x_k), x_{k + 1}, \ldots x_n)$.
And then by stability we have
$F(u_1, \ldots , u_k, x_{k + 1}, \ldots x_n)$.

\end{proof}
With this theorem we can prove that if a proposition is true at the low level
then it is true at the high level. We just have to check the stability of the
proposition which is usually straightforward.

\subsection{Stable functions}

\label{sec:Function}

By manipulating effective sets, it is natural to be intersted in functions that
are stable with the equality.


\begin{definition}[Stable functions]

Assume $X_1$, \ldots, $X_n$ and $Y$ effective sets.

A \emph{stable function} from $X_1$, \ldots $X_n$ to $Y$, is a
$f \in (|X_1| \times \ldots |X_n|) \rightarrow |Y|$ such that:
\[\begin{array}{l}
\vDash \forall x_1 \in |X_1|, \ldots, x_n \in |X_n|,
x_1' \in |X_1|, \ldots, x_n' \in |X_n|,\\
\qquad\qquad
\eqeff {X_1} {x_1} {x_1'} \Rightarrow \ldots \Rightarrow
\eqeff {X_n} {x_n} {x_n'} \Rightarrow
\eqeff Y {f(x_1, \ldots, x_n)} {f(x_1', \ldots, x_n')}
\end{array}
\]
Such a stable function $f$ is denoted
$f : X_1 \rightarrow \ldots \rightarrow X_n \rightarrow Y$.
\end{definition}
In this paper $n$ will be equal to $1$ or $2$.

We will identify functions by extensionality.

\begin{definition}[Equivalence of functions]\strut

Assume $X_1$, ... $X_n$, and $Y$ are effective sets,
$f : X_1 \rightarrow \ldots X_n \rightarrow Y$ and
$g : X_1 \rightarrow \ldots X_n \rightarrow Y$.

Then we write $f \approx g$ if and only if
\[\vDash \forall x_1 \in X_1, \ldots x_n \in X_n,
\eqeff Y {f(x_1, \ldots, x_n)} {g(x_1, \ldots, x_n)}\]
It is straightforward that $\approx$ is an equivalence relation.
\end{definition}
With the following theorem we are able to write high level functions
by using low level ones.

\begin{definition}[Extension of function]\strut

Assume $X_1$, ... $X_n$, and $Y$ effective sets,
$f : X_1 \rightarrow \ldots X_n \rightarrow El(Y)$ and $k \leq n$.\\
An \emph{extension} of $f$ on the first $k$ arguments is a
$g : El(X_1) \rightarrow \ldots El(X_k) \rightarrow X_{k + 1}
\rightarrow \ldots X_n \rightarrow El(Y)$ such that:
\[\vDash \forall x_1 \in X_1, ... x_n \in X_n,
\eqeffEl Y {g(el_{X_1}(x_1), ..., el_{X_k}(x_k), x_{k + 1}, \ldots, x_n)} 
{f(x_1, ..., x_n)}
\]
\end{definition}

\begin{toappendix}

\appendixbeyond 0

\begin{theorem}[Existence and unicity of extensions of functions]\strut

\label{th:ExistenceUnicityExtensionFun}

Assume $X_1$, ... $X_n$, and $Y$ effective sets,
$f : X_1 \rightarrow \ldots X_n \rightarrow El(Y)$ and $k \leq n$.

We construct
$g \in (|El(X_1)| \times ... \times |El(X_k)| \times
|X_{k + 1}| \times ... \times |X_n|) \rightarrow
|El(Y)|$ defined by:

\[\begin{array}{lll}
g(u_1, \ldots, u_k, x_{k + 1} \ldots x_n)(y) \eqdef &
\exists x_1 \in |X_1|, ..., x_k \in |X_k|, \\
& u_1(x_1) \wedge ... \wedge u_k(x_k) \wedge f(x_1, ..., x_n)(y)
\end{array}
\]

Then:
\begin{itemize}

\item $g$ is an extension of $f$ on the first $k$ arguments.

\item For all $h$ extension of $f$ on the first $k$ arguments
we have $h \approx g$

\end{itemize}
\end{theorem}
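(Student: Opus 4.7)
My plan is to handle the two items in sequence: first verify that the formula for $g$ yields a stable function into $El(Y)$ satisfying the defining equation of an extension, and then deduce uniqueness up to $\approx$ from Theorem~\ref{th:ExtensionTruth} applied to a suitable stable predicate.

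The heart of the first item is the check that $g(u_1,\ldots,u_k,x_{k+1},\ldots,x_n)$ lies in $|El(Y)|$, \ie that it is self-equivalent under $\eqeffEl{Y}{\cdot}{\cdot}$. From $\eqeffEl{X_i}{u_i}{u_i}$ and any $u_i(x_i)$ the remark following the definition of $El(X)$ gives $\eqeff{X_i}{x_i}{x_i}$; combined with $\eqeff{X_j}{x_j}{x_j}$ and stability of $f$, this yields $\eqeffEl{Y}{f(x_1,\ldots,x_n)}{f(x_1,\ldots,x_n)}$. Each of the four clauses for $g(\ldots)$ then reduces to the corresponding clause for $f(x_1,\ldots,x_n)$: existence picks witnesses $x_i$ in each $u_i$ and a $y$ satisfying $f(x_1,\ldots,x_n)(y)$; stability in $y$ is inherited from $\eqeffEl{Y}{f(x_1,\ldots,x_n)}{\cdot}$; unicity compares two witness tuples $(x_i),(x_i')$ via the unicity clauses of the $u_i$, obtains $\eqeff{X_i}{x_i}{x_i'}$, and combines stability of $f$ with unicity of $f(x_1',\ldots,x_n)$; self-equivalence is trivial. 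Stability of $g$ under equivalent inputs then follows via Lemma~\ref{lem:LowLevelHighLevel}.1, which reduces the task to pointwise equivalence of predicates: the equivalence clauses of the $\eqeffEl{X_i}{u_i}{u_i'}$ allow rewriting the witness conditions, and stability of $f$ on its last $n-k$ arguments handles the transitions from $x_j$ to $x_j'$. For the defining equation, Lemma~\ref{lem:LowLevelHighLevel}.1 once more reduces to pointwise equivalence: $f(x_1,\ldots,x_n)(y)\Rightarrow g(el_{X_1}(x_1),\ldots,x_n)(y)$ by taking the $x_i$ themselves as witnesses (noting $el_{X_i}(x_i)(x_i)$ is just $\eqeff{X_i}{x_i}{x_i}$), while the converse extracts witnesses $x_i'$ with $\eqeff{X_i}{x_i}{x_i'}$ and concludes by stability of $f$ on its first $k$ arguments.

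For the second item, given another extension $h$, I consider the predicate $P(u_1,\ldots,u_k,x_{k+1},\ldots,x_n) := \eqeffEl{Y}{h(u_1,\ldots,x_n)}{g(u_1,\ldots,x_n)}$, whose stability follows directly from the stability of $h$, of $g$ and of $\eqeffEl{Y}{\cdot}{\cdot}$. Theorem~\ref{th:ExtensionTruth} then reduces $h\approx g$ to establishing $P$ at inputs of the form $(el_{X_1}(x_1),\ldots,el_{X_k}(x_k),x_{k+1},\ldots,x_n)$; there the defining equations of $h$ and of $g$ both make the two sides $\eqeffEl{Y}{\cdot}{\cdot}$-equal to $f(x_1,\ldots,x_n)$, so symmetry and transitivity conclude. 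The main obstacle is the unicity clause for $g(\ldots)\in|El(Y)|$: it requires simultaneously chaining the unicity of each $u_i$, stability of $f$, and the equivalence-plus-unicity clauses of the values $f(x_1',\ldots,x_n)$ in $El(Y)$. Everything else is largely mechanical manipulation of the four defining clauses of $\eqeffEl{Y}{\cdot}{\cdot}$ under the logical operators of the realisability.
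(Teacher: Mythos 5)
Your proposal is correct and follows essentially the same route as the paper: a direct verification that $g$ is stable and satisfies the extension equation (you merely repackage the clause-by-clause check through Lemma~\ref{lem:LowLevelHighLevel}.1, self-equivalence plus pointwise equivalence, which amounts to the same computations), and uniqueness via the stable predicate $\eqeffEl{Y}{h(\ldots)}{g(\ldots)}$ together with Theorem~\ref{th:ExtensionTruth}, exactly as in the paper's proof.
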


\end{toappendix}

\begin{toappendix}
[
\begin{proof}

For readability we are only going to prove the case where $n = k = 1$ and
$X = X_1$.
This proof can easily be adapted to the case with several arguments and where
the extension is not necessarily on all the arguments.
See proof in Appendix \thisappendix ~ for the general case which works for any
values of $k$ and $n$.
\begin{itemize}

\item First we prove that $g : El(X) \rightarrow El(Y)$ which means
proving stability of $g$.
If $\eqeffEl X u {u'}$ : We want to prove that
$\eqeffEl Y {g(u)} {g(u')}$.
\begin{itemize}

\item If $g(u)(y)$ and $|y =_Y y'|$ then by definition of $g$,
there exists $x \in |X|$, such that $u(x)$ and $f(x)(y)$.
Then because $\eqeffEl X u {u'}$ we have
$\eqeff X x x$ by using unicity of $u$.
Hence, $\eqeffEl Y {f(x)} {f(x)}$ by stability of $f$.
So, $f(x)(y')$ by stability of $f(x)$.
Hence, we have $g(u)(y')$.

\item If $g(u)(y)$ and $g(u)(y')$ then by definition of $g$,
there exists $x$ and $x'$ in $|X|$ such that $u(x)$, $u(x')$, $f(x)(y)$ and
$f(x')(y')$.
Because $\eqeffEl X u {u'}$ we have $\eqeff X x {x'}$ by unicity of $u$.
Hence $\eqeffEl Y {f(x)} {f(x')}$ by stability of $f$.
So we have $f(x)(y')$ by equivalence between $f(x)$ and $f(x')$.
And then $|y =_Y y'|$ by unicity of $f(x)$.

\item $\eqeffEl X u {u'}$, so there exists $x \in |X|$ such that $u(x)$.
Then we have $|x =_X x|$ by unicity of $u$.
Hence $\eqeffEl Y {f(x)} {f(x)}$ by stability of $f$.
So there exists $y \in |Y|$, such that $f(x)(y)$.
Then we have $g(u)(y)$ by definition of $g$.
Hence there exists $y \in |Y|$ such that $g(u)(y)$.

\item If $g(u)(y)$ then there exists $x \in |X|$, such that $u(x)$ and
$f(x)(y)$. So we have $u'(x)$. Hence we have $g(u')(y)$.

\item By a similar argument, if $g(u')(y)$ then $g(u)(y)$.

\end{itemize}
Hence $\eqeffEl Y {g(u)} {g(u')}$.
Therefore $g : El(X) \rightarrow El(Y)$.

\item If $|x =_X x|$:
\begin{itemize}

\item Then $\eqeffEl Y {f(x)} {f(x)}$, $\eqeffEl X {el_X(x)} {el_X(x)}$
(from Lemma~\ref{lem:LowLevelHighLevel}.4) and
$\eqeffEl Y {g(el_X(x))} {g(el_X(x))}$

\item If $g(el_X(x))(y)$ then there exists $x' \in |X|$ such that $el_X(x)(x')$
and $f(x')(y)$. Hence $\eqeff X x {x'}$.
So we have $\eqeffEl Y {f(x)} {f(x')}$.
Hence we have $f(x)(y)$.

\end{itemize}
Hence we have $\eqeffEl Y {g(el_X(x))} {f(x)}$ by Lemma
\ref{lem:LowLevelHighLevel}.2.
Therefore,  $g$ is an extension of $f$.

\item If $h$ is an extension of $f$.
We construct $F \in |El(X)| \rightarrow Prop$ defined by
$F(u) \eqdef \eqeffEl Y {g(u)} {h(u)}$.
By stability of $g$ and $h$ we have $F \in SP(X)$.

If $\eqeff X x x$, then, because $g$ and $h$ are extensions of $f$ we have
$\eqeffEl Y {g(el_X(x))} {f(x)}$ and $\eqeffEl Y {h(el_X(x))} {f(x)}$.
Hence we have
$\eqeffEl Y {g(el_X(x))} {h(el_X(x))}$ which is $F(el_X(x))$.

By Theorem \ref{th:ExtensionTruth}, we can prove that
$\vDash \forall u \in El(X), F(u)$.

Therefore, $g \approx h$ by definition of $F$ and $\approx$.

\end{itemize}
\end{proof}
]

\begin{proof}

\begin{itemize}

\item If $\eqeffEl {X_1} {u_1} {u_1'}$, \ldots $\eqeffEl {X_k} {u_k} {u_k'}$,
$\eqeff {X_{k + 1}} {x_{k + 1}} {x_{k + 1}'}$, \ldots
$\eqeff {X_n} {x_n} {x_n'}$:
we want to prove that
$\eqeffEl Y {g(u_1, \ldots u_k, x_{k + 1}, \ldots x_n)}
{g(u_1', \ldots u_k', x_{k + 1}', \ldots x_n')}$.

\begin{itemize}

\item If $g(u_1, \ldots u_k, x_{k + 1}, \ldots x_n)(y)$ and
$\eqeff Y y {y'}$ then there exist $x_1 \in |X_1|$, \ldots $x_k \in |X_k|$,
such that $u_1(x_1)$, \ldots $u_k(x_k)$ and $f(x_1, \ldots x_n)(y)$.
Then we have $\eqeff {X_1} {x_1} {x_1}$, \ldots $\eqeff {X_k} {x_k} {x_k}$.
Moreover we have $\eqeff {X_{k + 1}} {x_{k + 1}} {x_{k + 1}}$, \ldots
$\eqeff {X_n} {x_n} {x_n}$.
Hence $\eqeffEl Y {f(x_1, \ldots x_n)} {f(x_1, \ldots x_n)}$.
So $f(x_1, \ldots x_n)(y')$.
Hence we have \\ $g(u_1, \ldots u_k, x_{k + 1}, \ldots x_n)(y)$

\item If $g(u_1, \ldots u_k, x_{k + 1}, \ldots x_n)(y)$ and
$g(u_1, \ldots u_k, x_{k + 1}, \ldots x_n)(y')$ then there exist
$x_1, x_1' \in |X_1|$, \ldots $x_k, x_k' \in |X_k|$ such that
$u_1(x_1)$, \ldots $u_k(x_k)$, $u_1(x_1')$, \ldots $u_k(x_k')$,
$f(x_1, \ldots x_n)(y)$ and $f(x_1', \ldots x_k', x_{k + 1}, \ldots x_n)(y')$.
Therefore we have $\eqeff {X_1} {x_1} {x_1'}$, \ldots
$\eqeff {X_k} {x_k} {x_k'}$.
We also have  $\eqeff {X_{k + 1}} {x_{k + 1}} {x_{k + 1}}$, \ldots
$\eqeff {X_n} {x_n} {x_n}$.
Hence $\eqeffEl Y {f(x_1, \ldots x_n)}
{f(x_1', \ldots x_k', x_{k + 1}, \ldots x_n)}$.
So we have $f(x_1, \ldots x_n)(y')$.
And then $\eqeff Y y {y'}$.

\item $\eqeffEl {X_1} {u_1} {u_1}$, \ldots $\eqeffEl {X_k} {u_k} {u_k}$,
so there exist $x_1 \in |X_1|$, \ldots $x_k \in |X_k|$, such that
$u_1(x_1)$, \ldots $u_k(x_k)$.
Then we have $\eqeff {X_1} {x_1} {x_1}$, \ldots
$\eqeff {X_k} {x_k} {x_k}$.
Hence $\eqeffEl Y {f(x_1, \ldots x_n)}
{f(x_1, \ldots x_k, x_{k + 1}', \ldots x_n')}$.
So there exists $y \in |Y|$, such that $f(x_1, \ldots x_n)(y)$.
Then we have $g(u_1, \ldots, u_k, x_{k + 1}, \ldots x_n)(y)$.
Hence there exists $y \in |Y|$ such that \\
$g(u_1, \ldots u_n, x_{k + 1}, \ldots x_n)(y)$.

\item If $g(u_1, \ldots u_k, x_{k + 1}, \ldots x_n)(y)$ then there exist
$x_1 \in |X_1|$, \ldots $x_k \in |X_x|$ such that
$u_1(x_1)$, \ldots $u_k(x_k)$ and $f(x_1, \ldots x_n)(y)$.
So we have $u_1'(x_1)$, \ldots $u_k'(x_k)$.
Moreover, $\eqeff {X_1} {x_1} {x_1}$, \ldots $\eqeff {X_k} {x_k} {x_k}$.
Therefore $\eqeffEl Y {f(x_1, \ldots x_n)}
{f(x_1, \ldots x_k, x_{k + 1}', \ldots x_n')}$.
Hence, $f(x_1, \ldots x_k, x_{k + 1}', \ldots x_n')(y)$.
So we have $g(u_1', \ldots u_k', x_{k + 1}', \ldots x_n')(y)$.

\item With a similar argument,
if $g(u_1', \ldots u_k', x_{k + 1}', \ldots x_n')(y)$ then
$g(u_1, \ldots u_k, x_{k + 1}, \ldots x_n)(y)$.

\end{itemize}

Hence $\eqeffEl Y {g(u_1, \ldots u_k, x_{k + 1} \ldots x_n)}
{g(u_1', \ldots u_k', x_{k + 1}', \ldots x_n')}$.
Therefore $g : El(X_1) \rightarrow \ldots El(X_k) \rightarrow \ldots
X_{k + 1} \rightarrow \ldots X_n \rightarrow El(Y)$.

\item If $\eqeff {X_1} {x_1} {x_1}$, \ldots $\eqeff {X_n} {x_n} {x_n}$:

\begin{itemize}

\item Then $\eqeffEl Y {f(x_1, \ldots x_n)} {f(x_1, \ldots x_n)}$,
$\eqeffEl {X_1} {el_{X_1}(x_1)} {el_{X_1}(x_1)}$, \ldots
$\eqeffEl {X_k} {el_{X_k}(x_k)} {el_{X_k}(x_k)}$,
and
\[\eqeffEl Y {g(el_{X_1}(x_1), \ldots el_{X_k}(x_k), x_{k + 1}, \ldots x_n)} 
{g(el_{X_1}(x_1), \ldots el_{X_k}(x_k), x_{k + 1}, \ldots x_n)}\].

\item If  $g(el_{X_1}(x_1), \ldots el_{X_k}(x_k), x_{k + 1}, \ldots x_n)(y)$
then there exist $x_1' \in |X_1|$, \ldots $x_k' \in |X_k|$ such that
$el_{X_1}(x_1)(x_1')$, \ldots $el_{X_k}(x_k)(x_k')$ and
$f(x_1', \ldots x_k', x_{k + 1}, \ldots x_n)(y)$.
Hence $\eqeff {X_1} {x_1} {x_1'}$, \ldots $\eqeff {X_k} {x_k} {x_k'}$.
So we have $\eqeffEl Y {f(x_1, \ldots x_n)}
{f(x_1', \ldots x_k', x_{k + 1}, \ldots x_n)}$.
Hence we have $f(x_1, \ldots x_n)(y)$.

\end{itemize}

Hence we have $\eqeffEl Y
{g(el_{X_1}(x_1), \ldots el_{X_k}(x_k), x_{k + 1}, \ldots x_n)}
{f(x_1, \ldots x_n)}$.
Therefore, $g$ is an extension of $f$ on the first $k$ arguments.

\item If $h$ is an extension of $f$ on the first $k$ arguments.
We construct $F \in (|El(X_1)| \times \ldots |El(X_k)| \times
|X_{k + 1}| \times \ldots |X_n|) \rightarrow Prop$ defined by:
\[F(u_1, \ldots u_k, x_{k + 1}, \ldots x_n) \eqdef
\eqeff Y {g(u_1, \ldots u_k, x_{k + 1}, \ldots x_n)}
{h(u_1, \ldots u_k, x_{k + 1}, \ldots x_n)}\].
By stability of $g$ and $h$ we have
$F \in SP(El(X_1), \ldots El(X_k), X_{k + 1}, \ldots X_n)$.

If $\eqeff {X_1} {x_1} {x_1}$, \ldots $\eqeff {X_n} {x_n} {x_n}$ then
we have $\eqeffEl Y
{g(el_{X_1}(x_1), \ldots el_{X_k}(x_k), x_{k + 1}, \ldots x_n)}
{f(x_1, \ldots x_n)}$ and
$\eqeffEl Y
{h(el_{X_1}(x_1), \ldots el_{X_k}(x_k), x_{k + 1}, \ldots x_n)}
{f(x_1, \ldots x_n)}$.
Hence $F(el_{X_1}(x_1), \ldots el_{X_k}(x_k), x_{k + 1}, \ldots x_n)$.

By Theorem \ref{th:ExtensionTruth}, we can prove that:
\[\vDash \forall u_1 \in El(X_1), \ldots u_k \in El(X_k),
x_{k + 1} \in X_{k + 1}, \ldots x_n \in X_n,
F(u_1, \ldots u_n, x_{k + 1}, \ldots x_n)\]

Therefore $g \approx h$.

\end{itemize}

\end{proof}

\end{toappendix}


\section{The Effective Topos}

In this section we construct a category that we call \emph{The effective Topos}.
To prove that this category is a topos with an object of
natural integers, we prove results in the high level part of our
framework.
Except for the axiom of unique choice, these results are extensions from the low level to the high level.
We also prove that we can do program extraction in the framework.
The fact that our definition of the Effective Topos is equivalent Hyland's is
given in Appendix \ref{sec:Usual}.
We say that we \emph{extend} the framework if we define new operators and prove
new properties in this framework.
In section \ref{sec:Cat} we construct the Effective Topos.
In section \ref{sec:CCC} we extend the framework to prove that this category
is Cartesian closed.
In section \ref{sec:SubObjectClassifier} we extend the framework to prove that
this category has a sub-object classifier (therefore it is a topos).
In section \ref{sec:Nat} we extend the framework to prove that this category
has an object of natural integers and we also prove that we can do program
extraction with our framework.

\subsection{Definition of the category}

\label{sec:Cat}

To build the category we must first check that the composition is well-defined
for functions of this framework.

\begin{toappendix}

  \appendixbeyond 0

  \begin{lemma}[Correctness of composition]

    \label{lem:SoundComp}

Assume $X$, $Y$ and $Z$ are effective sets:
    \begin{itemize}

    \item If $f : X \rightarrow Y$ and $g : Y \rightarrow Z$, then
      $g \circ f : X \rightarrow Z$.

    \item If $f : X \rightarrow Y$, $f' : X \rightarrow Y$, $g : Y \rightarrow Z$,
      $g' : Y \rightarrow Z$, $f \approx f'$ and $g \approx g'$, then
      $g \circ f \approx g' \circ f'$.

    \end{itemize}
  \end{lemma}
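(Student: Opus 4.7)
My plan is to unfold the definitions of stable function and of $\approx$, and then chain stability and extensionality steps; both parts reduce to a short diagram chase in the internal logic, so the argument is routine once we track which reflexivity hypotheses $|x =_X x|$ we have available.

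For the first bullet, assume $f : X \to Y$ and $g : Y \to Z$. I would set $h \eqdef g \circ f$ viewed as an element of $|X| \to |Z|$ and check stability directly. Given $x, x' \in |X|$ with $\eqeff X x {x'}$, the stability of $f$ yields $\eqeff Y {f(x)} {f(x')}$, and then the stability of $g$ yields $\eqeff Z {g(f(x))} {g(f(x'))}$, which is exactly $\eqeff Z {h(x)} {h(x')}$. Packaging this as a proof in the internal intuitionistic logic gives $\vDash \forall x, x' \in |X|,\ \eqeff X x {x'} \Rightarrow \eqeff Z {h(x)} {h(x')}$, i.e.\ $g \circ f : X \to Z$.

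For the second bullet, assume also $f' : X \to Y$, $g' : Y \to Z$, $f \approx f'$ and $g \approx g'$. I would fix $x \in |X|$ with $\eqeff X x x$ and establish $\eqeff Z {g(f(x))} {g'(f'(x))}$ by a two-step transitivity argument. First, from $f \approx f'$ applied at $x$ (using $\eqeff X x x$) I get $\eqeff Y {f(x)} {f'(x)}$; then by symmetry/transitivity in $Y$ this entails in particular $\eqeff Y {f(x)} {f(x)}$, so $g \approx g'$ applied at $f(x)$ gives $\eqeff Z {g(f(x))} {g'(f(x))}$. Second, stability of $g'$ applied to $\eqeff Y {f(x)} {f'(x)}$ gives $\eqeff Z {g'(f(x))} {g'(f'(x))}$. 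Transitivity in $Z$ then produces $\eqeff Z {g(f(x))} {g'(f'(x))}$, which is the required $g \circ f \approx g' \circ f'$.

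Neither part has a real obstacle: both are essentially bookkeeping for the reflexivity hypotheses introduced by the quantification convention $\forall x \in X$. The only mildly delicate point is remembering that, to invoke $g \approx g'$ at the argument $f(x)$, one needs $\eqeff Y {f(x)} {f(x)}$, which one must extract from $f$'s stability applied to $\eqeff X x x$ (or equivalently from the $\eqeff Y {f(x)} {f'(x)}$ we already have together with symmetry and transitivity of $=_Y$). Once this is observed, the proof fits cleanly in a few lines of internal-logic reasoning.
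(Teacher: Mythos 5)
Your proof is correct and follows essentially the same route as the paper's: the first bullet is the same two-step stability chase, and the second bullet uses the identical chain (extract $\eqeff Y {f(x)} {f(x)}$ from $\eqeff Y {f(x)} {f'(x)}$ by symmetry and transitivity, apply $g \approx g'$ at $f(x)$, apply stability of $g'$ to $\eqeff Y {f(x)} {f'(x)}$, and conclude by transitivity in $Z$). Nothing is missing.
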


\end{toappendix}

\begin{toappendix}[
  \begin{proof}
    Straightforward. See Appendix \thisappendix.
  \end{proof}]

  \begin{proof}


    \begin{itemize}

    \item If $\eqeffEl X u {u'}$ then $\eqeffEl Y {f(u)} {f(u')}$ because
      $f : X \rightarrow Y$.
      Hence $\eqeffEl Z {g(f(u))} {g(f(u'))}$ because $g : Y \rightarrow Z$.
      Therefore $g \circ f : X \rightarrow Z$.

    \item If $\eqeffEl X u {u}$ then $\eqeffEl Y {f(u)} {f'(u)}$ because
      $f \approx f'$.
      So $\eqeffEl Y {f(u)} {f(u)}$ by symmetry and transitivity.
      Therefore $\eqeffEl Z {g(f(u))} {g'(f(u))}$ because $g \approx g'$.
      We also have
      $\eqeffEl Z {g'(f(u))} {g'(f'(u))}$ because $g' : Y \rightarrow Z$.
      Hence, by transitivity, we have $\eqeffEl Z {g(f(u))} {g'(f'(u))}$.
      Therefore $g \circ f \approx g' \circ f'$.

    \end{itemize}

  \end{proof}

\end{toappendix}
\begin{theorem}[Definition of the category \textbf{C}]

  We build \textbf{C} the category whose objects are the effective sets and the
  morphisms between two objects $X$ and $Y$ are the functions \\
  $f : El(X) \rightarrow El(Y)$
  modulo the $\approx$ relation. And the composition is the usual composition.

  It is straightforward that \textbf{C} is a category
  (the identity is the usual identity).

\end{theorem}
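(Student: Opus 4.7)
The plan is to verify, in order, the three pieces of data a category requires: that morphism composition is well-defined, that there is an identity morphism on every object, and that the associativity and unit laws hold modulo $\approx$. The first piece is already done by Lemma~\ref{lem:SoundComp} (Correctness of composition): the two bullets of that lemma say respectively that the composite of two stable functions of the appropriate shape is again a stable function of the appropriate shape, and that $\circ$ respects $\approx$ in both arguments, so composition descends to equivalence classes and produces a morphism in $\mathbf{C}$.

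For identity, I would take, for each effective set $X$, the function $\mathrm{id}_{El(X)} : |El(X)| \to |El(X)|$ defined by $\mathrm{id}_{El(X)}(u) \eqdef u$. Stability is immediate: from $\eqeffEl X u v$ one directly gets $\eqeffEl X {\mathrm{id}(u)} {\mathrm{id}(v)}$, so $\mathrm{id}_{El(X)} : El(X) \to El(X)$ is indeed a stable function and therefore yields a morphism $X \to X$ in $\mathbf{C}$.

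It remains to check the equational axioms of a category up to $\approx$. For associativity, given $f:X\to Y$, $g:Y\to Z$, $h:Z\to W$ in $\mathbf{C}$, one simply observes that for every $u \in |El(X)|$, both $((h\circ g)\circ f)(u)$ and $(h\circ (g\circ f))(u)$ compute to $h(g(f(u)))$; in particular $\eqeffEl W {((h\circ g)\circ f)(u)} {(h\circ (g\circ f))(u)}$ holds whenever $\eqeffEl X u u$ (use the already-proved fact that $h\circ g\circ f$ is a stable function, so $\eqeffEl X u u$ implies $\eqeffEl W {h(g(f(u)))} {h(g(f(u)))}$), hence the two morphisms are $\approx$-equal. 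The unit laws $\mathrm{id}_Y \circ f \approx f$ and $f \circ \mathrm{id}_X \approx f$ are handled in the same style, since both sides are literally the same function of $u$.

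There is no real obstacle: the non-trivial content (that composition of stable functions is a stable function, and that $\approx$ is a congruence for $\circ$) has been isolated into Lemma~\ref{lem:SoundComp}, and once that is in hand, the associativity and unit laws reduce to pointwise equalities of functions, which give $\approx$-equality via the reflexivity hypotheses built into the quantification $\forall u \in El(X)$ in the definition of $\approx$.
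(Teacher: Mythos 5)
Your proposal is correct and matches the paper's route: the paper isolates the only non-trivial content in Lemma~\ref{lem:SoundComp} (composition of stable functions is stable and $\circ$ respects $\approx$) and then declares the identity, unit, and associativity checks straightforward, which is exactly what you verify. Your pointwise arguments for the identity and the equational laws are the intended ``straightforward'' details, so nothing is missing.
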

We call this category the effective topos.
We will prove (in Appendix \ref{sec:Usual}) that this definition is equivalent
to the usual definition by Hyland.

\subsection{This category is Cartesian closed}

A Cartesian closed category is a category that has a final object, products,
and power objects (closure over the product).

\label{sec:CCC}

\subsubsection{Final object}

\label{sec:Final}

Let $S$ be a set and $()$ an element of $S$.
\begin{definition}[Definition of $1$]

We build $1$ the effective set defined by:
\begin{itemize}

\item $|1| \eqdef S$

\item $|x =_1 y| \eqdef \top$

\end{itemize}
It is straightforward that $1$ is an effective set.

\end{definition}

Here is the high level constant (defined with the low level one):

\begin{definition}[Definition of $<>$]

We define $<> \in |El(1)|$ by $<> \eqdef el_1(())$.

\end{definition}

Here are the high level properties we need to prove that \textbf{C} has a
final object.

\begin{toappendix}

\appendixbeyond 0

\begin{theorem}[Properties of 1]\strut
\label{th:FinalProperties}
\begin{itemize}

\item $\vDash | <> =_1 <> |$

\item $\vDash \forall u, v \in El(1), |u =_1 v|$

\end{itemize}
\end{theorem}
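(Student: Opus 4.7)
The plan is to handle the two claims in turn, relying on the lifting machinery of Section 2. I would treat the first as a direct computation and the second as an application of Theorem \ref{th:ExtensionTruth}.

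For the first statement $\vDash |<> =_1 <>|$, unfold $<> \eqdef el_1(())$. By Lemma \ref{lem:LowLevelHighLevel}.4 applied with $x = y = ()$, the claim $\vDash \eqeffEl 1 {el_1(())} {el_1(())}$ is equivalent to $\vDash \eqeff 1 {()} {()}$, and the latter is immediate because $\eqeff 1 x y$ is defined to be $\top = \N$.

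For the second statement $\vDash \forall u, v \in El(1), |u =_1 v|$, I would define the predicate $F(u, v) \eqdef \eqeffEl 1 u v$ on $|El(1)| \times |El(1)|$ and invoke Theorem \ref{th:ExtensionTruth} with $n = k = 2$ and $X_1 = X_2 = 1$. The condition $F \in SP(El(1), El(1))$ is immediate from symmetry and transitivity of the partial equivalence relation $\eqeffEl 1 {\cdot} {\cdot}$ on $|El(1)|$. The premise of Theorem \ref{th:ExtensionTruth} then reads $\vDash \forall x, y \in 1, \eqeffEl 1 {el_1(x)} {el_1(y)}$, which by Lemma \ref{lem:LowLevelHighLevel}.4 is equivalent to $\vDash \forall x, y \in 1, \eqeff 1 x y$, and this is again trivial because $\eqeff 1 x y$ is $\top$. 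Theorem \ref{th:ExtensionTruth} therefore delivers the conclusion.

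I do not anticipate any genuine obstacle: both statements collapse onto the triviality that $\eqeff 1 {\cdot} {\cdot} = \top$ via the low-to-high transfer, once we observe the (routine) PER-stability of $F$. The only design choice is whether to apply Theorem \ref{th:ExtensionTruth} or to verify the four clauses of $\eqeffEl 1 u v$ directly from $\eqeffEl 1 u u$ and $\eqeffEl 1 v v$; I prefer the former because it illustrates the lifting methodology of Section 2.
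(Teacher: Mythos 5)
Your proposal is correct and matches the paper's own proof essentially verbatim: the first item is obtained from $\eqeff 1 {()} {()}$ via Lemma~\ref{lem:LowLevelHighLevel}.4, and the second by applying Theorem~\ref{th:ExtensionTruth} to $F(u,v) \eqdef \eqeffEl 1 u v$, whose stability and low-level premise are checked exactly as you describe. No gaps.
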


\end{toappendix}

\begin{toappendix}
[
\begin{proof}

Straightforward.
To prove the second item, we use Theorem \ref{th:ExtensionTruth}.
See Appendix \thisappendix.

\end{proof}
]

\begin{proof}

\begin{itemize}

\item We have $\eqeff 1 {()} {()}$ by definition of $1$.
By Lemma $\ref{lem:LowLevelHighLevel}$.4 we have
$\eqeffEl 1 {el_1(())} {el_1(())}$.
Therefore $\eqeffEl 1 {<>} {<>}$.

\item We construct $F \in (|El(1)| \times |El(1)|) \rightarrow Prop$
defined by $F(u, v) \eqdef \eqeffEl 1 u v$.
It is trivial that $F \in SP(El(1), El(1))$.

If $\eqeff 1 x x$ and $\eqeff 1 y y$:
By definition of $1$, we have $\eqeff 1 x y$.
Then  by Lemma \ref{lem:LowLevelHighLevel}.4 we have
$\eqeffEl 1 {el_1(x)} {el_1(y)}$.
Therefore $F(el_1(x), el_1(y))$.

By Theorem \ref{th:ExtensionTruth} we have
$\vDash \forall u, v \in El(1), F(u, v)$.
Then we can conclude.

\end{itemize}

\end{proof}

\end{toappendix}

Now with the previous theorem we can easily prove the categorical result:

\begin{toappendix}

\appendixbeyond 0

\begin{theorem}[\textbf{C} has a final object]

\label{th:Final}

$1$ is the final object of \textbf{C}.

\end{theorem}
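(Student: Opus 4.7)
The plan is to unwind the categorical definition of a final object. I need to show that for every effective set $X$, there is a morphism from $X$ to $1$ in $\textbf{C}$, and that any two such morphisms are $\approx$-equivalent. Both parts should reduce immediately to Theorem \ref{th:FinalProperties}.

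For existence, I would define the candidate morphism $!_X : El(X) \rightarrow El(1)$ as the constant function $!_X(u) \eqdef\, <>$. To check that this is a stable function, I take $u, u' \in |El(X)|$ with $\eqeffEl X u {u'}$ and I must verify $\eqeffEl 1 {<>} {<>}$, which is exactly Theorem \ref{th:FinalProperties}.1. So $!_X$ is a legitimate morphism of \textbf{C} from $X$ to $1$.

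For uniqueness, suppose $g : El(X) \rightarrow El(1)$ is any morphism of \textbf{C} from $X$ to $1$. I need to prove $g \approx\, !_X$, i.e.\ that
\[\vDash \forall u \in El(X),\ \eqeffEl 1 {g(u)} {!_X(u)}.\]
Unfolding $!_X(u) = <>$, this follows from Theorem \ref{th:FinalProperties}.2 applied to $g(u)$ and $<>$, both of which live in $El(1)$ (the first because $g$ is a stable function from $El(X)$ to $El(1)$ and $u \in El(X)$, the second by Theorem \ref{th:FinalProperties}.1).

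There is no real obstacle here: once we have Theorem \ref{th:FinalProperties}, the categorical content of being a final object is exactly the two bullet points proved there, so the argument is purely bookkeeping on the definitions of morphism and $\approx$ in \textbf{C}. The only subtlety worth noting is to be careful that $!_X$ really is a stable function (rather than a mere set-theoretic function), but this is exactly what the first item of Theorem \ref{th:FinalProperties} provides.
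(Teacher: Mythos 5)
Your proof is correct and follows essentially the same route as the paper: define the constant morphism $u \mapsto\ <>$, use the first item of Theorem~\ref{th:FinalProperties} for stability, and the second item (together with reflexivity of $g(u)$ and $<>$) for uniqueness up to $\approx$. The paper's own argument is the same bookkeeping, merely phrased as $\eqeffEl 1 {f(u)} {g(u)}$ rather than $\eqeffEl 1 {g(u)} {<>}$.
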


\end{toappendix}

\begin{toappendix}
[
\begin{proof}
Corollary of Theorem \ref{th:FinalProperties}.
See Appendix \thisappendix.
\end{proof}
]

\begin{proof}

Assume $X$ is an effective set.

We construct $f \in |El(X)| \rightarrow |El(1)|$ defined by:
$f(u) \eqdef <>$.

\begin{itemize}

\item Stability of $f$: If $\eqeffEl X u {u'}$ then by
Theorem~\ref{th:FinalProperties} we have
$\eqeffEl 1 {<>} {<>}$.
Hence $\eqeffEl 1 {f(u)} {f(u')}$.
Therefore $f : El(X) \rightarrow El(1)$.

\item Unicity of $f$: Assume $g : El(X) \rightarrow El(1)$.
If $\eqeffEl X u u$, then, because $f, g : El(X) \rightarrow El(1)$,
we have $\eqeffEl 1 {f(u)} {f(u)}$ and $\eqeffEl 1 {g(u)} {g(u)}$.
By Theorem \ref{th:FinalProperties} we have $\eqeffEl 1 {f(u)} {g(u)}$.
Therefore $f \approx g$.

\end{itemize}

Hence, $1$ is a final object of \textbf{C}.

\end{proof}

\end{toappendix}

\subsubsection{Product}

To prove that \textbf{C} has products we use the same method as in
Section~\ref{sec:Final}.

This is the minimum requirement to prove that \textbf{C} has products:

If $X$ and $Y$ sets, then $X \times Y$ is a set. If $z \in X \times Y$, then
$\pi_1(z) \in X$ and $\pi_2(z) \in Y$. And if $x \in X$ and $y \in Y$, then
$(x, y) \in X \times Y$, $\pi_1(x, y) = x$ and $\pi_2(x, y) = y$.

The definition of the product is the same as the one in the usual presentation
of the effective topos.

\begin{definition}[Definition of $A \times B$]

If $A$ and $B$ effective sets, then we define $A \times B$ by:
\begin{itemize}

\item $|A \times B| \eqdef |A| \times |B|$

\item $\eqeff {A \times B} c {c'} \eqdef \eqeff A {\pi_1(c)} {\pi_1(c')} \wedge
\eqeff B {\pi_2(c)} {\pi_2(c')}$

\end{itemize}
It is trivial to show that $A \times B$ is an effective set.

\end{definition}

For the high level operators it will be less trivial than in the case of the
final object: we define the high level constructions by using Theorem
\ref{th:ExistenceUnicityExtensionFun}.

\begin{definition}[Definition of $p_1'$, $p_2'$ and $cons'$]

We construct $p_1' \in |A \times B| \rightarrow |El(A)|$,\linebreak
$p_2' \in |A \times B| \rightarrow |El(B)|$ and
$cons' \in (|A| \times |B|) \rightarrow |El(A \times B)|$ as follows:

$p_1'(c) \eqdef el_A(\pi_1(c))$,
$p_2'(c) \eqdef el_B(\pi_2(c))$,
$cons'(a, b) \eqdef el_{A \times B}((a, b))$

\end{definition}

\begin{toappendix}

\appendixbeyond 0

\begin{lemma}[Stability of $p_1'$, $p_2'$ and $cons'$]\strut

\label{lem:ProductPreStable}

$p_1' : (A \times B) \rightarrow El(A)$,
$p_2' : (A \times B) \rightarrow El(B)$ and
$cons' : A \rightarrow B \rightarrow El(A \times B)$.

\end{lemma}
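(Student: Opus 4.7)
The plan is to unfold each function and reduce stability to the corresponding property on $A$ and $B$ via the definition of equality on $A \times B$ and Lemma \ref{lem:LowLevelHighLevel}.4, which links $\eqeff X \cdot \cdot$ with $\eqeffEl X {el_X(\cdot)}{el_X(\cdot)}$.

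For $p_1'$, I would take $c, c' \in |A \times B|$ with $\eqeff{A \times B}{c}{c'}$. By definition of the product, this unfolds to $\eqeff A {\pi_1(c)}{\pi_1(c')} \wedge \eqeff B {\pi_2(c)}{\pi_2(c')}$, from which I keep the first conjunct. Applying Lemma \ref{lem:LowLevelHighLevel}.4 to $\eqeff A {\pi_1(c)}{\pi_1(c')}$ yields $\eqeffEl A {el_A(\pi_1(c))}{el_A(\pi_1(c'))}$, which is exactly $\eqeffEl A {p_1'(c)}{p_1'(c')}$ by definition of $p_1'$. This shows $p_1' : (A \times B) \to El(A)$. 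The argument for $p_2'$ is symmetric, using the second conjunct.

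For $cons'$, I would take $a, a' \in |A|$ and $b, b' \in |B|$ with $\eqeff A a {a'}$ and $\eqeff B b {b'}$. By the definition of equality on $A \times B$ and the fact that $\pi_1(a,b) = a$ and $\pi_2(a,b) = b$, we obtain $\eqeff{A \times B}{(a,b)}{(a',b')}$. Applying Lemma \ref{lem:LowLevelHighLevel}.4 once more gives $\eqeffEl {A \times B}{el_{A \times B}((a,b))}{el_{A \times B}((a',b'))}$, that is, $\eqeffEl {A \times B}{cons'(a,b)}{cons'(a',b')}$, establishing that $cons' : A \to B \to El(A \times B)$.

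There is no real obstacle here: everything is a direct unfolding of the definitions of $A \times B$, $p_1'$, $p_2'$, $cons'$, combined with the already-established equivalence between low-level equality and the high-level equality of its $el_X$ images. The only minor care needed is to remember that $\eqeffEl{A \times B}\cdot\cdot$ takes its two arguments in $|El(A \times B)|$, but since we are only handling values of the form $el_{A \times B}(z)$, Lemma \ref{lem:LowLevelHighLevel}.4 applies verbatim.
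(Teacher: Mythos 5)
Your proof is correct and follows essentially the same route as the paper: unfold the definition of $\eqeff{A \times B}{\cdot}{\cdot}$, and transfer the resulting low-level equalities to the high level via Lemma \ref{lem:LowLevelHighLevel}.4, which the paper uses implicitly in the step from $\eqeff A {\pi_1(c)}{\pi_1(c')}$ to $\eqeffEl A {el_A(\pi_1(c))}{el_A(\pi_1(c'))}$. No gap to report.
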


\end{toappendix}

\begin{toappendix}
[
\begin{proof}
Straightforward. See Appendix A.
\end{proof}
]

\begin{proof}

\begin{itemize}

\item If $\eqeff {A \times B} c {c'}$ then by definition of $A \times B$ we have
$\eqeff A {\pi_1(c)} {\pi_1(c')}$.
So $\eqeffEl A {el_A(\pi_1(c))} {el_A(\pi_1(c'))}$.
Hence $\eqeffEl A {p_1'(c)} {p_1'(c')}$.
Therefore $p_1' : (A \times B) \rightarrow El(A)$.

\item With the same kind of proof, we have
$p_2' : (A \times B) \rightarrow El(B)$.

\item If $\eqeff A a {a'}$ and $\eqeff B b {b'}$ then
$\eqeff A {\pi_1(a, b)} {\pi_1(a', b')}$ and
$\eqeff B {\pi_2(a, b)} {\pi_2(a', b')}$.
By definition of $A \times B$ we have $\eqeff {A \times B} {(a, b)} {(a', b')}$.
So,
$\eqeffEl {A \times B} {el_{A \times B}(a, b)} {el_{A \times B}(a', b')}$.
Hence $\eqeffEl {A \times B} {cons'(a, b)} {cons'(a', b')}$.
Therefore $cons' : A \rightarrow B \rightarrow El(A \times B)$.

\end{itemize}

\end{proof}

\end{toappendix}

\begin{definition}[Definition of $p_1$, $p_2$ and $<u, v>$]

Assume $A$ and $B$ are effective sets.
\begin{itemize}

\item We write $p_1 : El(A \times B) \rightarrow El(A)$ the extension of $p_1'$.

\item We write $p_2 : El(A \times B) \rightarrow El(B)$ the extension of $p_2'$.

\item We write $cons : El(A) \rightarrow El(B) \rightarrow El(A \times B)$
the extension of $cons'$.\\
We write $<u, v>$ for $cons(u, v)$.
\end{itemize}
\end{definition}

The high level constructions satisfy stability, $\beta$-equivalence,
and extensionality:

\begin{toappendix}

\appendixbeyond 0

\begin{theorem}[Properties of $A \times B$]\strut

\label{th:ProductProperties}

Assume $A$ and $B$ are effective sets.
\begin{itemize}

\item $\vDash \forall w, w' \in |El(A \times B)|,
\eqeffEl {A \times B} w {w'} \Rightarrow \eqeffEl A {p_1(w)} {p_1(w')}$

\item $\vDash \forall w, w' \in |El(A \times B)|,
\eqeffEl {A \times B} w {w'} \Rightarrow \eqeffEl B {p_2(w)} {p_2(w')}$

\item $\vDash \forall u, u' \in |El(A)|, v, v' \in |El(B)|,\\
\strut\qquad\qquad
\eqeffEl A u {u'} \Rightarrow \eqeffEl B v {v'} \Rightarrow 
\eqeffEl {A \times B} {<u, v>} {<u', v'>}$

\item $\vDash \forall u \in El(A), v \in El(B), \eqeffEl A {p_1(<u, v>)} u$

\item $\vDash \forall u \in El(A), v \in El(B), \eqeffEl B {p_2(<u, v>)} v$

\item $\vDash \forall w, w' \in El(A \times B),
\eqeffEl A {p_1(w)} {p_1(w')} \Rightarrow \eqeffEl B {p_2(w)} {p_2(w')}
\Rightarrow \eqeffEl {A \times B} w {w'}$

\end{itemize}
\end{theorem}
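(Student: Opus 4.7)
The first three items are restatements of the stability of $p_1$, $p_2$, and $cons$ as morphisms in \textbf{C}: since by Theorem~\ref{th:ExistenceUnicityExtensionFun} these maps are defined as extensions of the stable low-level functions $p_1'$, $p_2'$, $cons'$ (whose stability was verified in Lemma~\ref{lem:ProductPreStable}), they are themselves stable in the sense of Section~\ref{sec:Function}, and unfolding the definition of ``stable function'' delivers each of the three statements directly.

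For the two $\beta$-equivalences (items four and five), the plan is to reduce to the low level via Theorem~\ref{th:ExtensionTruth}. Define $F \in SP(El(A), El(B))$ by $F(u, v) \eqdef \eqeffEl A {p_1(<u, v>)} u$; stability of $F$ follows from stability of $p_1$ and $cons$ together with the fact that $\eqeffEl A {} {}$ is itself symmetric and transitive. On the low-level witnesses $el_A(a), el_B(b)$ with $\eqeff A a a$, $\eqeff B b b$, the extension property of $cons$ yields $\eqeffEl{A\times B}{<el_A(a), el_B(b)>}{cons'(a,b)}$, and $cons'(a,b) = el_{A\times B}((a,b))$ by definition; then stability of $p_1$ transports this to $\eqeffEl A {p_1(<el_A(a),el_B(b)>)}{p_1(el_{A\times B}((a,b)))}$, and the extension property of $p_1$ gives $\eqeffEl A {p_1(el_{A\times B}((a,b)))}{p_1'((a,b))} = el_A(a)$. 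Chaining these by transitivity establishes $F(el_A(a), el_B(b))$, and Theorem~\ref{th:ExtensionTruth} lifts this to $\forall u \in El(A), v \in El(B)$. The case of $p_2$ is symmetric.

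For extensionality (item six), define $F \in SP(El(A\times B), El(A\times B))$ by
\[
F(w, w') \eqdef \eqeffEl A {p_1(w)} {p_1(w')} \Rightarrow \eqeffEl B {p_2(w)} {p_2(w')} \Rightarrow \eqeffEl {A\times B} w {w'},
\]
which is stable because implications between stable predicates are stable (using symmetry of equality on each argument, together with the stability of $p_1$ and $p_2$). At the low level, take $c, c' \in |A\times B|$ with $\eqeff {A\times B} c c$ and $\eqeff {A\times B} {c'} {c'}$, and assume both premises for $w = el_{A\times B}(c)$, $w' = el_{A\times B}(c')$. The extension property of $p_1$ rewrites the first premise as $\eqeffEl A {el_A(\pi_1(c))}{el_A(\pi_1(c'))}$, which by Lemma~\ref{lem:LowLevelHighLevel}.4 gives $\eqeff A {\pi_1(c)}{\pi_1(c')}$; similarly $\eqeff B {\pi_2(c)}{\pi_2(c')}$. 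By definition of $A\times B$ we obtain $\eqeff {A\times B} c {c'}$, and Lemma~\ref{lem:LowLevelHighLevel}.4 once more yields $\eqeffEl{A\times B}{el_{A\times B}(c)}{el_{A\times B}(c')}$. Theorem~\ref{th:ExtensionTruth} then delivers the universal statement.

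The main obstacle is administrative rather than conceptual: each step must carefully thread the extension identities (which hold only modulo $\approx$) through stability in order to swap between a high-level compound such as $p_1(<u,v>)$ and its low-level representative $el_A(\pi_1(c))$. No genuinely new tool is required beyond the two lifting principles of Section~\ref{sec:SP} and Section~\ref{sec:Function}; the only subtlety is to phrase each auxiliary predicate so that it visibly lies in $SP$ before invoking Theorem~\ref{th:ExtensionTruth}.
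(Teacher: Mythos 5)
Your proposal is correct and follows essentially the same route as the paper: the first three items are read off from the stability of $p_1$, $p_2$, $cons$ guaranteed by Theorem~\ref{th:ExistenceUnicityExtensionFun}, and the $\beta$-equivalences and extensionality are obtained by forming exactly the same stable predicates and lifting the low-level verification (via the extension identities, Lemma~\ref{lem:LowLevelHighLevel}.4, and the definition of $\eqeff{A\times B}{}{}$) with Theorem~\ref{th:ExtensionTruth}. The only difference is bookkeeping detail, e.g.\ noting explicitly that $\eqeff{A\times B}{(a,b)}{(a,b)}$ before invoking the extension property of $p_1$, which your final paragraph already acknowledges.
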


\end{toappendix}

\begin{toappendix}
[
\begin{proof}
Straightforward with the use Theorem \ref{th:ExtensionTruth} and the definitions
of $p_1$, $p_2$ and $cons$.
See Appendix \thisappendix.
\end{proof}
]

\begin{proof}
\begin{itemize}

\item The first three properties are just expressing the stability of
$p_1$, $p_2$ and $cons$ which theu are by construction.

\item We use Theorem~\ref{th:ExtensionTruth}:
We construct $F \in (|El(A)| \times |El(B)|) \rightarrow Prop$ defined by:
$F(u, v) \eqdef \eqeffEl A {p_1(cons(u, v))} u$.
By stability of $p_1$ and $cons$ we have $F \in SP(El(A), El(B))$.

If $\eqeff A a a$ and $\eqeff B b b$ then by definition of $cons$ we have
$\eqeffEl {A \times B} {cons(el_A(a), el_B(b))} {el_{A \times B}((a, b))}$.
Hence, $\eqeffEl A {p_1(cons(el_A(a), el_B(b)))}
{p_1(el_{A \times B}((a, b)))}$.
Moreover, we have $\eqeff {A \times B} {(a, b)} {(a, b)}$
(because $\eqeff A {\pi_1(a, b)} {\pi_1(a, b)}$ and
$\eqeff B {\pi_2(a, b)} {\pi_2(a, b)}$).
By definition of $p_1$ we have
$\eqeffEl {A \times B} {p_1(el_{A \times B}(a, b))} {el_A(\pi_1(a, b))}$ with
$\pi_1(a, b) = a$.
Hence, by transitivity, we have
$\eqeffEl A {p_1(cons(el_A(a), el_B(b)))} {el_A(a)}$.
Therefore, $F(el_A(a), el_B(b))$.
By Theorem \ref{th:ExtensionTruth} we have
$\vDash \forall u \in El(A), v \in El(B), F(u, v)$.
Then we can conclude.

\item With the same kind of proof as above we can prove the other
$\beta$-reduction (with $p_2$).

\item We use Theorem~\ref{th:ExtensionTruth}: We construct
$F \in (|El(A \times B)| \times |El(A \times B)|) \rightarrow
Prop$ defined by:
\[F(w, w') \eqdef \eqeffEl A {p_1(w)} {p_1(w')} \Rightarrow
\eqeffEl B {p_2(w)} {p_2(w')} \Rightarrow \eqeffEl {A \times B} w {w'}\]
By stability of $p_1$ and $p_2$, we have
$F \in SP(El(A \times B), El(A \times B))$.

If $\eqeff {A \times B} c c$, $\eqeff {A \times B} {c'} {c'}$,
$\eqeffEl A {p_1(el_{A \times B}(c))} {p_1(el_{A \times B}(c'))}$ and
$\eqeffEl B {p_2(el_{A \times B}(c))} {p_2(el_{A \times B}(c'))}$:
Then by definition of $p_1$,
$\eqeffEl A {p_1(el_{A \times B}(c))} {el_A(\pi_1(c))}$ and \\
$\eqeffEl A {p_1(el_{A \times B}(c'))} {el_A(\pi_1(c'))}$.
By symmetry and transitivity, $\eqeffEl A {el_A(\pi_1(c))} {el_A(\pi_1(c'))}$.
Hence $\eqeff A {\pi_1(c)} {\pi_1(c')}$.
We can also prove that $\eqeff B {\pi_2(c)} {\pi_2(c')}$.
By definition of $A \times B$, $\eqeff {A \times B} c {c'}$.
And then $\eqeffEl {A \times B} {el_{A \times B}(c)} {el_{A \times B}(c')}$.

By Theorem \ref{th:ExtensionTruth} we have
$\vDash \forall w, w' \in El(A \times B), F(w, w')$.
Then we can conclude.

\end{itemize}

\end{proof}

\end{toappendix}

With this theorem we can conclude the categorical result:

\begin{toappendix}

\appendixbeyond 0

\begin{theorem}[\textbf{C} has products]\strut

\label{th:Product}

If $A$ and $B$ effective sets, then $A \times B$ is the product of $A$ and
$B$ in \textbf{C}, $p_1$ and $p_2$, its projections.

Hence \textbf{C} is cartesian.

\end{theorem}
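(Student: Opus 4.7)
The plan is to verify that $A \times B$, equipped with the projections $p_1$ and $p_2$, satisfies the universal property of a categorical product in $\mathbf{C}$; combined with Theorem \ref{th:Final}, this will give that $\mathbf{C}$ is cartesian. Given any effective set $X$ together with morphisms $f : El(X) \to El(A)$ and $g : El(X) \to El(B)$, I would define the mediating function $h \in |El(X)| \rightarrow |El(A \times B)|$ by $h(u) \eqdef <f(u), g(u)>$ and show it is the unique (up to $\approx$) morphism satisfying $p_1 \circ h \approx f$ and $p_2 \circ h \approx g$.

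First I would check that $h$ is a stable function, i.e., $h : El(X) \to El(A \times B)$. If $\eqeffEl X u {u'}$, then stability of $f$ and $g$ gives $\eqeffEl A {f(u)} {f(u')}$ and $\eqeffEl B {g(u)} {g(u')}$, and the third item of Theorem \ref{th:ProductProperties} (stability of $cons$) yields $\eqeffEl {A \times B}{<f(u), g(u)>}{<f(u'), g(u')>}$, as required.

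Next I would verify the commutation equations. For $p_1 \circ h \approx f$, pick any $u$ with $\eqeffEl X u u$; stability of $f$ and $g$ then gives $\eqeffEl A {f(u)} {f(u)}$ and $\eqeffEl B {g(u)} {g(u)}$, so the fourth item of Theorem \ref{th:ProductProperties} (the $\beta$-rule for $p_1$) yields $\eqeffEl A {p_1(<f(u), g(u)>)} {f(u)}$, which is exactly $\eqeffEl A {p_1(h(u))} {f(u)}$. The case $p_2 \circ h \approx g$ is symmetric via the fifth item.

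For uniqueness, I would suppose $h' : El(X) \to El(A \times B)$ also satisfies $p_1 \circ h' \approx f$ and $p_2 \circ h' \approx g$. For any $u$ with $\eqeffEl X u u$, combining these with the equations already established for $h$ and using symmetry and transitivity gives $\eqeffEl A {p_1(h(u))} {p_1(h'(u))}$ and $\eqeffEl B {p_2(h(u))} {p_2(h'(u))}$; the sixth item of Theorem \ref{th:ProductProperties} (extensionality for $A \times B$) then gives $\eqeffEl {A \times B}{h(u)}{h'(u)}$, whence $h \approx h'$. No step is particularly delicate: this is essentially the standard \textsf{Set}-theoretic argument transplanted to the high level, with every low-level calculation already absorbed by Theorem \ref{th:ProductProperties}. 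The main bookkeeping obstacle is simply making sure the definedness hypotheses $\eqeffEl X u u$ are propagated through stability of $f$, $g$, $cons$, $p_1$ and $p_2$, which is precisely the purpose of the $\forall u \in El(X)$ convention.
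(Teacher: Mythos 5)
Your proposal is correct and follows essentially the same route as the paper's own proof: the same mediating morphism $u \mapsto <f(u), g(u)>$, with stability, the two $\beta$-equations, and uniqueness all discharged by the corresponding items of Theorem~\ref{th:ProductProperties} exactly as the paper does. The only minor point to keep explicit is that before invoking the extensionality item in the uniqueness step you need $\eqeffEl{A \times B}{h(u)}{h(u)}$ and $\eqeffEl{A \times B}{h'(u)}{h'(u)}$, which follow from stability of $h$ and $h'$, as you indicate.
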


\end{toappendix}

\begin{toappendix}
[
\begin{proof}
Corollary of Theorem \ref{th:ProductProperties}.
See Appendix \thisappendix.
\end{proof}
]

\begin{proof}

$A \times B$ is an effective set,
$p_1 : El(A \times B) \rightarrow El(A)$ and
$p_2 : El(A \times B) \rightarrow El(B)$.

Assume $X$ is an effective set, $f : El(X) \rightarrow El(A)$ and
$g : El(X) \rightarrow El(B)$.

We construct $\varphi \in |El(X)| \rightarrow |El(A \times B)|$
defined by: $\varphi(u) \eqdef <f(u), g(u)>$.

\begin{itemize}

\item Stability of $\varphi$: If $\eqeffEl X u {u'}$ then
$\eqeff A {f(u)} {f(u')}$ by stability of $f$.
By stability of $g$ we also have $\eqeff B {g(u)} {g(u')}$.
So, by stability of $cons$, $\eqeff {A \times B} {<f(u), g(u)>}
{<f(u'), g(u')>}$.
Hence, $\eqeff {A \times B} {\varphi(u)} {\varphi(u')}$.
Therefore $\varphi : El(X) \rightarrow El(A \times B)$.

\item If $\eqeffEl X u u$ then $\eqeffEl A {f(u)} {f(u)}$ and
$\eqeffEl B {g(u)} {g(u)}$.
By Theorem \ref{th:ProductProperties} we have
$\eqeffEl A {p_1(<f(u), g(u)>)} {f(u)}$.
Hence $\eqeffEl A {(p_1 \circ \varphi)(u)} {f(u)}$.
Therefore, $p_1 \circ \varphi \approx f$.

\item By a similar argument we can also prove that
$p_2 \circ \varphi \approx g$.

\item Assume $\psi : El(X) \rightarrow El(A \times B)$ such that
$p_1 \circ \psi \approx f$ and $p_2 \circ \psi \approx g$.

If $\eqeffEl X u u$ then $\eqeffEl {A \times B} {\varphi(u)} {\varphi(u)}$,
$\eqeffEl {A \times B} {\psi(u)} {\psi(u)}$,
$\eqeffEl A {p_1(\varphi(u))} {f(u)}$ and $\eqeffEl A {p_1(\psi(u))}
{f(u)}$.
So, $\eqeffEl A {p_1(\varphi(u))} {p_1(\psi(u))}$.
We can also prove that $\eqeffEl B {p_2(\varphi(u))} {p_2(\psi(u))}$.
Hence, by Theorem \ref{th:ProductProperties},
$\eqeffEl {A \times B} {\varphi(u)} {\psi(u)}$.
Therefore $\varphi \approx \psi$.

\end{itemize}

\end{proof}

\end{toappendix}

\subsubsection{Closure}

Compared to Heyting's, because we do not have the same notion of morphisms,
the definition of the power object is not the same: it is much simpler.

\begin{definition}[Definition of $A \Rrightarrow B$]\strut

Assume $A$ and $B$ are effective sets.
Then we define the effective set $A \Rrightarrow B$ as follows:
\begin{itemize}

\item $|A \Rrightarrow B| \eqdef |El(A)| \rightarrow |El(B)|$

\item $\eqeff {A \Rrightarrow B} f g \eqdef
(\forall u, u' \in |El(A)|, \eqeffEl A u {u'} \Rightarrow
\eqeffEl B {f(u)} {g(u')})$

\end{itemize}
It is straightforward to prove that $A \Rrightarrow B$ is an effective set.

$\eqeff {A \Rrightarrow B} f g$ is equivalent to expressing internaly that:
$f$ is stable, $g$ is stable and $f$ and $g$ are equivalent.
\end{definition}

As in the product case, we define high level constructors and destructors.

\begin{definition}[Definition of $app'$]\strut

We construct $app' \in (|A \Rrightarrow B| \times |El(A)|) \rightarrow |El(B)|$
defined by: $app'(f, u) \eqdef f(u)$

\end{definition}

\begin{toappendix}

\appendixbeyond 0

\begin{lemma}[Stability of $app'$]

\label{lem:PreClosure}

$app' : (A \Rrightarrow B) \rightarrow El(A) \rightarrow El(B)$

\end{lemma}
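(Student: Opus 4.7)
The plan is to observe that the stability of $app'$ follows almost immediately from the way equality is defined on $A \Rrightarrow B$. Indeed, unfolding the definition of a stable function from $(A \Rrightarrow B) \times El(A)$ to $El(B)$, what has to be shown is
\[\vDash \forall f, g \in |A \Rrightarrow B|, u, v \in |El(A)|,
\eqeff{A \Rrightarrow B}{f}{g} \Rightarrow \eqeffEl{A}{u}{v} \Rightarrow
\eqeffEl{B}{app'(f, u)}{app'(g, v)}.\]
Since $app'(f, u) \eqdef f(u)$ and $app'(g, v) \eqdef g(v)$ by definition, the conclusion simplifies to $\eqeffEl{B}{f(u)}{g(v)}$.

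The key observation is that the definition of $\eqeff{A \Rrightarrow B}{f}{g}$ is precisely the statement
\[\forall u, v \in |El(A)|, \eqeffEl{A}{u}{v} \Rightarrow \eqeffEl{B}{f(u)}{g(v)},\]
so specialising this universal statement to the specific $u$ and $v$ at hand, together with the hypothesis $\eqeffEl{A}{u}{v}$, yields exactly $\eqeffEl{B}{f(u)}{g(v)}$.

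I would carry this out in a single short argument: assume $f, g, u, v$ as above with $\eqeff{A \Rrightarrow B}{f}{g}$ and $\eqeffEl{A}{u}{v}$, invoke the definition of $\eqeff{A \Rrightarrow B}{f}{g}$ applied to $(u, v)$, and conclude. There is no real obstacle here; the only subtlety is that the statement lives under $\vDash$, so formally one needs a realiser, but by the capturing-of-intuitionistic-provability theorem recalled in Section \ref{sec:Realisability}, the above direct intuitionistic reasoning automatically lifts to the realisability level. Hence $app' : (A \Rrightarrow B) \rightarrow El(A) \rightarrow El(B)$ as claimed.
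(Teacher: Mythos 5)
Your proof is correct and follows essentially the same route as the paper: unfold the stability requirement, note that $\eqeff{A \Rrightarrow B}{f}{g}$ is by definition exactly the statement needed, and instantiate it at the given arguments. The remark that the reasoning lifts to a realiser via the capturing theorem matches the paper's implicit convention.
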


\end{toappendix}

\begin{toappendix}
[
\begin{proof}
Straightforward. See Appendix \thisappendix.
\end{proof}
]

\begin{proof}

If $\eqeff {A \Rrightarrow B} f {f'}$ and $\eqeffEl A u {u'}$ then by definition
of $A \Rrightarrow B$ we have $\eqeffEl B {f(u)} {f'(u)}$.
Hence $\eqeffEl B {app'(f, u)} {app'(f', u')}$.
Therefore $app' : (A \Rrightarrow B) \rightarrow El(A) \rightarrow El(B)$.

\end{proof}

\end{toappendix}

\begin{definition}[Definition of $app$ and $\Lambda u : A . f(u)$]

Assume $A$ and $B$ are effective sets.
\begin{itemize}

\item We write $app : El(A \Rrightarrow B) \rightarrow El(A) \rightarrow El(B)$ 
the extension of $app'$ on the first argument.

\item If $f \in |El(A)| \rightarrow |El(B)|$ we write
$\Lambda u : A . f(u)$ for $el_{A \Rrightarrow B}(u \in |El(A)| \mapsto f(u))$.

\end{itemize}
\end{definition}

And as in the product case, we can prove stability, $\beta$-equivalence and
extensionality on these high level operators.

\begin{toappendix}

\appendixbeyond 0

\begin{theorem}[Properties of $A \Rrightarrow B$]

\label{th:ClosureProperties}

Assume $A$ and $B$ are effective sets.
\begin{itemize}

\item $\vDash \forall w, w' \in |El(A \Rrightarrow B)|, u, u' \in |El(A)|,\\
\strut\qquad\qquad\eqeffEl A w {w'} \Rightarrow \eqeffEl A u {u'} \Rightarrow
\eqeffEl B {app(w, u)} {app(w', u')}$


\item $\vDash \forall f, f' \in |El(A)| \rightarrow |El(B)|,
(\forall u, u' \in |El(A)|,\\
\strut\qquad\qquad \eqeffEl A u {u'} \Rightarrow
\eqeffEl B {f(u)} {f'(u')}) \Rightarrow
\eqeffEl {A \Rrightarrow B} {\Lambda u : A . f(u)}
{\Lambda u : A . f'(u)}$

\item $\vDash \forall f \in |El(A)| \rightarrow |El(B)|, u \in El(A), \\
\strut\qquad\qquad
(\forall v, v' \in |El(A)|, \eqeffEl A v {v'} \Rightarrow
\eqeffEl B {f(v)} {f(v')})\\
\strut\qquad\qquad \Rightarrow
\eqeffEl B {app((\Lambda v : A . f(v)), u)} {f(u)}$

\item $\vDash \forall w, w' \in El(A \Rrightarrow B),
(\forall u \in El(A), \eqeffEl B {app(w, u)} {app(w', u)}) \Rightarrow
\eqeffEl {A \Rrightarrow B} w {w'}$

\end{itemize}
\end{theorem}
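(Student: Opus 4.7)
The plan is to follow the pattern of Theorem \ref{th:ProductProperties}: each of the four items will be reduced to a low-level statement via Theorem \ref{th:ExtensionTruth} and then discharged by unfolding the definitions of $app'$, of $\Lambda u : A . f(u) \eqdef el_{A \Rrightarrow B}(u \in |El(A)| \mapsto f(u))$, and of $A \Rrightarrow B$.

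Items 1 and 2 are essentially restatements of stability at the two relevant types. Item 1 is immediate from the fact that, by Lemma \ref{lem:PreClosure} and Theorem \ref{th:ExistenceUnicityExtensionFun}, $app$ is a function $El(A \Rrightarrow B) \rightarrow El(A) \rightarrow El(B)$, whose defining stability is precisely the statement. For item 2, I would unfold $\Lambda u : A . f(u)$ to $el_{A \Rrightarrow B}(f)$, apply Lemma \ref{lem:LowLevelHighLevel}.4 to reduce $\eqeffEl {A \Rrightarrow B} {el_{A \Rrightarrow B}(f)} {el_{A \Rrightarrow B}(f')}$ to $\eqeff {A \Rrightarrow B} f {f'}$, and observe that the latter is exactly the assumption of item 2, since it coincides verbatim with the definition of the equality on $A \Rrightarrow B$.

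Item 3 ($\beta$-equivalence) requires no appeal to Theorem \ref{th:ExtensionTruth}. Because $app$ is the extension of $app'$ on its first argument, its defining property yields $\eqeffEl B {app(el_{A \Rrightarrow B}(f), u)} {app'(f, u)}$ as soon as $\eqeff {A \Rrightarrow B} f f$ and $\eqeffEl A u u$ hold; and $app'(f, u)$ is by definition $f(u)$. The stability hypothesis on $f$ that appears in item 3 is exactly what it takes for $\eqeff {A \Rrightarrow B} f f$ to hold, and combined with $\Lambda v : A . f(v) = el_{A \Rrightarrow B}(f)$ this gives the statement directly.

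Item 4 (extensionality) is the least mechanical case. I would define the predicate $F(w, w') \eqdef (\forall u \in El(A), \eqeffEl B {app(w, u)} {app(w', u)}) \Rightarrow \eqeffEl {A \Rrightarrow B} w {w'}$ on $|El(A \Rrightarrow B)| \times |El(A \Rrightarrow B)|$; its stability follows from the stability of $app$ already proved in item 1. By Theorem \ref{th:ExtensionTruth}, it then suffices to verify $F(el_{A \Rrightarrow B}(f), el_{A \Rrightarrow B}(f'))$ for arbitrary $f, f' \in |A \Rrightarrow B|$ with $\eqeff {A \Rrightarrow B} f f$ and $\eqeff {A \Rrightarrow B} {f'} {f'}$. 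There, using item 3 to rewrite $app(el_{A \Rrightarrow B}(f), u) \approx f(u)$ and similarly for $f'$, the hypothesis reduces to $\eqeffEl B {f(u)} {f'(u)}$ for every $u \in El(A)$, which is exactly $\eqeff {A \Rrightarrow B} f {f'}$; Lemma \ref{lem:LowLevelHighLevel}.4 then produces the required $\eqeffEl {A \Rrightarrow B} {el_{A \Rrightarrow B}(f)} {el_{A \Rrightarrow B}(f')}$. The main point of attention throughout is the bookkeeping of stability side-conditions: at each reduction from $El(-)$ to the low level one must check that the low-level witnesses $f, f'$ are stable so that the defining equations of $app$ and $\Lambda$ apply, but once this is tracked no new idea beyond Theorem \ref{th:ProductProperties} is needed.
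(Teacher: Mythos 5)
Your proposal follows the paper's proof essentially point for point: the first three items are read off as the stability of $app$, the stability of $el_{A \Rrightarrow B}$ (via Lemma \ref{lem:LowLevelHighLevel}.4 together with the fact that the hypothesis of item 2 is verbatim $\eqeff {A \Rrightarrow B} f {f'}$), and the defining property of $app$ as the extension of $app'$ on its first argument; item 4 is handled with the same predicate $F$ and Theorem \ref{th:ExtensionTruth}. The one imprecision is in your last step of item 4: after rewriting with item 3, the hypothesis yields $\eqeffEl B {f(u)} {f'(u)}$ for every $u$ with $\eqeffEl A u u$, and this diagonal statement is \emph{not} literally the definition of $\eqeff {A \Rrightarrow B} f {f'}$, which asks for $\eqeffEl B {f(u)} {f'(u')}$ for all $u, u'$ with $\eqeffEl A u {u'}$. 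The slip is harmless because you already track $\eqeff {A \Rrightarrow B} {f'} {f'}$ among the low-level side-conditions: given $\eqeffEl A u {u'}$, deduce $\eqeffEl A u u$, hence $\eqeffEl B {f(u)} {f'(u)}$, and conclude by transitivity with $\eqeffEl B {f'(u)} {f'(u')}$ --- which is exactly how the paper closes this step, chaining $f(u)$, $app(el_{A \Rrightarrow B}(f), u)$, $app(el_{A \Rrightarrow B}(f'), u)$, $app(el_{A \Rrightarrow B}(f'), u')$, $f'(u')$. With that one line added, your argument coincides with the paper's.
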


\end{toappendix}

\begin{toappendix}
[
\begin{proof}
Only the last point is not trivial and it can be proven by using
Theorem \ref{th:ExtensionTruth}.
See Appendix \thisappendix.
\end{proof}
]

\begin{proof}

\begin{itemize}

\item The first point is just expressing the stability of $app$.

\item The second point is just an way of expressing that if
$\eqeff {A \Rrightarrow B} f {f'}$ then $\eqeffEl {A \Rrightarrow B}
{el_{A \Rrightarrow B}(f)} {el_{A \Rrightarrow B}(f')}$.

\item The third point is just another way of expressing that $app$ is
an extension of $app'$ on the first argument.

\item We use Theorem~\ref{th:ExtensionTruth}: We construct
$F \in (|El(A \Rrightarrow B)| \times |El(A \Rrightarrow B)|) \rightarrow Prop$
defined by:
\[F(w, w') \eqdef (\forall u \in El(A), \eqeffEl B {app(w, u)} {app(w', u)})
\Rightarrow \eqeffEl {A \Rrightarrow B} w {w'}\]
By stability of $app$, $F \in SP(El(A \Rrightarrow B), El(A \Rrightarrow B))$.

If $\eqeff {A \Rrightarrow B} f f$, $\eqeff {A \Rrightarrow B} {f'} {f'}$ and
$\forall u \in El(A), \eqeffEl B {app(el_{A \Rrightarrow B}(f), u)}
{app(el_{A \Rightarrow B}(f'), u)}$:
Let $u, u' \in |El(A)|$ such that $\eqeffEl A u {u'}$.
So $\eqeffEl A u u$.
By definition of $app$,
$\eqeffEl B {app(el_{A \Rrightarrow B}(f), u)} {f(u)}$.
By hypothesis, $\eqeffEl B {app(el_{A \Rrightarrow B}(f), u)}
{app(el_{A \Rightarrow B}(f'), u)}$.
We can also prove that
$\eqeffEl B {app(el_{A \Rrightarrow B}(f'), u')}
{f'(u')}$.
Hence $\eqeffEl B {f(u)} {f(u')}$.
Therefore $\eqeff {A \Rrightarrow B} f {f'}$ and
$\eqeffEl {A \Rrightarrow B} {el_{A \Rrightarrow B}(f)}
{el_{A \Rrightarrow B}(f')}$.

By Theorem \ref{th:ExtensionTruth} we have
$\vDash \forall w, w' \in El(A \Rrightarrow B), F(w, w')$.
Then we can conclude.

\end{itemize}

\end{proof}

\end{toappendix}

And then we can prove this theorem as easy as in the case of the category of
sets.

\begin{toappendix}

\appendixbeyond 0

\begin{theorem}[\textbf{C} has power objects]

\label{th:Closure}

If $A$ and $B$ effective sets then $A \Rrightarrow B$ with \\
$ev : El((A \Rrightarrow B) \times A) \rightarrow El(B)$ defined by
$ev(w) \eqdef app(p_1(w), p_2(w))$ is $B$ power $A$ in \textbf{C}.

Hence \textbf{C} is cartesian and closed.

\end{theorem}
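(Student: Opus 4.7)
The plan is to mimic the standard proof that the category of sets is Cartesian closed, using the high-level $\beta$-equivalence and extensionality properties packaged in Theorem~\ref{th:ClosureProperties}, so that essentially no low-level manipulation is needed. First I would check that $ev$ is a well-defined morphism: since $p_1 : El((A\Rrightarrow B) \times A) \rightarrow El(A\Rrightarrow B)$ and $p_2 : El((A\Rrightarrow B) \times A) \rightarrow El(A)$ are morphisms by Theorem~\ref{th:ProductProperties}, and $app : El(A\Rrightarrow B) \rightarrow El(A) \rightarrow El(B)$ is a morphism by (the extension defining) Theorem~\ref{th:ClosureProperties}, Lemma~\ref{lem:SoundComp} gives that $ev$ is a stable function and that its class modulo $\approx$ is independent of the choice of representatives.

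Next, given any effective set $X$ and any morphism $g : El(X \times A) \rightarrow El(B)$, I would construct the transpose $\tilde{g} : El(X) \rightarrow El(A \Rrightarrow B)$ by
\[
\tilde{g}(u) \eqdef \Lambda v : A .\; g(\langle u, v \rangle).
\]
To see that this is legitimate I would use the second item of Theorem~\ref{th:ClosureProperties}: for fixed $u$ with $\eqeffEl X u u$, the stability of $g$ and of $cons$ (Theorem~\ref{th:ProductProperties}) implies that $v \mapsto g(\langle u, v \rangle)$ sends $\eqeffEl A v {v'}$ to $\eqeffEl B {g(\langle u,v\rangle)}{g(\langle u,v'\rangle)}$, which is exactly the hypothesis needed for $\Lambda$ to produce an inhabited element of $El(A \Rrightarrow B)$. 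The same reasoning applied to $\eqeffEl X u {u'}$ shows that $\tilde{g}$ itself is stable, \ie{} $\tilde{g} : El(X) \rightarrow El(A \Rrightarrow B)$.

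Then I would verify the universal property. The pairing map $\tilde{g} \times \mathrm{id}_A : El(X \times A) \rightarrow El((A \Rrightarrow B) \times A)$ is, by Theorem~\ref{th:Product}, the unique morphism whose projections are $\tilde{g} \circ p_1$ and $p_2$; hence for any $w \in El(X \times A)$ with $\eqeffEl{X \times A} w w$ we have, by the $\beta$-rules of Theorem~\ref{th:ProductProperties},
\[
\eqeffEl{B}{ev((\tilde{g}\times\mathrm{id}_A)(w))}{app(\tilde{g}(p_1(w)), p_2(w))}.
\]
Writing $u = p_1(w)$ and $v = p_2(w)$ and using the third item of Theorem~\ref{th:ClosureProperties} ($\beta$ for $app$ and $\Lambda$), the right-hand side is equivalent to $g(\langle u, v \rangle)$; and by the extensionality part of Theorem~\ref{th:ProductProperties} we have $\eqeffEl{X \times A}{\langle u, v \rangle}{w}$, so by stability of $g$ this is equivalent to $g(w)$. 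Hence $ev \circ (\tilde{g} \times \mathrm{id}_A) \approx g$.

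Finally, for uniqueness, suppose $h : El(X) \rightarrow El(A \Rrightarrow B)$ also satisfies $ev \circ (h \times \mathrm{id}_A) \approx g$. Fix $u \in El(X)$ with $\eqeffEl X u u$; then for any $v \in El(A)$ with $\eqeffEl A v v$, the same $\beta$-computation applied to both $h$ and $\tilde{g}$ yields $\eqeffEl B {app(h(u),v)}{g(\langle u, v\rangle)}$ and $\eqeffEl B {app(\tilde{g}(u),v)}{g(\langle u, v\rangle)}$, whence $\eqeffEl B{app(h(u),v)}{app(\tilde{g}(u),v)}$. The extensionality clause (fourth item) of Theorem~\ref{th:ClosureProperties} then gives $\eqeffEl{A \Rrightarrow B}{h(u)}{\tilde{g}(u)}$, so $h \approx \tilde{g}$. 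The one subtle point, and the main piece of bookkeeping to take care of, will be making sure that the stability hypotheses demanded by the $\beta$- and extensionality-lemmas of Theorem~\ref{th:ClosureProperties} are discharged from the stability of $g$ and the morphism status of $h$ and $\tilde{g}$ before invoking them; everything else is a transcription of the set-theoretic proof.
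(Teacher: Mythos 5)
Your proposal is correct and follows essentially the same route as the paper: define the transpose by $\Lambda v : A\,.\,g(\langle u,v\rangle)$, establish its stability from the second item of Theorem~\ref{th:ClosureProperties}, verify the adjunction equation via the $\beta$-rules and extensionality of the product together with the $\beta$-rule for $app$, and derive uniqueness from the extensionality clause of Theorem~\ref{th:ClosureProperties}. The only (cosmetic) difference is that you package the composite through the pairing morphism $\tilde{g}\times\mathrm{id}_A$ via Theorem~\ref{th:Product}, whereas the paper works directly with $w\mapsto app(\varphi(p_1(w)),p_2(w))$.
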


\end{toappendix}

\begin{toappendix}
[
\begin{proof}
Corollary of Theorem \ref{th:ClosureProperties}.
See Appendix \thisappendix.
\end{proof}
]

\begin{proof}

$A \Rrightarrow B$ is an effective set.

By stability of $app$, $p_1$ and $p_2$ we have
$ev : El((A \Rrightarrow B) \times A) \rightarrow El(B)$

Assume $X$ is an effective set and $f : El(X \times A) \rightarrow El(B)$.

We construct $\varphi \in |El(X)| \rightarrow |El(A \Rrightarrow B)|$ defined
by $\varphi(u) \eqdef \Lambda v : A . f(<u, v>)$.

\begin{itemize}

\item If $\eqeffEl X u {u'}$:
If $\eqeffEl A v {v'}$ then $\eqeffEl {X \times A} {<u, v>} {<u', v'>}$.
So $\eqeffEl B {f(<u, v>)} {f(<u', v'>)}$.
By Theorem \ref{th:ClosureProperties} we have
$\eqeffEl {A \Rrightarrow B} {\varphi(u)} {\varphi(u')}$.
Therefore $\varphi : El(X) \rightarrow El(A \Rrightarrow B)$.

\item If $\eqeffEl {X \times A} w w$:
Then $\eqeffEl X {p_1(w)} {p_1(w)}$ and $\eqeffEl A {p_2(w)} {p_2(w)}$.
For all $v$, $v'$ in $|El(A)|$, if $\eqeffEl A v {v'}$ then
$\eqeffEl B {f(<p_1(w), v>)} {f(<p_1(w), v'>)}$.
By theorem \ref{th:ClosureProperties} we have
$\eqeffEl B {app((\Lambda v : A . f(<p_1(w), v>)), p_2(w))}
{f(<p_1(w), p_2(w)>)}$.
Hence $\eqeffEl B {app(\varphi(p_1(w)), p_2(w))} {f(<p_1(w), p_2(w)>)}$.
Moreover:
\[\begin{array}{l}
\eqeffEl {X \times A} {<p_1(w), p_2(w)>} {<p_1(w), p_2(w)>} \\
\eqeffEl X {p_1(<p_1(w), p_2(w)>)} {p_1(w)} \\
\eqeffEl A {p_2(<p_1(w), p_2(w)>} {p_2(w)}
\end{array}
\]

So $\eqeffEl {X \times A} {<p_1(w), p_2(w)>} w$.
Then, $\eqeffEl B {f(<p_1(w), p_2(w)>)} {f(w)}$.
Hence $\eqeffEl B {app(\varphi(p_1(w)), p_2(w))} {f(w)}$.
Therefore \\ $\vDash \forall w \in El(X \times A),
\eqeffEl B {app(\varphi(p_1(w)), p_2(w))} {f(w)}$.

\item Assume $\psi : El(X) \rightarrow El(A \Rrightarrow B)$ such that
$\vDash \forall w \in El(X \times A),
\eqeffEl B {app(\psi(p_1(w)), p_2(w))} {f(w)}$.

If $\eqeffEl X u u$:

\begin{itemize}

\item Then we have $\eqeffEl {A \Rrightarrow B} {\varphi(u)} {\varphi(u)}$
and $\eqeffEl {A \Rrightarrow B} {\psi(u)} {\psi(u)}$.

\item If $\eqeffEl A v v$:
Then $\eqeffEl {X \times A} {<u, v>} {<u, v>}$.
Therefore
$\eqeffEl B {app(\varphi(p_1(<u, v>)), p_2(<u, v>))} {f(<u, v>)}$ and
$\eqeffEl B {app(\psi(p_1(<u, v>)), p_2(<u, v>))} {f(<u, v>)}$.
We also have $\eqeffEl X {p_1(<u, v>)} u$ and
$\eqeffEl A {p_2(<u, v>)} v$.
So, $\eqeffEl {A \Rrightarrow B} {\varphi(p_1(<u, v>))} {\varphi(u)}$.
Hence, $\eqeffEl B {app(\varphi(p_1(<u, v>), p_2(<u, v>)}
{app(\varphi(u), v)}$.
We can also prove that
$\eqeffEl B {app(\psi(p_1(<u, v>), p_2(<u, v>))} {app(\psi(u), v}$.
Hence, $\eqeffEl B {app(\varphi(u), v)} {app(\psi(u), v)}$.

\end{itemize}

By Theorem \ref{th:ClosureProperties}, we have
$\eqeffEl {A \Rrightarrow B} {\varphi(u)} {\psi(u)}$.
Therefore $\varphi \approx \psi$.

\end{itemize}

Then we can conclude.

\end{proof}

\end{toappendix}

\subsection{The sub-object classifier}

\label{sec:SubObjectClassifier}

To prove that \textbf{C} is a topos, we only have to prove that \textbf{C} has
a sub-object classifier (which can be seen as the object of truth values).

There are several possible definition of the sub-object classifier.
The one we use here needs the notion of pullback and monomorphism which can
be used as the categorical definition of subset and defined as follows in
\textbf{C}:
\begin{definition}[Pullback of a morphism from the final object]\strut

Assume $X$ is an effective set.\\
By Theorem \ref{th:Final} we can construct $* : El(X) \rightarrow El(1)$
defined by $*(u) \eqdef <>$.

Assume $A$ and $\Omega$ are effective sets,
$True : El(1) \rightarrow El(\Omega)$ and $f : El(A) \rightarrow El(\Omega)$.\\
Then $i$ is the \emph{pullback of} $True$ along $f$ if and only if:
\begin{itemize}

\item $B$ is an effective set and $i : El(B) \rightarrow El(A)$.

\item $f \circ i \approx True \circ *$.

\item For all effective set $X$ and for all
$\varphi : El(X) \rightarrow El(A)$, if $f \circ \varphi \approx True \circ *$
then there exists a unique $\psi : El(X) \rightarrow El(A)$ (modulo $\approx$)
such that $i \circ \psi \approx \varphi$.

\end{itemize}
$B$ can be seen as the categorical definition of the subset of $A$ which
elements satisfies $f$ and with $i$ the injection morphism.
\end{definition}

\begin{definition}[Monomorphism]\strut

Assume $A$ and $B$ are effective sets and $m : El(A) \rightarrow El(B)$.

$m$ is a \emph{monomorphism} if and only if for all effective sets $X$ and for
all $f, g : El(X) \rightarrow El(A)$, if $m \circ f \approx m \circ g$ then
$f \approx g$.

This can be seen as the categorical definition of injectivity.

\end{definition}

Now we can give the definition of a sub-object classifier:

\begin{definition}[Sub-object classifier]

$\Omega$ with $True$ is a \emph{sub-object-classifier} if and only if:
\begin{itemize}

\item $\Omega$ is an effective set and $True : El(1) \rightarrow El(\Omega)$.

\item For all effective sets $A$ and for all $f : El(A) \rightarrow El(\Omega)$
there exists a pullback of $True$ along $f$.

\item For all effective sets $A$ and $B$ and for all
$m : El(A) \rightarrow El(B)$ monomorphism there exists a unique
$\chi_m : El(B) \rightarrow El(\Omega)$ (modulo $\approx$) such that
$m$ is the pullback of $True$ along $\chi_m$.

\end{itemize}
\end{definition}

As in the case of the final object, the products and the power objects,
the sub-object classifier is unique up to isomorphism.

\subsubsection{Propositions}

In this section we construct $\Omega$ and $True$ and we characterize the
property that $f \approx True \circ *$ for some
$f : El(X) \rightarrow El(\Omega)$.

The construction of the sub-object classifier is the same as in Heyting's.

\begin{definition}[Definition of $\Omega$]

We define the effective set $\Omega$ as follows:
\begin{itemize}

\item $|\Omega| \eqdef Prop$

\item $\eqeff \Omega q {q'} \eqdef q \Leftrightarrow q'$

\end{itemize}
It is trivial to prove that $\Omega$ is an effective set.

\end{definition}

Unlike the general case, $\Omega$ is an effective set where it is possible to
go directly from the high level to the low level.

\begin{definition}[Definition of prop]

If $u \in El(Prop)$ we write $prop(u) \eqdef u(\top)$

\end{definition}

And $prop$ is the opposite of $el_\Omega$ as follows:

\begin{toappendix}

\appendixbeyond 0

\begin{theorem}[Properties of $\Omega$]\strut
\label{th:PropProperties}
\begin{itemize}

\item $\vDash \forall q, q' \in Prop,
(q \Leftrightarrow q') \Leftrightarrow \eqeffEl \Omega {el_\Omega(q)}
{el_\Omega(q')}$

\item $\vDash \forall u, v \in |El(\Omega)|,
\eqeffEl \Omega u v \Rightarrow (prop(u) \Leftrightarrow prop(v))$

\item $\vDash \forall q \in Prop, prop(el_\Omega(q)) \Leftrightarrow q$

\item $\vDash \forall u, v \in El(\Omega),
(prop(u) \Leftrightarrow prop(v)) \Rightarrow \eqeffEl \Omega u v$

\end{itemize}
\end{theorem}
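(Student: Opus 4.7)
The plan is to dispatch the four items in increasing order of difficulty, observing that the first three are essentially unfoldings while the fourth requires characterising singletons over $\Omega$.

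For the first item, I would simply unfold: by definition of $\Omega$, $\eqeff \Omega q {q'}$ is exactly $q \Leftrightarrow q'$, and then Lemma~\ref{lem:LowLevelHighLevel}.4 gives the equivalence between $\eqeff \Omega q {q'}$ and $\eqeffEl \Omega {el_\Omega(q)} {el_\Omega(q')}$. For the second item, from $\eqeffEl \Omega u v$ I read off the Equivalence clause, which yields $u(\top) \Leftrightarrow v(\top)$; this is exactly $prop(u) \Leftrightarrow prop(v)$ by definition of $prop$. For the third item, I would compute directly: $prop(el_\Omega(q)) = el_\Omega(q)(\top) = \eqeff \Omega q \top = (q \Leftrightarrow \top)$, which is logically equivalent to $q$.

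The main work, and the real obstacle, is the fourth item, where I have to go in the reverse direction: from a propositional equivalence of the $prop$ images to a full $El(\Omega)$-equality. Here I assume $u, v \in El(\Omega)$ (so both satisfy Stability, Unicity and Existence) and $prop(u) \Leftrightarrow prop(v)$. The key lemma I would prove first is that every singleton $u$ over $\Omega$ is \emph{determined} by $prop(u)$, in the precise sense that
\[\vDash \forall u \in El(\Omega), \forall q \in Prop, u(q) \Leftrightarrow (q \Leftrightarrow prop(u)).\]
The $\Rightarrow$ direction uses Stability (if $u(q)$ and $q$, then $q \Leftrightarrow \top$ so $u(\top)$, i.e.\ $prop(u)$) together with Unicity (if $u(q)$ and $u(\top)$ then $q \Leftrightarrow \top$, so $q$). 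The $\Leftarrow$ direction uses Existence to obtain some $q_0$ with $u(q_0)$, applies the $\Rightarrow$ direction at $q_0$ to get $q_0 \Leftrightarrow prop(u)$, combines with the hypothesis to get $q \Leftrightarrow q_0$, and finally uses Stability to transport $u(q_0)$ to $u(q)$.

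Once this characterisation is in place, the conclusion is immediate: under $prop(u) \Leftrightarrow prop(v)$ we get $u(q) \Leftrightarrow (q \Leftrightarrow prop(u)) \Leftrightarrow (q \Leftrightarrow prop(v)) \Leftrightarrow v(q)$ for every $q$, which supplies the Equivalence clause of $\eqeffEl \Omega u v$; Stability, Unicity and Existence are already guaranteed by $u, v \in El(\Omega)$, so the four clauses together yield $\eqeffEl \Omega u v$. The delicate point to watch is that all of this must be read inside the realisability interpretation, so each step (witnessing the existential, applying Unicity and Stability, closing the biconditionals) should be performed as an intuitionistic derivation which is then internalised via the capture-of-provability theorem.
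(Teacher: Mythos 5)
Your treatment of the first three items coincides with the paper's (unfold the definitions of $\Omega$ and $prop$, use Lemma~\ref{lem:LowLevelHighLevel}.4 for the first, the Equivalence clause for the second), so the only real point of comparison is the fourth item, where your route is genuinely different and, as far as I can check, correct. The paper handles it with its lifting machinery: it defines $F(u,v) \eqdef (prop(u) \Leftrightarrow prop(v)) \Rightarrow \eqeffEl \Omega u v$, notes $F \in SP(El(\Omega), El(\Omega))$ by stability of $prop$ (item 2), verifies $F(el_\Omega(q), el_\Omega(q'))$ --- which is just item 1 again --- and invokes Theorem~\ref{th:ExtensionTruth}. You instead argue entirely at the high level, via the characterisation $\vDash \forall u \in El(\Omega), \forall q \in Prop,\ u(q) \Leftrightarrow (q \Leftrightarrow prop(u))$, proved from Stability, Unicity and Existence exactly as you describe (the only delicate realisability points, the use of $q \Leftrightarrow \top$ and the elimination of the existential witness $q_0$, are uniform intuitionistic steps and internalise as the paper's capture-of-provability theorem allows), and then conclude the Equivalence clause directly, the remaining clauses of $\eqeffEl \Omega u v$ being supplied by $\eqeffEl \Omega u u$ (or, equivalently, by Lemma~\ref{lem:LowLevelHighLevel}.1). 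What the paper's proof buys is uniformity: it is a two-line instance of the general low-to-high transfer pattern used throughout, with the only work being a stability check. What yours buys is a more informative intermediate fact --- every element of $El(\Omega)$ is the ``principal'' singleton determined by its truth value $prop(u)$, essentially a concrete form of Lemma~\ref{lem:LowLevelHighLevel}.3 combined with item 3 --- at the cost of redoing by hand the bookkeeping that Theorem~\ref{th:ExtensionTruth} is designed to hide. Both are sound; yours is more elementary but less in the spirit of the framework the paper is advertising.
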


\end{toappendix}

\begin{toappendix}
[
\begin{proof}
Straightforward. See Appendix \thisappendix.
\end{proof}
]

\begin{proof}

\begin{itemize}

\item $q \Leftrightarrow q'$ if and only if $\eqeff \Omega q {q'}$,
if and only if $\eqeffEl \Omega {el_\Omega(q)} {el_\Omega(q')}$.

\item Assume $\eqeffEl \Omega u v$:
$prop(u)$ if and only if $u(\top)$ if and only if $v(\top)$, if and only
if $prop(v)$.

\item $prop(el_\Omega(q))$ if and only if $el_\Omega(q)(\top)$, if and only if
$q \Leftrightarrow \top$, if and only if $q$.

\item We construct $F \in (|El(\Omega)| \times |El(\Omega)|) \rightarrow Prop$
defined by: $F(u, v) \eqdef (prop(u) \Leftrightarrow prop(v)) \Rightarrow
\eqeffEl \Omega u v$.
By stability of $prop$, $F \in SP(El(\Omega), El(\Omega))$.

If $prop(el_\Omega(q)) \Leftrightarrow prop(el_\Omega(q'))$, then
$q \Leftrightarrow q'$.
So $\eqeffEl \Omega {el_\Omega(q)} {el_\Omega(q')}$.

By Theorem \ref{th:ExtensionTruth}, we have
$\vDash \forall u, v \in El(\Omega), F(u, v)$.
Then we can conclude.

\end{itemize}

\end{proof}

\end{toappendix}

Then we can connect the categorical world and the framework for the notion of
truth as follows:

\begin{definition}[Definition of $True$]\strut

We construct $True \in |El(1)| \rightarrow |El(\Omega)|$ and
defined by: $True(u) \eqdef el_\Omega(\top)$

\end{definition}

\begin{toappendix}

\appendixbeyond 0

\begin{theorem}[Properties of $True$]

\label{th:Prop}
We have $True : El(1) \rightarrow El(\Omega)$.

Furthermore, assume that $X$ is an effective set.
             
Then for all $f : El(X) \rightarrow El(\Omega)$, $f \approx True \circ *$ if and
only if $\vDash \forall u \in El(X), prop(f(u))$

\end{theorem}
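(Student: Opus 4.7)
The plan is to handle the two assertions separately, with both reducing to mechanical unfolding once the characterisation of equality in $El(\Omega)$ via $prop$ (Theorem \ref{th:PropProperties}) is available.

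First I would establish stability, i.e. that $True : El(1) \rightarrow El(\Omega)$. Because $True(u) \eqdef el_\Omega(\top)$ does not actually depend on its argument, stability collapses to $\vDash \eqeffEl \Omega {el_\Omega(\top)} {el_\Omega(\top)}$. Since $\top \Leftrightarrow \top$ trivially, we have $\eqeff \Omega \top \top$ by definition of $\Omega$, and Lemma \ref{lem:LowLevelHighLevel}.4 lifts this to $\eqeffEl \Omega {el_\Omega(\top)} {el_\Omega(\top)}$. Hence $True$ maps any $u$ with $\eqeffEl 1 u {u'}$ to the same element, which is self-equal, so stability holds.

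Next, for the characterisation: fix an effective set $X$ and $f : El(X) \rightarrow El(\Omega)$. Unfolding definitions, $f \approx True \circ *$ means $\vDash \forall u \in El(X), \eqeffEl \Omega {f(u)} {(True \circ *)(u)}$, and by construction $(True \circ *)(u) = True(<>) = el_\Omega(\top)$. For the forward direction, assume $f \approx True \circ *$; then for $u \in El(X)$ we have $\eqeffEl \Omega {f(u)} {el_\Omega(\top)}$, so Theorem \ref{th:PropProperties}.2 gives $prop(f(u)) \Leftrightarrow prop(el_\Omega(\top))$, and Theorem \ref{th:PropProperties}.3 turns the right-hand side into $\top$. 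Hence $prop(f(u))$, and the quantification over $u \in El(X)$ is preserved.

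For the converse direction, assume $\vDash \forall u \in El(X), prop(f(u))$. Fix $u \in El(X)$; then $f(u) \in El(\Omega)$ (since $f$ is stable) and $el_\Omega(\top) \in El(\Omega)$ (shown in the first paragraph). By hypothesis $prop(f(u))$ holds, and $prop(el_\Omega(\top)) \Leftrightarrow \top$ holds by Theorem \ref{th:PropProperties}.3, so $prop(f(u)) \Leftrightarrow prop(el_\Omega(\top))$. Theorem \ref{th:PropProperties}.4 then yields $\eqeffEl \Omega {f(u)} {el_\Omega(\top)}$, i.e. $\eqeffEl \Omega {f(u)} {(True \circ *)(u)}$, which gives $f \approx True \circ *$. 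No significant obstacle is expected; the main point is simply to route every equality through $prop$ using Theorem \ref{th:PropProperties} rather than arguing in $|\Omega|$ directly.
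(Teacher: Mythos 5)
Your proposal is correct and follows essentially the same route as the paper: establish stability of $True$ from $\top \Leftrightarrow \top$, note that $(True \circ *)(u) = el_\Omega(\top)$, and then derive both directions of the equivalence by routing equalities in $El(\Omega)$ through $prop$ using Theorem \ref{th:PropProperties} (items 2--4), checking the self-equalities needed to invoke item 4. No gaps.
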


\end{toappendix}

\begin{toappendix}
[
\begin{proof}
Corollary of Theorem $\ref{th:PropProperties}$. See Appendix \thisappendix.
\end{proof}
]

\begin{proof}

\begin{itemize}

\item Stability of $True$:
We have $\top \Leftrightarrow \top$.
Then $\eqeffEl \Omega {el_\Omega(\top)} {el_\Omega(\top)}$.
Hence, $\eqeffEl \Omega {True(u)} {True(u')}$.
Therefore $True : El(1) \rightarrow El(\Omega)$

\item Remark: For all $u \in |El(X)|$, $(True \circ *)(u) = el_\Omega(\top)$.

\item If $\eqeffEl X u u$ and $\eqeffEl \Omega {f(u)} {el_\Omega(\top)}$ then
$prop(f(u)) \Leftrightarrow prop(el_\Omega(\top))$.
We also have $prop(el_\Omega(\top)) \Leftrightarrow \top$.
Therefore, we have $prop(f(u))$.

\item If $\eqeffEl X u u$ and $prop(f(u))$ then $\eqeffEl \Omega {f(u)} {f(u)}$
and $\eqeffEl \Omega {(True \circ *)(u)} {(True \circ *)(u)}$.
We also have $prop(f(u)) \Leftrightarrow \top$ and
$prop(el_\Omega)(\top) \Leftrightarrow \top$.
Hence $prop(f(u)) \Leftrightarrow prop((True \circ *)(u))$.
By Theorem \ref{th:PropProperties} we have
$\eqeffEl \Omega {f(u)} {(True \circ *)(u)}$.

\end{itemize}

Then we can conclude.

\end{proof}

\end{toappendix}

If $f : El(X) \rightarrow El(\Omega)$ then the property
$f \approx True \circ *$ is often used in the characterisation of the sub-object
classifier in the topos theory.

\subsubsection{Subsets}

In this section, we prove that there exist pullbacks of $True$.

The construction of subsets in this framework is quite intuitive:
first we define the effective object that represent a subset,
then we prove the main property in the high level framework, and finally
we prove that it respects the categorical characterisation.

\begin{definition}[Definition of $\{ u \in A \mid F(u)\}$]

Assume $A$ is an effective set and \\ $F \in SP(El(A))$.
We define the effective set $\{ u \in A \mid F(u) \}$ as follow:
\begin{itemize}

\item $|\{ u \in A \mid F(u) \}| \eqdef |A|$

\item $\eqeff {\{u \in A \mid F(u)\}} x y \eqdef \eqeff A x y \wedge F(el_A(x))$

\end{itemize}
It is trivial to show that $\{ u \in A \mid F(u)\}$ is an effective set.
\end{definition}

\begin{toappendix}

\appendixbeyond 0

\begin{theorem}[Properties of $\{u \in A \mid F(u)\}$]\strut

\label{th:SubProperties}

Assume $A$ is an effective set and $F \in SP(El(A))$.
Let $B \eqdef \{ u \in A | F(u) \}$.
Then:

$\vDash \forall u, v \in |El(A)|, \eqeffEl B u v \Leftrightarrow
(\eqeffEl A u v \wedge F(u))$

\end{theorem}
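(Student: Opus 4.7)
The plan is to unfold the definitions on both sides and reason directly. Since $|B| = |A|$ and $\eqeff B x y$ is exactly $\eqeff A x y \wedge F(el_A(x))$, the underlying sets $|El(B)|$ and $|El(A)|$ coincide, and the proof reduces to comparing the four clauses (stability, unicity, existence, equivalence) of $\eqeffEl B u v$ against those of $\eqeffEl A u v$. Theorem \ref{th:ExtensionTruth} is not needed here, since the quantification is low-level ($u,v \in |El(A)|$).

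For the direction $(\Rightarrow)$, I would assume $\eqeffEl B u v$, extract a witness $x_0 \in |A|$ with $u(x_0)$ from the existence clause, and apply unicity with $x = x' = x_0$ to get $\eqeff B {x_0}{x_0}$, which unfolds to $\eqeff A {x_0}{x_0} \wedge F(el_A(x_0))$. More generally, for any $x$ with $u(x)$, the $=_B$-unicity of $u$ yields $\eqeff B x {x_0}$, hence $F(el_A(x))$. From these observations the four clauses of $\eqeffEl A u v$ follow by straightforward rewriting: $=_A$-stability and $=_A$-unicity are obtained by replacing $\eqeff B$ with $\eqeff A$ (supplementing the $F$ conjunct where needed via the above remark), while existence and equivalence are unchanged. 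For the conjunct $F(u)$: combining $u(x_0)$ with $\eqeffEl A u u$ gives $\eqeffEl A u {el_A(x_0)}$ (Lemma \ref{lem:LowLevelHighLevel}.3), and stability of $F$ together with $F(el_A(x_0))$ transfers $F$ from $el_A(x_0)$ to $u$.

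For the direction $(\Leftarrow)$, I would assume $\eqeffEl A u v$ and $F(u)$. The key preliminary observation is that whenever $u(x)$ holds, Lemma \ref{lem:LowLevelHighLevel}.3 yields $\eqeffEl A u {el_A(x)}$, and stability of $F$ then transfers $F(u)$ to $F(el_A(x))$. Equipped with this, I verify the four clauses of $\eqeffEl B u v$: $=_B$-stability (a hypothesis $\eqeff B x {x'}$ gives in particular $\eqeff A x {x'}$, so $=_A$-stability of $u$ supplies $u(x')$); $=_B$-unicity (from $u(x), u(x')$ we get $\eqeff A x {x'}$ by $=_A$-unicity and $F(el_A(x))$ by the observation, hence $\eqeff B x {x'}$); existence and equivalence carry over verbatim.

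The main obstacle is purely bookkeeping: at each step one must be careful about which equality ($\eqeff A$ or $\eqeff B$) is in play and remember to supply the $F(el_A(\cdot))$ conjunct when moving from $\eqeff A$ to $\eqeff B$. Once the observation that $F(el_A(x))$ can always be recovered (from $F(u)$ in one direction, from unicity plus the witness in the other) is isolated, the remainder is a routine unfolding of definitions.
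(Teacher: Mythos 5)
Your proposal is correct and follows essentially the same route as the paper's proof: unfold $\eqeffEl B u v$ clause by clause against $\eqeffEl A u v$, use $=_B$-unicity of $u$ to recover the conjunct $F(el_A(x))$ in one direction and Lemma \ref{lem:LowLevelHighLevel}.3 plus stability of $F$ to supply it in the other. The only (harmless) glossed detail is that applying Lemma \ref{lem:LowLevelHighLevel}.3 also needs $\eqeffEl A u u$, which is indeed available from $\eqeffEl A u v$ by symmetry and transitivity, exactly as in the paper.
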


\end{toappendix}

\begin{toappendix}
[
\begin{proof}
Straightforward. See Appendix \thisappendix.
\end{proof}
]

\begin{proof}

\begin{itemize}

\item Assume $\eqeffEl B u v$:

\begin{itemize}

\item Stability: If $u(a)$ and $\eqeff A a {a'}$ then $\eqeff B a a$
since $\eqeffEl B u v$.
So we have $F(el_A(a))$.
Hence $\eqeff B a {a'}$.
Therefore $u(a')$.

\item Unicity: If $u(a)$ and $u(a')$, then $\eqeff B a {a'}$
by uncicity for $u$ in $B$.
Hence $\eqeff A a {a'}$.

\item Existence and equivalence still hold (they are independent from
the equality).

\end{itemize}

Therefore $\eqeffEl A u v$.
So we have $\eqeffEl A u u$.
Then there exists $a \in |A|$ such that $\eqeffEl A u {el_A(a)}$.
Hence $u(a)$ and $\eqeffEl A {el_A(a)} u$.
So, $\eqeff B a a$.
Hence $F(el_A(a))$.
Therefore $F(u)$.

\item Assume $\eqeffEl A u v$ and $F(u)$:

\begin{itemize}

\item Stability: If $u(a)$ and $\eqeff B a {a'}$ then $\eqeff A a {a'}$.
Hence $u(a')$.

\item Unicity: If $u(a)$ and $u(a')$ then $\eqeff A a {a'}$
by unicity in $A$.
We also have $\eqeffEl A u u$.
Then $\eqeffEl A u {el_A(a)}$ by Lemma~\ref{lem:LowLevelHighLevel}.3.
Hence $F(el_A(a))$.
Therefore $\eqeff B a {a'}$.

\item The existence and equivalence still hold.

\end{itemize}

Therefore $\eqeffEl B u v$.

\end{itemize}

\end{proof}

\end{toappendix}

\begin{toappendix}

\appendixbeyond 0

\begin{theorem}[\textbf{C} has pullbacks of $True$]

\label{th:Sub}

Assume $A$ is an effective set and $f : El(A) \rightarrow El(\Omega)$.
Then $B$ defined by $B \eqdef \{ u \in A \mid prop(f(u))\}$ is an effective set 
and if $i(u) \eqdef u$ then $i : El(B) \rightarrow El(A)$ and $B$ with $i$ is
the pullback of $True$ along $f$.





\end{theorem}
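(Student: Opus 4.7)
The plan is to verify in turn all four requirements that make $B$ together with $i$ the pullback of $True$ along $f$. The two principal tools will be Theorem~\ref{th:SubProperties}, which turns $\eqeffEl B u v$ into the conjunction $\eqeffEl A u v \wedge F(u)$, and Theorem~\ref{th:Prop}, which reduces a commutation of the form $g \approx True \circ *$ to the pointwise statement $\vDash \forall u \in El(X), prop(g(u))$.

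First I would observe that the predicate $F(u) \eqdef prop(f(u))$ is stable on $El(A)$: this is immediate since $f$ is a stable function and $prop$ is stable by Theorem~\ref{th:PropProperties}. Hence $B = \{u \in A \mid prop(f(u))\}$ is a well-defined effective set. Next, for $i : El(B) \rightarrow El(A)$ defined by $i(u) \eqdef u$, stability is a direct consequence of Theorem~\ref{th:SubProperties}: from $\eqeffEl B u v$ we extract $\eqeffEl A u v$. For the square $f \circ i \approx True \circ *$, I would apply Theorem~\ref{th:Prop} and check that $prop(f(i(u)))$ holds for every $u \in El(B)$; but from $\eqeffEl B u u$, Theorem~\ref{th:SubProperties} gives both $\eqeffEl A u u$ and $F(u) = prop(f(u))$, which is exactly what is required.

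For the universal property, given an effective set $X$ and $\varphi : El(X) \rightarrow El(A)$ with $f \circ \varphi \approx True \circ *$, I would define $\psi : El(X) \rightarrow El(B)$ by $\psi(u) \eqdef \varphi(u)$, using the fact that $|El(B)| = |El(A)|$ as raw sets. The only non-trivial verification is that $\psi$ is a stable function into $El(B)$: given $\eqeffEl X u {u'}$, stability of $\varphi$ produces $\eqeffEl A {\varphi(u)} {\varphi(u')}$, while Theorem~\ref{th:Prop} applied to the hypothesis yields $prop(f(\varphi(u)))$, so Theorem~\ref{th:SubProperties} in the easy direction upgrades this to $\eqeffEl B {\varphi(u)} {\varphi(u')}$. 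The commutation $i \circ \psi \approx \varphi$ is then definitional.

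Finally, for uniqueness, suppose $\psi' : El(X) \rightarrow El(B)$ also satisfies $i \circ \psi' \approx \varphi$. Then for $u \in El(X)$ we obtain $\eqeffEl A {\psi(u)} {\psi'(u)}$ by composition and transitivity in $El(A)$; since $\psi(u) \in El(B)$ we automatically have $F(\psi(u))$, and Theorem~\ref{th:SubProperties} in the harder direction converts these two facts into $\eqeffEl B {\psi(u)} {\psi'(u)}$, establishing $\psi \approx \psi'$. The main obstacle is not a hard computation but rather bookkeeping: one must constantly toggle between the $A$-equality and the $B$-equality on the same underlying carrier, and Theorem~\ref{th:SubProperties} is the pivot that makes every such toggle go through.
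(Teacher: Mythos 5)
Your proposal is correct and follows essentially the same route as the paper: same choice of $\psi$ as $\varphi$ viewed on the common carrier, with Theorem~\ref{th:SubProperties} toggling between the $A$- and $B$-equalities and Theorem~\ref{th:Prop} translating the commutation conditions into pointwise $prop$ statements. The only cosmetic difference is that in the uniqueness step you invoke $F(\psi(u))$ where the paper uses $F(\psi'(u))$, and your remark that $i \circ \psi \approx \varphi$ is ``definitional'' still tacitly uses reflexivity $\eqeffEl A {\varphi(u)}{\varphi(u)}$ from stability of $\varphi$ --- both are harmless.
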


\end{toappendix}

\begin{toappendix}
[
\begin{proof}
Corollary of Theorem \ref{th:SubProperties}. See Appendix \thisappendix.
\end{proof}
]

\begin{proof}

By stability of $f$ and $prop$, $B$ is well defined.

If $\eqeffEl B u {u'}$, then by Theorem \ref{th:SubProperties} we have
$\eqeffEl A u {u'}$.
Hence $\eqeffEl A {i(u)} {i(u')}$.
Therefore $i : El(B) \rightarrow El(A)$.

If $\eqeffEl B u u$ then by Theorem \ref{th:SubProperties} we have
$prop(f(u))$.
So, $prop(f(i(u)))$.
Hence $\vDash \forall u \in El(B), prop(f(i(u)))$.
Therefore, by Theorem \ref{th:Prop}, $f \circ i \approx True \circ *$.

Assume $X$ is an effective set and $\varphi : El(X) \rightarrow El(A)$ such that
$f \circ \varphi \approx True \circ *$.
Then $\vDash \forall u \in El(X), prop(f(\varphi(u)))$.
We also have $\varphi \in |El(X)| \rightarrow |El(B)|$.

We choose $\psi \eqdef \phi$.
\begin{itemize}

\item We prove $\psi : El(X) \rightarrow El(B)$:
If $\eqeffEl X u {u'}$ then $\eqeffEl A {\varphi(u)} {\varphi(u')}$.
We also have $\eqeffEl X u u$, so by hypothesis we have $prop(f(\varphi(u)))$.
Hence $\eqeffEl B {\varphi(u)} {\varphi(u')}$.
Therefore $\psi : El(X) \rightarrow El(B)$.

\item If $\eqeffEl X u u$ then $\eqeffEl A {\varphi(u)} {\varphi(u)}$.
Hence $\eqeffEl A {i(\varphi(u))} {\varphi(u)}$.
Therefore $i \circ \varphi \approx \varphi$.

\item Assume $\psi' : El(X) \rightarrow El(B)$ such that
$i \circ \psi' \approx \varphi$.
If $\eqeffEl X u u$, then $\eqeffEl A {i(\psi'(u))} {\varphi(u)}$.
So $\eqeffEl A {\psi'(u)} {\varphi(u)}$.
We also have $\eqeffEl B {\psi'(u)} {\psi'(u)}$.
So $prop(f(\psi'(u)))$.
Hence $\eqeffEl B {\psi'(u)} {\varphi(u)}$.
Therefore $\psi' \approx \varphi$.

\end{itemize}

Then we can conclude by choosing $\varphi$ as $\psi$.

\end{proof}

\end{toappendix}

\subsubsection{Monomorphisms}

We use a definition of the sub-object classifier that uses monomorphisms. So
it is useful to have a relation between the notion of monomorphisms and the
notion of injectivity (as in the category of sets) defined in the high level
of the framework as follows:

\begin{definition}[Injective]

Assume $A$ and $B$ effective sets and $f : El(A) \rightarrow El(B)$.

We say that $f$ is injective if and only if:

$\vDash \forall u, u' \in El(A), \eqeffEl B {f(u)} {f(u')} \Rightarrow
\eqeffEl A u {u'}$

\end{definition}

\begin{toappendix}

\appendixbeyond 0

\begin{theorem}[Equivalence between monorphisms and injective functions]\strut

\label{th:Mono}

Assume $A$ and $B$ effective sets and $m : El(A) \rightarrow El(B)$.

$m$ is a monomorphism in \textbf{C} if and only if $m$ is injective.

\end{theorem}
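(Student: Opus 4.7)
The plan is to prove the equivalence by treating the two implications separately, with the standard categorical ``constant morphism from the terminal object'' trick used in the non-trivial direction.

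For the easy direction, assume that $m$ is injective. To see it is a monomorphism, fix an effective set $X$ and stable functions $f, g : El(X) \rightarrow El(A)$ with $m \circ f \approx m \circ g$; I want $f \approx g$. Take any $u \in |El(X)|$ with $\eqeffEl X u u$. The hypothesis gives $\eqeffEl B {m(f(u))} {m(g(u))}$, and the stability of $f$ and $g$ gives $\eqeffEl A {f(u)} {f(u)}$ and $\eqeffEl A {g(u)} {g(u)}$, so $f(u)$ and $g(u)$ lie in $El(A)$ in the quantified sense. Injectivity of $m$ then yields $\eqeffEl A {f(u)} {g(u)}$ directly, so $f \approx g$.

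For the converse, assume $m$ is a monomorphism. To establish $\vDash \forall u, u' \in El(A), \eqeffEl B {m(u)} {m(u')} \Rightarrow \eqeffEl A u {u'}$, I unfold the $El(A)$-quantifiers and fix $u, u' \in |El(A)|$ with $\eqeffEl A u u$, $\eqeffEl A {u'} {u'}$, and $\eqeffEl B {m(u)} {m(u')}$. Using the final object $1$ from Section \ref{sec:Final}, define constant functions $f, g \in |El(1)| \rightarrow |El(A)|$ by $f(v) \eqdef u$ and $g(v) \eqdef u'$. Stability of these constant maps as morphisms $El(1) \rightarrow El(A)$ reduces exactly to $\eqeffEl A u u$ and $\eqeffEl A {u'} {u'}$, which are in hand. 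For any $v \in El(1)$ we have $(m \circ f)(v) = m(u)$ and $(m \circ g)(v) = m(u')$, so $m \circ f \approx m \circ g$ is nothing but the assumption $\eqeffEl B {m(u)} {m(u')}$. The monomorphism hypothesis on $m$ then gives $f \approx g$, which unfolds to $\vDash \forall v \in El(1), \eqeffEl A u {u'}$. Instantiating at $<>$, using $\vDash \eqeffEl 1 {<>} {<>}$ from Theorem \ref{th:FinalProperties}, yields $\eqeffEl A u {u'}$ as required.

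The only delicate point is verifying that $f$ and $g$ really are morphisms of $\mathbf{C}$: their stability is precisely the assumption that $u, u'$ are elements of $El(A)$ (\ie $\eqeffEl A u u$ and $\eqeffEl A {u'} {u'}$), which is automatic once one has opened the $El(A)$-quantifiers. Apart from this, both directions are direct unfoldings of the definitions of $\approx$, monomorphism, and injectivity, with no appeal to the extension theorem being necessary.
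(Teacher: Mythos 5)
Your first direction (injective implies mono) is exactly the paper's argument and is fine. The problem is the converse. The notion of injectivity in this framework is an \emph{internal} statement: $\vDash \forall u, u' \in El(A),\ \eqeffEl B {m(u)} {m(u')} \Rightarrow \eqeffEl A u {u'}$, and under the realisability interpretation ($\forall$ as intersection, $\Rightarrow$ as a partial recursive function) this demands a \emph{single} code that, uniformly in $u,u'$, transforms realisers of $\eqeffEl A u u$, $\eqeffEl A {u'}{u'}$ and $\eqeffEl B {m(u)}{m(u')}$ into a realiser of $\eqeffEl A u {u'}$. Your argument fixes $u,u'$ externally, builds the two constant morphisms $El(1)\rightarrow El(A)$, and invokes the monomorphism property once \emph{per pair} $(u,u')$. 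The monomorphism hypothesis is itself only an external implication between $\approx$-statements: it tells you that for each such pair the set $\eqeffEl A u {u'}$ is nonempty, but it gives you no realiser computed from the input realisers, and hence no way to inhabit the intersection over all $u,u'$. In other words, you prove pointwise (external) injectivity, not the internal $\vDash$-statement the theorem asserts; the constant-morphisms-from-$1$ trick from the category of sets does not transfer, essentially because $1$ is not a generator here and global elements carry no uniformity.

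The paper avoids this by applying the monomorphism property \emph{once}, to a generic situation: it forms the effective set $X \eqdef \{ w \in A \times A \mid \eqeffEl B {m(p_1(w))} {m(p_2(w))}\}$ (using the subset construction and Theorem \ref{th:SubProperties}) and the two morphisms $f \eqdef p_1$, $g \eqdef p_2 : El(X) \rightarrow El(A)$, checks $m \circ f \approx m \circ g$, and concludes $f \approx g$, i.e.\ a single realiser of $\forall w \in El(X), \eqeffEl A {p_1(w)} {p_2(w)}$. Given arbitrary $u,u'$ with realised hypotheses, one then forms $<u,u'>$, shows (uniformly, via the product and subset properties) that it lies in $El(X)$, and applies that one realiser to obtain $\eqeffEl A u {u'}$ uniformly. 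To repair your proof you would need to replace the family of constant morphisms from $1$ by this single ``generic pair'' object, or some equivalent device producing one realiser that works for all $u,u'$.
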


\end{toappendix}

\begin{toappendix}
[
\begin{proof}

Straightforward: We adapt the proof of the category of sets.
See Appendix~\thisappendix.
\end{proof}
]

\begin{proof}
\begin{itemize}

\item Assume $m$ is injective.
Let $X$ an effective set and $f, g : El(X) \rightarrow El(A)$ such that
$m \circ f \approx m \circ g$.
If $\eqeffEl X u u$, then $\eqeffEl A {f(u)} {f(u)}$,
$\eqeffEl A {g(u)} {g(u)}$ and $\eqeffEl B {m(f(u))} {m(g(u))}$.
By injectivity, $\eqeffEl A {f(u)} {g(u)}$.
Therefore $f \approx g$.

\item Assume $m$ is a monomorphism.
We construct the effective set $X$ defined by: \\
$X \eqdef \{ w \in A \times A \mid \eqeffEl B {m(p_1(w))} {m(p_2(w))}$.
By stability of $m$, $p_1$ and $p_2$, $X$ is well defined.

We also construct $f, g \in |El(X)| \rightarrow |El(A)|$ defined by:
$f \eqdef p_1$ and $g \eqdef p_2$.
\begin{itemize}

\item If $\eqeffEl X w {w'}$, then $\eqeffEl {A \times A} w {w'}$ by
Theorem \ref{th:SubProperties}.
So we have \\ $\eqeffEl A {p_1(w)} {p_1(w')}$.
Hence $\eqeffEl A {f(w)} {f(w')}$.
Therefore $f : El(X) \rightarrow El(A)$.

\item With the same kind of proof we also have $g : El(X) \rightarrow El(A)$.

\item If $\eqeffEl X w w$, then $\eqeffEl B {m(p_1(w))} {m(p_2(w))}$ by
Theorem \ref{th:SubProperties}.
Hence \\ $\eqeffEl B {m(f(w))} {m(g(w))}$.
Therefore, $m \circ f \approx m \circ g$.

\end{itemize}
By definition of monomorphism, $f \approx g$ (We have defined
$f,g : X \rightarrow A$ so that we do not confuse them with
$p_1, p_2 : El(A \times A) \rightarrow El(A)$.).

If $\eqeffEl A u u$, $\eqeffEl A v v$ and $\eqeffEl B {m(u)} {m(v)}$:
Then $\eqeffEl {A \times A} {<u, v>} {<u, v>}$ and
$\eqeffEl A {p_1(<u, v>)} u$.
So, $\eqeffEl B {m(p_1(<u, v>))} {m(u)}$.
We can also prove that \\ $\eqeffEl B {m(p_2(<u, v>))} {m(v)}$.
Therefore $\eqeffEl B {m(p_1(<u, v>)} {m(p_2(<u, v>))}$.
By Theorem \ref{th:SubProperties} we have $\eqeffEl X {<u, v>} {<u, v>}$.
Hence, $\eqeffEl A {f(<u, v>)} {g(<u, v>)}$ since $f \approx g$.
So, $\eqeffEl A {p_1(<u, v>)} {p_2(<u, v>)}$.
Hence, $\eqeff A u v$.

Therefore $m$ is injective.

\end{itemize}

\end{proof}

\end{toappendix}

We can notice that, to prove this theorem, we did not had to add any new
construction to the framework.

\subsubsection{Axiom of unique choice}

In this section we construct $\chi_m$.

One of the advantages of this framework as a model of high order logic is
to have the axiom of unique choice.

To express it we first define the operator of description:

\begin{definition}[Operator of description]\strut

Assume $A$ is an effective set and $F \in |El(A)| \rightarrow Prop$\quad
(\ie $F \in |El(El(A))|$).

We construct $d(F) \in |El(A)|$ defined by: for all $x \in |A|$,
$d(F)(x) \eqdef F(el_A(x))$.

$d(F)$ can be read as "the only $u$ such that $F(u)$".

\end{definition}

Before proving the axiom of unique choice:

\begin{theorem}[Properties of $d(F)$]

\label{th:AUC}

Assume $A$ is an effective set and then: \\
$\vDash \forall F \in El(El(A)), F(d(F))$
which is equivalent to:
\[\begin{array}{l}
\vDash \forall F \in |El(A)| \rightarrow Prop,\\
\strut\qquad\qquad(\forall u, u' \in |El(A)|, F(u) \Rightarrow \eqeffEl A u {u'}
\Rightarrow F(u'))\\
\strut\qquad\qquad \Rightarrow (\forall u, u' \in |El(A)|, F(u) \Rightarrow
F(u') \Rightarrow \eqeffEl A u {u'}) \\
\strut\qquad\qquad \Rightarrow
(\exists u \in |El(A)|, F(u))\\
\strut\qquad\qquad  \Rightarrow F(d(F))
\end{array}
\]
\end{theorem}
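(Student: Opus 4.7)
The plan is to work with the unfolded form of the theorem: assuming $F \in |El(A)| \to Prop$ satisfies the stability, unicity, and existence clauses spelled out in the second formulation, produce $F(d(F))$. The overall strategy is to obtain a witness $u_0 \in |El(A)|$ realising $F$, establish that $u_0$ and $d(F)$ are equal in $El(A)$, and then invoke stability of $F$ to transport $F(u_0)$ to $F(d(F))$.

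First, I would apply the existence premise to pick some $u_0 \in |El(A)|$ with $F(u_0)$, and then use the unicity premise on the pair $F(u_0)$, $F(u_0)$ to deduce $\eqeffEl A {u_0} {u_0}$. Next, I would verify that $d(F) \in |El(A)|$ (i.e.\ $\eqeffEl A {d(F)} {d(F)}$) by checking each clause of the definition of $El(A)$: stability and unicity of $d(F)$ reduce, via Lemma~\ref{lem:LowLevelHighLevel}.4 (which exchanges $\eqeff A x y$ with $\eqeffEl A {el_A(x)} {el_A(y)}$), to the stability and unicity of $F$; existence uses the witness $u_0$ together with Lemma~\ref{lem:LowLevelHighLevel}.5 to find $x$ such that $\eqeffEl A {u_0} {el_A(x)}$, whence stability of $F$ gives $F(el_A(x))$, that is $d(F)(x)$; the equivalence clause is trivial since $d(F)=d(F)$. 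Having both $u_0$ and $d(F)$ in $El(A)$, I would then apply Lemma~\ref{lem:LowLevelHighLevel}.2 to conclude $\eqeffEl A {u_0} {d(F)}$ from the single-sided implication $u_0(x) \Rightarrow d(F)(x)$: assuming $u_0(x)$, Lemma~\ref{lem:LowLevelHighLevel}.3 gives $\eqeffEl A {u_0} {el_A(x)}$, and then stability of $F$ applied to $F(u_0)$ yields $F(el_A(x))$, which is $d(F)(x)$. Finally, stability of $F$ applied to $F(u_0)$ and $\eqeffEl A {u_0} {d(F)}$ delivers $F(d(F))$.

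Once the unfolded statement is proved, the first formulation $\vDash \forall F \in El(El(A)), F(d(F))$ follows immediately, because unfolding the quantification over $El(El(A))$ yields exactly the stability, unicity and existence premises for $F$. The main obstacle is essentially bookkeeping, namely passing between $u(x)$, $\eqeffEl A u {el_A(x)}$ and $\eqeff A x y$ via the appropriate clauses of Lemma~\ref{lem:LowLevelHighLevel}, rather than any conceptual novelty; unlike most earlier theorems in this section, no appeal to Theorem~\ref{th:ExtensionTruth} is needed, because the predicate is already quantified directly at the high level.
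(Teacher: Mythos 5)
Your proof is correct, and its skeleton is the same as the paper's: pick a witness $u_0$ with $F(u_0)$, derive $\eqeffEl A {u_0} {u_0}$ from unicity, show that $u_0$ and $d(F)$ are equal in $El(A)$, and transport $F(u_0)$ along this equality by stability. The only real divergence is in how the equality $\eqeffEl A {u_0} {d(F)}$ is obtained. The paper proves the two-sided pointwise equivalence $u_0(a) \Leftrightarrow d(F)(a)$ (the converse direction coming from unicity of $F$ together with Lemma~\ref{lem:LowLevelHighLevel}.3) and then applies Lemma~\ref{lem:LowLevelHighLevel}.1, which only requires $\eqeffEl A {u_0} {u_0}$; it never needs to establish $\eqeffEl A {d(F)} {d(F)}$ separately. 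You instead invoke Lemma~\ref{lem:LowLevelHighLevel}.2, which is stated with quantification over $El(A)$ on both arguments, and therefore you must first verify $\eqeffEl A {d(F)} {d(F)}$ by checking all four clauses (stability and unicity of $d(F)$ via Lemma~\ref{lem:LowLevelHighLevel}.4, existence via Lemma~\ref{lem:LowLevelHighLevel}.5 and the witness, equivalence trivially); in exchange you only need the one-sided inclusion $u_0(x) \Rightarrow d(F)(x)$. Both routes go through; yours is slightly longer but makes explicit that $d(F)$ is itself a well-formed element of $El(A)$, a fact the paper only gets implicitly afterwards, while the paper's choice of Lemma~\ref{lem:LowLevelHighLevel}.1 is the more economical bookkeeping. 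Your closing remarks — that the first formulation is just the unfolding of quantification over $El(El(A))$, and that no appeal to Theorem~\ref{th:ExtensionTruth} is needed here — also match the paper.
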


\begin{proof}

Assume we have 
\begin{itemize}
\item $\forall u, u' \in |El(A)|, F(u) \Rightarrow
\eqeffEl A u {u'} \Rightarrow F(u')$,
\item $\forall u, u' \in |El(A)|, F(u) \Rightarrow F(u') \Rightarrow
\eqeffEl A u {u'}$, and
\item
$\exists u \in |El(A)|, F(u)$.
\end{itemize}

So there exists $u \in |El(A)|$ such that $F(u)$.
By unicity of $F$ (with $u' = u$), $\eqeffEl A u u$.
\begin{itemize}

\item If $u(a)$ then by Lemma \ref{lem:LowLevelHighLevel}.3 we have
$\eqeffEl A u {el_A(a)}$.
By stability of $F$, $F(el_A(a))$.
Therefore $d(F)(a)$ by definition of $d(F)$.

\item If $d(F)(a)$, then by definition of $d(F)$ we have $F(el_A(a))$.
By unicity of $F$, $\eqeffEl A u {el_A(a)}$.
By Lemma \ref{lem:LowLevelHighLevel}.3, we have $u(a)$.

\end{itemize}
By Lemma \ref{lem:LowLevelHighLevel}.1, $\eqeffEl A u  {d(F)}$.
Therefore, by stability of $F$, we have $F(d(F))$.
\end{proof}
Theorem~\ref{th:AUC} is necessary to prove the last hypothesis needed to have a
topos.
\begin{definition}[Caracteristic morphism]\strut

Assume $A$ and $B$ is effective sets, and $m : El(A) \rightarrow El(B)$ a
monomorphism.\\
Then with define $\chi_m \in |El(B)| \rightarrow |El(\Omega)|$ defined by:
$\chi_m(v) \eqdef el_\Omega(\exists u \in El(A), \eqeffEl B {m(u)} v)$
\end{definition}

\begin{toappendix}

\appendixbeyond 0

\begin{theorem}[Monomorphisms of \textbf{C} have a caracteristic morphism]\strut

\label{th:Carac}

Assume $A$ and $B$ is effective sets, and $m : El(A) \rightarrow El(B)$ a
monomorphism.

Then we have $\chi_m : El(B) \rightarrow El(\Omega)$.

And it is the only morphism (modulo $\approx$) from $B$ to $\Omega$ in
\textbf{C} such that $m$ is the pullback of $True$ along $\chi_m$.
\end{theorem}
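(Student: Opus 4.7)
The plan is to establish three things in sequence: stability of $\chi_m$ as a function $El(B) \to El(\Omega)$, the pullback square (commutation plus universal property), and uniqueness of $\chi_m$ modulo $\approx$. I would first verify stability by reducing via Theorem~\ref{th:PropProperties} to the equivalence of the underlying propositions: if $\eqeffEl B v {v'}$, then the existentials $\exists u \in El(A), \eqeffEl B {m(u)} v$ and $\exists u \in El(A), \eqeffEl B {m(u)} {v'}$ are equivalent by transitivity of the equality on $B$. The commutation $\chi_m \circ m \approx True \circ *$ is then handled via Theorem~\ref{th:Prop}: it suffices to show $prop(\chi_m(m(u)))$ for each $u \in El(A)$, and this existential is witnessed by $u$ itself.

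The main work, and the main obstacle, is the universal property: given any effective set $X$ and $\varphi : El(X) \to El(B)$ with $\chi_m \circ \varphi \approx True \circ *$, I must construct a stable factorisation $\psi : El(X) \to El(A)$ of $\varphi$ through $m$. From the hypothesis, for each $x \in El(X)$ the element $u \in El(A)$ with $\eqeffEl B {m(u)} {\varphi(x)}$ exists as an inhabited proposition, and it is internally unique because $m$ is injective by Theorem~\ref{th:Mono}. To extract an actual function, I would invoke the description operator $d$: set $\psi(x) := d(F_x)$ where $F_x(u) := \eqeffEl B {m(u)} {\varphi(x)}$. The crux is verifying that $F_x$ satisfies the three internal hypotheses of Theorem~\ref{th:AUC}, namely stability and uniqueness of $u$ (using stability and injectivity of $m$) and existence (from the hypothesis on $\varphi$). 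Theorem~\ref{th:AUC} then delivers $F_x(d(F_x))$, which is exactly $\eqeffEl B {m(\psi(x))} {\varphi(x)}$, so $m \circ \psi \approx \varphi$. Stability of $\psi$ follows by internal reasoning on $d$ together with stability of $\varphi$ and of $m$, and any other $\psi'$ with $m \circ \psi' \approx \varphi$ must agree with $\psi$ pointwise by injectivity of $m$.

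For uniqueness of $\chi_m$, suppose $\chi' : El(B) \to El(\Omega)$ is another morphism such that $m$ is the pullback of $True$ along $\chi'$. I would show $\chi_m \approx \chi'$ by proving $prop(\chi_m(v)) \Leftrightarrow prop(\chi'(v))$ for each $v \in El(B)$ and invoking Theorem~\ref{th:PropProperties}. The forward direction: if $prop(\chi_m(v))$ then there exists $u \in El(A)$ with $\eqeffEl B {m(u)} v$, so by stability of $\chi'$ and the commutation $\chi' \circ m \approx True \circ *$ we obtain $\chi'(v) \approx \chi'(m(u)) \approx el_\Omega(\top)$, hence $prop(\chi'(v))$. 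The backward direction exploits the universal property of $\chi'$: given $prop(\chi'(v))$ with $\eqeffEl B v v$, apply the universal property of the pullback with $X := 1$ and the constant map $\varphi(w) := v$ to obtain $\psi : El(1) \to El(A)$ with $\eqeffEl B {m(\psi(<>))} v$, providing exactly the witness needed for $prop(\chi_m(v))$.
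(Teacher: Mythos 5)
Your overall skeleton matches the paper's proof (stability of $\chi_m$, commutation via Theorem~\ref{th:Prop}, factorisation through $m$ via the description operator and Theorem~\ref{th:AUC}, uniqueness via a $prop$-level equivalence and Theorem~\ref{th:PropProperties}), but two steps fail as written. The smaller one: your $F_x(u) := \eqeffEl B {m(u)} {\varphi(x)}$ omits the conjunct $\eqeffEl A u u$. The unicity hypothesis of Theorem~\ref{th:AUC} quantifies over \emph{arbitrary} $u, u' \in |El(A)|$, whereas injectivity (Theorem~\ref{th:Mono}) only yields $\eqeffEl A u {u'}$ when $\eqeffEl A u u$ and $\eqeffEl A {u'} {u'}$ are available; since $m$ is merely stable, $\eqeffEl B {m(u)} {m(u)}$ does not give $\eqeffEl A u u$, so unicity of your $F_x$ cannot be verified. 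The paper's $F_w(u) := \eqeffEl A u u \wedge \eqeffEl B {m(u)} {\varphi(w)}$ is exactly the needed repair, and with it your verification goes through.

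The more serious gap is in the backward direction of the uniqueness argument. You instantiate the pullback property of $\chi'$ at $X := 1$ with the constant map at $v$, separately for each $v$. That only gives, for each externally fixed $v$ with $\vDash \eqeffEl B v v$ and $\vDash prop(\chi'(v))$, external inhabitation of $prop(\chi_m(v))$. What is needed to conclude $\chi_m \approx \chi'$ via Theorem~\ref{th:PropProperties}.4 is the single internal statement $\vDash \forall v \in El(B),\ prop(\chi'(v)) \Rightarrow prop(\chi_m(v))$, i.e.\ one realizer working uniformly in $v$ and in the realizer of $prop(\chi'(v))$. The factorisations $\psi_v : El(1) \rightarrow El(A)$ you obtain for different $v$ come from unrelated external existence claims, so no such uniform realizer is extracted; and since \textbf{C} is not well-pointed, reasoning through global elements is in general insufficient here. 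The paper avoids this by applying the universal property \emph{once}, to the subobject $C := \{ v \in B \mid prop(f(v)) \}$ provided by Theorem~\ref{th:Sub} together with its inclusion $i : El(C) \rightarrow El(B)$: the single factorisation $m \circ \varphi \approx i$ (with the stability realizer of $\varphi$) supplies the uniform witness $\varphi(v)$ for every $v$ with $prop(f(v))$. Rerouting your backward direction through that one application of the pullback property closes the gap; the rest of your argument then agrees with the paper's.
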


\end{toappendix}

\begin{toappendix}
[
\begin{proof}
We adapt the proof of the category of sets.
See Appendix~\thisappendix.\\
In particular, from a $\varphi : El(X) \rightarrow El(B)$ such that
$\chi_m \circ \varphi \approx True \circ *$ we use Theorem~\ref{th:AUC} to
construct $\psi : El(X) \rightarrow El(A)$.
\end{proof}

]

\begin{proof}

By Theorem \ref{th:Mono}, $m$ is injective.
We also have: $prop(\chi_m(v))$ if and only if there exists
$u \in |El(A)|$ such that $\eqeffEl A u u$ and $\eqeffEl B {m(u)} v$
by Theorem~\ref{th:PropProperties}.3.

Stability: If $\eqeffEl B v {v'}$ then:
If there exists $u \in El(A)$ such that $\eqeffEl A u u$ and
$\eqeffEl B {m(u)} v$, then $\eqeffEl B {m(u)} {v'}$.
Hence there exists $u \in |El(A)|$, such that $\eqeffEl A u u$ and
$\eqeffEl B {m(u)} {v'}$.
We can also prove that if
$\exists u \in El(A), \eqeffEl B {m(u)} {v'}$ then
$\exists u \in El(A), \eqeffEl B {m(u)} v$.
Therefore $\exists u \in El(A), \eqeffEl B {m(u)} v \Leftrightarrow
\exists u \in El(A), \eqeffEl B {m(u)} {v'}$.
Hence $\eqeffEl \Omega {\chi_m(v)} {\chi_m(v')}$.
Therefore $\chi_m : El(B) \rightarrow El(\Omega)$.

$m$ pullback: If $\eqeffEl A u u$ then $\eqeffEl B {m(u)} {m(u)}$.
So, there exists $u' \in |El(A)|$, such that $\eqeffEl A {u'} {u'}$
and $\eqeffEl B {m(u')} {m(u)}$.
Hence $prop(\chi_m(m(u)))$.
Therefore $\chi_m \circ m \approx True \circ *$ by Theorem~\ref{th:Prop}.

Assume $X$ is an effective set and $\varphi : El(X) \rightarrow El(A)$ such that
$\chi_m \circ \varphi \approx True \circ *$.
Then $\vDash \forall u \in El(A), prop(\chi_m(\varphi(u)))$.
For every $w \in |El(X)|$ we construct $F_w \in |El(A)| \rightarrow Prop$
defined by: $F_w(u) \eqdef \eqeffEl A u u \wedge \eqeffEl B {m(u)}
{\varphi(w)}$.
We construct $\psi \in |El(X)| \rightarrow |El(A)|$ defined by
$\psi(w) \eqdef d(F_w)$.

\begin{itemize}

\item Stability of $\psi$: If $\eqeffEl X w {w'}$:

Then $\eqeffEl X w w$.
So $prop(\chi_m(\varphi(w)))$.
Hence, there exists $u \in |El(A)|$ such that $\eqeffEl A u u$ and
$\eqeffEl B {m(u)} {\varphi(w)}$.
Therefore, there exists $u \in |El(A)|$ such that $F_w(u)$.

If $F_w(u)$ and $\eqeffEl A u {u'}$ then $\eqeffEl A {u'} {u'}$,
$\eqeffEl B {m(u)} {\varphi(w)}$ and $\eqeffEl B {m(u)} {m(u')}$.
Hence, $\eqeffEl B {m(u')} {\varphi(v)}$.
Therefore, $F_w(u')$.

If $F_w(u)$ and $F_w(u')$ then $\eqeffEl B {m(u)} {\varphi(v)}$ and
$\eqeffEl B {m(u')} {\varphi(v)}$.
So, $\eqeffEl B {m(u)} {m(u')}$.
We also have $\eqeffEl A u u$ and $\eqeffEl A {u'} {u'}$.
By injectivity, $\eqeffEl A u {u'}$.

Therefore, by Theorem \ref{th:AUC}, we have $F_w(d(F_w))$.
So, $F_w(\psi(w))$.

We can also prove that $F_{w'}(\psi(w'))$.
So, $\eqeffEl A {\psi(w')} {\psi(w')} $ and
$\eqeffEl B {m(\psi(w'))}
{\varphi(w')}$.
We also have $\eqeffEl B {\varphi(w)} {\varphi(w')}$.
Hence, $\eqeffEl B {m(\psi(w'))} {\varphi(w)}$.
Therefore, $F_w(\psi(w'))$.

By unicity of $F_w$ we have $\eqeffEl A {\psi(w)} {\psi(w')}$.
Therefore $\psi : El(X) \rightarrow El(A)$.

\item Commutation: If $\eqeffEl X w w$ then we can prove that $F_w(\psi(w))$.
Hence, $\eqeffEl B {m(\psi(w))} {\varphi(w)}$.
Therefore $m \circ \psi \approx \varphi$.

\item Unicity: Assume $\psi' : El(X) \rightarrow El(A)$ such that
$m \circ \psi' \approx \varphi$.
If $\eqeffEl X w w$ then we can prove that $F_w(\psi(w))$ and the unicity
of $F_w$.
We also have $\eqeffEl B {m(\psi'(w))} {\varphi(w)}$.
So, $F_w(\psi'(w))$.
By unicity of $F_w$ we have $\eqeffEl A {\psi(w)} {\psi'(w)}$.
Therefore $\psi \approx \psi'$.

\end{itemize}

Therefore $m$ is the pullback of $True$ along $\chi_m$.

Assume $f : El(B) \rightarrow El(\Omega)$ such that $m$ is the pullback of
$True$ along $f$.
Theorem \ref{th:Sub} gives us $C \eqdef \{ v \in B \mid prop(f(v))\}$ an
effective set and $i : El(C) \rightarrow El(B)$.
$f \circ i \approx True \circ *$ so there exists
$\varphi : El(C) \rightarrow El(A)$ such that $m \circ \varphi \approx i$.

If $\eqeffEl B v v$ then $\eqeffEl \Omega {\chi_m(v)} {\chi_m(v)}$,
$\eqeffEl \Omega {f(v)} {f(v}$ and:

\begin{itemize}

\item If $prop(\chi_m(v))$ then there exists $u \in |El(A)|$, such that
$\eqeffEl A u u$ and $\eqeffEl B {m(u)} v$.
Therefore, $\eqeffEl \Omega {f(m(u))} {f(v)}$.
Since $f \circ m \approx True \circ *$ we also have
$prop(f(m(u)))$.
Hence, $prop(f(v))$.

\item If $prop(f(v))$ then $\eqeffEl C v v$ by Theorem~\ref{th:SubProperties}.
So, $\eqeffEl A {\varphi(v)} {\varphi(v)}$ and
$\eqeffEl B {m(\varphi(v))} {i(v)}$.
Hence there exists $u \in |El(A)|$, such that $\eqeffEl A u u$ and
$\eqeffEl B {m(u)} v$.
Therefore $prop(\chi_m(v))$.

\end{itemize}

By Theorem \ref{th:PropProperties}.4 we have
$\eqeffEl \Omega {\chi_m(v)} {f(v)}$.
Therefore $\chi_m \approx f$.

\end{proof}

\end{toappendix}

By combining all the categorical results of this part we can finally conclude.

\begin{theorem}

\textbf{C} is a topos and $\Omega$ with $True$ is the sub-object classsifier.

\end{theorem}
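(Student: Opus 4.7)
The plan is to assemble the already-established theorems into the two claims of the statement. For the topos part, recall that a topos is defined as a Cartesian closed category equipped with a sub-object classifier. Cartesian closure of $\textbf{C}$ follows by combining Theorem \ref{th:Final} (existence of the final object $1$), Theorem \ref{th:Product} (existence of products $A \times B$ with projections $p_1, p_2$), and Theorem \ref{th:Closure} (existence of power objects $A \Rrightarrow B$ with evaluation morphism $ev$). So all that remains is to verify that $\Omega$ together with $True$ satisfies the three defining conditions of a sub-object classifier.

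For the first condition, $\Omega$ is an effective set by definition, and $True : El(1) \to El(\Omega)$ was shown to be a well-defined morphism in Theorem \ref{th:Prop}. For the second condition, given any effective set $A$ and any morphism $f : El(A) \to El(\Omega)$, Theorem \ref{th:Sub} provides the effective set $B \eqdef \{u \in A \mid prop(f(u))\}$ together with the injection $i : El(B) \to El(A)$, and shows that $B$ with $i$ is precisely the pullback of $True$ along $f$; so pullbacks of $True$ exist in $\textbf{C}$.

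For the third condition, let $m : El(A) \to El(B)$ be a monomorphism in $\textbf{C}$. By Theorem \ref{th:Mono}, $m$ is injective in the high-level sense, so the characteristic morphism $\chi_m : El(B) \to El(\Omega)$ defined by $\chi_m(v) \eqdef el_\Omega(\exists u \in El(A), \eqeffEl B {m(u)} v)$ is well-defined. Theorem \ref{th:Carac} then establishes both that $m$ is the pullback of $True$ along $\chi_m$ and that $\chi_m$ is the unique such morphism up to $\approx$. This completes the verification that $\Omega$ with $True$ is a sub-object classifier, and hence that $\textbf{C}$ is a topos.

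Since every nontrivial ingredient has already been proved, I do not expect any genuine obstacle here; the only care needed is to check that the definition of sub-object classifier we gave really is matched point-by-point by the theorems above (final object, product, closure, pullbacks of $True$, and characteristic morphisms of monomorphisms), so the proof is essentially a short citation chain of the form: Cartesian closure by Theorems \ref{th:Final}, \ref{th:Product}, \ref{th:Closure}; sub-object classifier by Theorems \ref{th:Prop}, \ref{th:Sub}, \ref{th:Mono}, \ref{th:Carac}; conclude.
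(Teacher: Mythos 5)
Your proposal is correct and follows essentially the same route as the paper, which likewise concludes by citing Theorem \ref{th:Closure} for Cartesian closure (that theorem already subsumes \ref{th:Final} and \ref{th:Product}) together with Theorems \ref{th:Sub} and \ref{th:Carac} for the sub-object classifier. Your version is merely a more explicit spelling-out of the same citation chain.
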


\begin{proof}

By Theorem~\ref{th:Closure}, $C$ is Cartesian Closed.
By Theorems~\ref{th:Sub} and \ref{th:Carac}, $\Omega$ with $True$ is the
sub-object classifier in \textbf{C}
Hence \textbf{C} is a topos.

\end{proof}

\subsection{Natural numbers}

\label{sec:Nat}

The main advantage of the effective topos over the category of finite sets is
that it has an object of natural integers.

Let $\Nat$ be the set of natural integers, with $0 \in \Nat$ and
$s \in \Nat \rightarrow \Nat$ the successor function.

\begin{definition}[Equality in $\Nat$]

We construct $E \in (\Nat \times \Nat) \rightarrow Prop$ defined by:
\[\begin{array}{llll}

E(n, n) & = & \{ n \} & \\
E(n, m) & = & \emptyset & n \neq m
\end{array}
\]
\end{definition}

\begin{toappendix}

\appendixbeyond 0

\begin{lemma}[Properties of E]\strut
\label{lem:NProperties}
\begin{enumerate}

\item $\vDash E(0, 0)$

\item $\vDash \forall x, y \in \Nat, E(x, y) \Rightarrow E(s(x), s(y))$

\item $\vDash \forall x, y \in \Nat, E(x, y) \Rightarrow E(y, x)$

\item $\vDash \forall x, y, z \in \Nat, E(x, y) \Rightarrow E(y, z) \Rightarrow
E(x, z)$

\item $\vDash \forall x, y \in \Nat, E(s(x), s(y)) \Rightarrow E(x, y)$

\item $\vDash \forall x \in \Nat, E(s(x), 0) \Rightarrow \bot$

\item For all $\vDash \forall P \in \Nat \rightarrow Prop, P(0) \Rightarrow
(\forall x \in \Nat, P(x) \Rightarrow P(s(x))) \Rightarrow \\
\forall x \in \Nat, E(x, x) \Rightarrow P(x)$

\end{enumerate}

\end{lemma}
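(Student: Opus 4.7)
The plan is to exhibit, for each item, an explicit partial recursive index realizing the claim. Throughout I exploit the fact that $E(n,m) = \{n\}$ when $n = m$ and $E(n,m) = \emptyset$ otherwise: whenever the antecedent of an implication is empty, any index realizes the implication vacuously; when it is nonempty, it is a singleton, so the realizer only has to behave correctly on one distinguished input.

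Items 1 through 6 are then routine. For (1), $0 \in E(0,0)$. For (2), the successor function $s$ is a realizer, since any $n \in E(x,y)$ forces $n = x = y$, and then $s(n) = s(x) \in E(s(x), s(y))$. For (3), the identity realizes symmetry: $n \in E(x,y)$ implies $n = x = y$, so $n \in E(y,x)$. For (4), using the $s$-$m$-$n$ theorem I produce an index whose application to any $m$ returns a fixed index of the identity; when $m \in E(x,y)$ and $n \in E(y,z)$, we have $x = y = z$ so the identity applied to $n$ lies in $E(x,z)$. The predecessor function realizes (5), since $n \in E(s(x), s(y))$ forces $n = s(x)$ with $x = y$. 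For (6), $E(s(x), 0) = \emptyset$ for every $x$, so any index realizes the implication vacuously.

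Item 7 is the main work. Given realizers $p_0 \in P(0)$ and $p_s \in \forall x \in \Nat, P(x) \Rightarrow P(s(x))$, I would define a primitive recursive function $r$ by $r(0) \eqdef p_0$ and $r(k+1) \eqdef \varphi_{\varphi_{p_s}(k)}(r(k))$, then verify by meta-induction on $x$ that $r(x) \in P(x)$: the base case is by definition, and the step case uses that $\varphi_{p_s}(k)$ is an index realizing $P(k) \Rightarrow P(s(k))$ applied to $r(k) \in P(k)$. An element of $\forall x \in \Nat, E(x,x) \Rightarrow P(x)$ is precisely an index $e'$ with $\varphi_{e'}(x) \in P(x)$ for every $x$, since the unique witness of $E(x,x)$ is $x$; hence an index of the function $x \mapsto r(x)$ is the required realizer, and by the $s$-$m$-$n$ theorem a single index $e$ encoding the whole construction uniformly in $p_0$ and $p_s$ exists. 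The main obstacle is precisely this step: carefully checking that the primitive recursion above can be performed uniformly at the level of G\"odel indices, independently of $P$, $p_0$, and $p_s$, which we settle by invoking standard properties of the chosen enumeration of partial recursive functions.
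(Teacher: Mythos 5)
Your overall approach coincides with the paper's: its proof consists exactly of exhibiting the realizers you name (the constant $0$, the successor, the identity, a projection, the predecessor, and for item 7 a function defined by a simple recursion), so items 1--6 match the intended argument and are correct, including the vacuous cases you point out.

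For item 7, however, the recursion you write down does not fit the paper's realizability semantics. Since $\forall x \in \Nat, P(x) \Rightarrow P(s(x))$ is by definition the intersection $\bigcap_{x\in\Nat}\bigl(P(x)\Rightarrow P(s(x))\bigr)$, a realizer $p_s$ is a single index such that $\varphi_{p_s}$ maps any element of $P(x)$ into $P(s(x))$, uniformly in $x$; it does \emph{not} take the numeral $x$ as a first argument. (You use exactly this uniform reading, correctly, when you note that a realizer of $\forall x \in \Nat, E(x,x)\Rightarrow P(x)$ is an index $e'$ with $\varphi_{e'}(x)\in P(x)$ for all $x$.) Consequently the clause $r(k+1)\eqdef \varphi_{\varphi_{p_s}(k)}(r(k))$ is unjustified: nothing guarantees that $\varphi_{p_s}(k)$ halts on the bare numeral $k$, let alone that its value is an index realizing $P(k)\Rightarrow P(s(k))$. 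The repair is immediate: set $r(k+1)\eqdef \varphi_{p_s}(r(k))$; then your meta-induction gives $r(k)\in P(k)$ for all $k$ (in particular each application of $\varphi_{p_s}$ halts when $p_0,p_s$ are genuine realizers), and an index for $x\mapsto r(x)$, obtained uniformly in $p_0$ and $p_s$ via the $s$-$m$-$n$ theorem as you describe, realizes the conclusion. With this one-line correction your argument agrees with the paper's proof.
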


\end{toappendix}

\begin{toappendix}
[
\begin{proof}
Straightforward but we have to explicitly manipulates proofs as programs.
See Appendix~\thisappendix.
\end{proof}
]

\begin{proof}

\begin{itemize}

\item The proof of 1 is just the constant $0$.

\item The proof of 2 is the successor function.

\item The proof of 3 and 6 is the identity function.

\item The proof of 4 is just a projection.

\item The proof of 5 is the predecessor function.

\item The proof of 7 is a function defined by a simple recursion.

\end{itemize}

\end{proof}

\end{toappendix}

These properties are useful if we want to adapt this work with an extension
of system $F_\omega$.

\begin{definition}[Effective set of natural integers]

We define the effective set $Nat$ as follows:
\begin{itemize}

\item $|Nat| \eqdef \Nat$

\item $|x =_{Nat} y| \eqdef E(x, y)$

\end{itemize}
By Lemma \ref{lem:NProperties}, we can prove that $Nat$ is an effective
set.
\end{definition}
We naturally have the high level constructions as a lift of the low level ones.
\begin{definition}[Definition of $Z$ and $S$]

We define $Z \in |El(Nat)|$ and $S' \in |Nat| \rightarrow |El(Nat)|$ as
follow: $Z \eqdef el_{Nat}(0)$ and $S'(x) \eqdef el_{Nat}(s(x))$.

By Lemma \ref{lem:NProperties}.2, we have $S' : Nat \rightarrow El(Nat)$.

We write $S : El(Nat) \rightarrow El(Nat)$ the extension of $S'$.
\end{definition}
We can then prove the stability and the axioms of Peano in the high level of
the framework.

\begin{toappendix}

\appendixbeyond 0

\begin{theorem}[Properties of Nat]\strut
\label{th:NatProperties}
\begin{itemize}

\item $\vDash \eqeffEl {Nat} Z Z$

\item $\vDash \forall u, v \in |El(Nat)|, \eqeffEl {Nat} u v \Rightarrow
\eqeffEl {Nat} {S(u)} {S(v)}$

\item $\vDash \forall u, v \in El(Nat), \eqeffEl {Nat} {S(u)} {S(v)} \Rightarrow
\eqeffEl {Nat} u v$

\item $\vDash \forall u \in El(Nat), \eqeffEl {Nat} {S(u)} Z \Rightarrow \bot$

\item For all  $P \in SP(El(Nat))$,\\ if $\vDash P(Z)$ and
$\vDash \forall u \in El(Nat), P(u) \Rightarrow P(S(u))$ then
$\vDash \forall u \in El(Nat), P(u)$

\end{itemize}

\end{theorem}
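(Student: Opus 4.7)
The plan is to treat each of the five items as a systematic lifting from the low-level statements of Lemma \ref{lem:NProperties} to the high-level via Theorem \ref{th:ExtensionTruth}, reusing throughout the basic dictionary of Lemma \ref{lem:LowLevelHighLevel}. The one piece of auxiliary data I would record first is the \emph{computation rule} for $S$: since $S$ is defined as the extension of $S'$ on its single argument, one has $\vDash \forall x \in Nat, \eqeffEl {Nat} {S(el_{Nat}(x))} {el_{Nat}(s(x))}$ as an immediate consequence of the definition of extension.

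Items 1 and 2 are essentially by construction. For item 1, $Z = el_{Nat}(0)$ and Lemma \ref{lem:NProperties}.1 gives $\eqeff {Nat} 0 0$, which lifts to $\eqeffEl {Nat} Z Z$ by Lemma \ref{lem:LowLevelHighLevel}.4. Item 2 is just the stability of $S$ as a stable function $El(Nat) \rightarrow El(Nat)$ (modulo the standard remark that $\eqeffEl {Nat} u v$ implies $\eqeffEl {Nat} u u$ and $\eqeffEl {Nat} v v$). For items 3 and 4, I would define stable predicates $F_3(u,v) \eqdef \eqeffEl {Nat} {S(u)} {S(v)} \Rightarrow \eqeffEl {Nat} u v$ and $F_4(u) \eqdef \eqeffEl {Nat} {S(u)} Z \Rightarrow \bot$, verify that they lie in $SP(El(Nat), El(Nat))$ and $SP(El(Nat))$ respectively using stability of $S$, and then apply Theorem \ref{th:ExtensionTruth}. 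At the low level, $F_3(el_{Nat}(x), el_{Nat}(y))$ reduces via the computation rule and Lemma \ref{lem:LowLevelHighLevel}.4 to the implication $E(s(x), s(y)) \Rightarrow E(x,y)$, which is Lemma \ref{lem:NProperties}.5; likewise $F_4(el_{Nat}(x))$ reduces to Lemma \ref{lem:NProperties}.6.

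The main obstacle is item 5, the induction principle, because we must connect high-level quantification over $El(Nat)$ to low-level quantification over $\Nat$ while arranging for the induction hypothesis to feed into itself across the $el_{Nat}/S$ boundary. My plan is: given $P \in SP(El(Nat))$ with $\vDash P(Z)$ and $\vDash \forall u \in El(Nat), P(u) \Rightarrow P(S(u))$, define $Q \in \Nat \rightarrow Prop$ by $Q(x) \eqdef P(el_{Nat}(x))$. The base case $Q(0)$ is exactly $P(Z)$. For the step, if $E(x, x)$ and $Q(x)$ hold, then by Lemma \ref{lem:LowLevelHighLevel}.4 we have $\eqeffEl {Nat} {el_{Nat}(x)} {el_{Nat}(x)}$, so the high-level induction hypothesis yields $P(S(el_{Nat}(x)))$; combining the computation rule $\eqeffEl {Nat} {S(el_{Nat}(x))} {el_{Nat}(s(x))}$ with the stability of $P$ then gives $Q(s(x))$. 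Applying Lemma \ref{lem:NProperties}.7 to the resulting $Q$ yields $\vDash \forall x \in \Nat, E(x,x) \Rightarrow P(el_{Nat}(x))$, which is exactly the low-level hypothesis needed to invoke Theorem \ref{th:ExtensionTruth} on the stable predicate $P$, concluding $\vDash \forall u \in El(Nat), P(u)$.

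The delicate point in item 5 is verifying that Lemma \ref{lem:NProperties}.7 really is applicable: its conclusion is guarded by $E(x,x)$, which is precisely what $\forall x \in Nat$ unfolds to, so the shape matches, and the stability of $Q$ (inherited from that of $P$ along the stable function $el_{Nat}$) is what permits the final lifting step. Everything else is bookkeeping using Lemma \ref{lem:LowLevelHighLevel} and the computation rule for $S$.
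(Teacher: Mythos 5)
Your treatment of items 1--4 is correct and is essentially the paper's own proof (computation rule for the extension $S$, stability, the stable predicates $F_3$, $F_4$, Theorem \ref{th:ExtensionTruth}, and Lemma \ref{lem:NProperties}.5--6). The gap is in item 5, precisely at the point you call delicate. Lemma \ref{lem:NProperties}.7 takes as its induction-step premise $\forall x \in \Nat,\, Q(x) \Rightarrow Q(s(x))$, an intersection over all $x$: it must be realized by a \emph{single} program that receives only a realizer of $Q(x)$ and, in particular, no realizer of $E(x,x)$ (the proof of Lemma \ref{lem:NProperties}.7 iterates that one program $x$ times, recovering $x$ only from the guard $E(x,x)$ in the \emph{conclusion}). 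What you actually establish for $Q(x) \eqdef P(el_{Nat}(x))$ is the guarded step $E(x,x) \Rightarrow Q(x) \Rightarrow Q(s(x))$: you need $E(x,x)$ both to instantiate the high-level hypothesis $\vDash \forall u \in El(Nat), P(u) \Rightarrow P(S(u))$ at $u = el_{Nat}(x)$ (the quantifier unfolds to the guard $\eqeffEl {Nat} u u$) and to use the computation rule for the extension $S$, which is itself guarded. The guard cannot in general be extracted from a realizer of $Q(x)$: for the stable predicate $P(u) \eqdef \top$ one has $Q(x) = \top$, and no single program sends an arbitrary integer to an element of $E(x,x) = \{x\}$ for every $x$. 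So your remark that ``the shape matches'' checks the guard in the conclusion of Lemma \ref{lem:NProperties}.7 but overlooks the unguarded shape of its step premise, and the application of the lemma to your $Q$ is not justified as written.

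The repair is what the paper does: strengthen the induction predicate so that the guard is carried along by the induction itself, setting $P'(x) \eqdef E(x,x) \wedge P(el_{Nat}(x))$. The base case is $E(0,0)$ together with $\vDash P(Z)$; for the step, from $P'(x)$ you already have $E(x,x)$, so your own argument yields $P(S(el_{Nat}(x)))$ and then $P(el_{Nat}(s(x)))$ by the computation rule and stability of $P$, while $E(s(x),s(x))$ follows from $\eqeffEl {Nat} {S(el_{Nat}(x))} {el_{Nat}(s(x))}$ (or directly from Lemma \ref{lem:NProperties}.2); hence $P'(s(x))$, now with no guard on the step. Lemma \ref{lem:NProperties}.7 then gives $\vDash \forall x \in \Nat, E(x,x) \Rightarrow P(el_{Nat}(x))$, and the final lifting via Theorem \ref{th:ExtensionTruth} goes through exactly as you describe.
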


\end{toappendix}

\begin{toappendix}
[
\begin{proof}
Straightforward with Lemma \ref{lem:NProperties} and Theorem
\ref{th:ExtensionTruth}.
See Appendix \thisappendix
\end{proof}
]

\begin{proof}

\begin{itemize}

\item $E(0, 0)$, so $\eqeff {Nat} 0 0$.
Hence $\eqeffEl {Nat} {el_{Nat}(0)} {el_{Nat}(0)}$.
Therefore $\eqeffEl {Nat} Z Z$.

\item By stability of $S$.

\item We construct $F \in (|El(Nat)| \times |El(Nat)|) \rightarrow Prop$ defined
by: $F(u, v) \eqdef \eqeffEl {Nat} {S(u)} {S(v)} \Rightarrow
\eqeffEl {Nat} u v$. By stability of $S$, we have $F \in SP(El(Nat), El(Nat))$.

If $\eqeff {Nat} x x$, $\eqeff {Nat} y y$ and
$\eqeffEl {Nat} {S(el_{Nat}(x))} {S(el_{Nat})(y)}$:
By definition of $S$, $\eqeffEl {Nat} {S(el_{Nat}(x))} {el_{Nat}(s(x))}$ and
$\eqeffEl {Nat} {S(el_{Nat}(y))} {el_{Nat}(s(y))}$.
Hence $\eqeffEl {Nat} {el_{Nat}(s(x))} {el_{Nat}(s(y))}$.
Therefore, $E(s(x), s(y))$.
By Lemma \ref{lem:NProperties}.5, $E(x, y)$.
Hence $\eqeffEl {Nat} {el_{Nat}(x)} {el_{Nat}(y)}$.

By Theorem \ref{th:ExtensionTruth} we have
$\vDash \forall u, v \in El(Nat), F(u, v)$.
Then we can conclude.

\item We construct $F \in |El(Nat)| \rightarrow Prop$ defined by:
$F(u) \eqdef \eqeffEl {Nat} {S(u)} Z \Rightarrow \bot$.
By stability of $S$ we have $F \in SP(El(Nat))$.

If $\eqeff {Nat} x x$ and $\eqeffEl {Nat} {S(el_{Nat}(x))} Z$:
By definition of $S$, $\eqeffEl {Nat} {S(el_{Nat}(x))} {el_{Nat}(s(x))}$.
Hence $\eqeffEl {Nat} {el_{Nat}(s(x))} {el_{Nat}(0)}$.
Therefore $E(s(x), 0)$.
By Lemma \ref{lem:NProperties}.6 we have $\bot$.

By Theorem \ref{th:ExtensionTruth} we have $\vDash \forall u \in El(Nat), F(u)$.
Then we can conclude.

\item We construct $P' \in |Nat| \rightarrow Prop$ defined by:
$P'(x) \eqdef E(x, x) \wedge P(el_{Nat}(x))$.

\begin{itemize}

\item $E(0, 0)$ and $P(el_{Nat}(0))$ so we have $P'(0)$.

\item If $P'(x)$ then $E(x, x)$ and $P(el_{Nat}(x))$.
Hence $P(S(el_{Nat}(x)))$.
We also have $\eqeffEl {Nat} {S(el_{Nat}(x))} {el_{Nat}(s(x))}$.
Hence $E(s(x), s(x))$ and $P(el_{Nat}(s(x)))$.
Therefore $P'(s(x))$.

\end{itemize}

By Lemma \ref{lem:NProperties}.5, $\vDash \forall x \in \Nat, E(x, x)
\Rightarrow
P'(x)$.
Hence $\vDash \forall x \in Nat, P(el_{Nat}(x))$.
By Theorem \ref{th:ExtensionTruth} we have
$\vDash \forall u \in El(Nat), P(u)$.

\end{itemize}

\end{proof}

\end{toappendix}

Finally we can conclude than \textbf{C} has an object of natural integers:

\begin{theorem}[\textbf{C} has an object of integers]

We construct $Z_m \in |El(1)| \rightarrow |El(Nat)|$ defined by
$Z_m(u) \eqdef Z$. $(Nat, Z_m, S)$ is the object of natural integers in the
category \textbf{C} which means that:
\begin{itemize}

\item $Z_m : El(1) \rightarrow El(Nat)$ and
$S : El(Nat) \rightarrow El(Nat)$

\item For all $X$ effective set, $f : El(1) \rightarrow El(X)$,
$g : El(X) \rightarrow El(X)$, there exists a unique (modulo $\approx$)
$\varphi : El(Nat) \rightarrow El(X)$ such that
$\varphi \circ Z_m \approx f$ and $\varphi \circ S \approx g \circ \varphi$

\end{itemize}
\end{theorem}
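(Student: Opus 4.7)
The plan is to build $\varphi$ by primitive recursion at the low level and then lift to the high level using Theorem~\ref{th:ExistenceUnicityExtensionFun}. First, $Z_m : El(1) \rightarrow El(Nat)$ is immediate: $Z_m$ is the constant $Z$, whose reflexivity follows from Theorem~\ref{th:NatProperties}. Then, given $f : El(1) \rightarrow El(X)$ and $g : El(X) \rightarrow El(X)$, define $\varphi' \in |Nat| \rightarrow |El(X)|$ by $\varphi'(0) \eqdef f(<>)$ and $\varphi'(s(n)) \eqdef g(\varphi'(n))$.

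The first real step is to show that $\varphi' : Nat \rightarrow El(X)$, \ie
$\vDash \forall n, m \in \Nat, E(n,m) \Rightarrow \eqeffEl X {\varphi'(n)} {\varphi'(m)}$.
Because a realiser of $E(n,m)$ exists only when $n = m$, this reduces to constructing a uniform realiser, indexed by $n$, of $\eqeffEl X {\varphi'(n)} {\varphi'(n)}$. I would build this by primitive recursion via Lemma~\ref{lem:NProperties}.7: the base case uses stability of $f$ together with $\vDash \eqeffEl 1 {<>} {<>}$ (Theorem~\ref{th:FinalProperties}), and the step case uses stability of $g$. Once $\varphi'$ is known to be stable, let $\varphi : El(Nat) \rightarrow El(X)$ be its extension, so that $\vDash \forall n \in Nat, \eqeffEl X {\varphi(el_{Nat}(n))} {\varphi'(n)}$.

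Next, verify the two commutation equations. For $\varphi \circ Z_m \approx f$: for any $u \in El(1)$, $\vDash \eqeffEl 1 u {<>}$ by Theorem~\ref{th:FinalProperties}, hence $f(u) \approx f(<>) = \varphi'(0) \approx \varphi(el_{Nat}(0)) = \varphi(Z) = \varphi(Z_m(u))$ using stability of $f$ and the extension property of $\varphi$. For $\varphi \circ S \approx g \circ \varphi$: I would define the stable predicate $P \in SP(El(Nat))$ by $P(u) \eqdef \eqeffEl X {\varphi(S(u))} {g(\varphi(u))}$ and apply Theorem~\ref{th:ExtensionTruth}; it suffices to check $P(el_{Nat}(n))$ for $n \in \Nat$ with $E(n,n)$, which unfolds using $S(el_{Nat}(n)) \approx S'(n) = el_{Nat}(s(n))$ (extension property of $S$), then $\varphi(el_{Nat}(s(n))) \approx \varphi'(s(n)) = g(\varphi'(n)) \approx g(\varphi(el_{Nat}(n)))$ by stability of $g$ and the extension property of $\varphi$.

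Finally, for uniqueness, suppose $\psi : El(Nat) \rightarrow El(X)$ satisfies $\psi \circ Z_m \approx f$ and $\psi \circ S \approx g \circ \psi$. Let $P(u) \eqdef \eqeffEl X {\varphi(u)} {\psi(u)}$, which lies in $SP(El(Nat))$ since $\varphi$ and $\psi$ are stable. The base case $P(Z)$ follows because both sides are $\approx f(<>)$, and the inductive step from $P(u)$ to $P(S(u))$ follows from $\varphi(S(u)) \approx g(\varphi(u)) \approx g(\psi(u)) \approx \psi(S(u))$ using stability of $g$ and the commutation equations for $\varphi$ and $\psi$. Theorem~\ref{th:NatProperties}.5 then yields $\varphi \approx \psi$. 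The main obstacle is keeping the bookkeeping straight in the second commutation equation: the recursive clause for $\varphi'$ only sees the low-level successor $s$, whereas $S$ is obtained as the extension of $S'$, so the argument routes through the extension characterisations of both $S$ and $\varphi$ before the recursive definition of $\varphi'$ becomes applicable.
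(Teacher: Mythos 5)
Your proof is correct, but it follows a genuinely different route from the paper. The paper's own proof is a one-liner: it observes that Theorem~\ref{th:NatProperties} establishes the Peano axioms for $(Nat, Z_m, S)$ in the internal logic of \textbf{C} and then invokes the general topos-theoretic fact that an object satisfying the internal Peano axioms is a natural numbers object --- a nontrivial background result that the paper does not prove. You instead construct the recursion morphism explicitly inside the framework: $\varphi'$ is defined by low-level primitive recursion, its stability is realised uniformly via the realised induction principle of Lemma~\ref{lem:NProperties}.7 (correctly identifying that a realiser of $\forall n,m,\ E(n,m) \Rightarrow \eqeffEl X {\varphi'(n)} {\varphi'(m)}$ amounts to a computable family of reflexivity realisers indexed by $n$), the lift to $\varphi$ uses Theorem~\ref{th:ExistenceUnicityExtensionFun}, the commutation equations are checked through Theorem~\ref{th:ExtensionTruth}, and uniqueness follows from the high-level induction principle (last item of Theorem~\ref{th:NatProperties}). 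This buys self-containedness: no appeal to external category or topos theory, entirely in the spirit of the paper's programme, at the cost of more bookkeeping. Two small points you gloss over but which are easily repaired: when you rewrite $\varphi(el_{Nat}(s(n))) $ via the extension property of $\varphi$ you need a realiser of $E(s(n),s(n))$, which comes from $E(n,n)$ by Lemma~\ref{lem:NProperties}.2; and the chains you write with $\approx$ between elements are really $\eqeffEl X \cdot \cdot$ statements glued by symmetry and transitivity, which also requires the reflexivity side conditions ($\eqeffEl 1 {<>} {<>}$ from Theorem~\ref{th:FinalProperties}, $\eqeffEl {Nat} Z Z$) that you do in fact have available.
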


\begin{proof}

With Theorem \ref{th:NatProperties}, we can prove that $(Nat, Z_m, S)$
satisfies the Peano axioms in the internal logic of the topos \textbf{C}.
Therefore, $(Nat, Z_m, S)$ is the object of naturals integers in \textbf{C}.

\end{proof}

Because our framework is based on realisability, we can do program extraction:

\begin{theorem}[Program extraction]

\label{th:ProgramExtraction}

Assume $f : El(Nat) \rightarrow El(Nat)$.

There exists a unique $g \in \Nat \rightarrow \Nat$ such that:

\[\vDash \forall x \in Nat, \eqeffEl {Nat} {f(el_{Nat}(x))} {el_{Nat}(g(x))}\]

And then, $g$ is computable.

\end{theorem}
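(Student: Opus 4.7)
The plan is to build $g$ pointwise by unpacking the Existence/Unicity data attached to $f(el_{Nat}(x))$, verify the equation and its uniqueness at the high level, and then revisit the construction at the level of integer realisers to establish computability.

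For each $x \in \Nat$, the integer $x$ itself realises $E(x,x) = \eqeff{Nat}{x}{x}$. By Lemma~\ref{lem:LowLevelHighLevel}.4 this lifts to $\vDash \eqeffEl{Nat}{el_{Nat}(x)}{el_{Nat}(x)}$, and applying the stability of $f$ yields $\vDash \eqeffEl{Nat}{f(el_{Nat}(x))}{f(el_{Nat}(x))}$. Its Existence conjunct provides some $y \in \Nat$ with $f(el_{Nat}(x))(y)$ inhabited, and its Unicity conjunct forces $y$ to be uniquely determined; set $g(x) := y$. The required equation $\vDash \eqeffEl{Nat}{f(el_{Nat}(x))}{el_{Nat}(g(x))}$ is then immediate from Lemma~\ref{lem:LowLevelHighLevel}.3 applied to $u = f(el_{Nat}(x))$ with the witness $g(x)$. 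Uniqueness: if $g'$ also satisfies the equation, then symmetry and transitivity give $\eqeffEl{Nat}{el_{Nat}(g(x))}{el_{Nat}(g'(x))}$, whence Lemma~\ref{lem:LowLevelHighLevel}.4 yields $E(g(x), g'(x)) \neq \emptyset$, forcing $g(x) = g'(x)$.

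To prove that $g$ is computable, I replay the construction with explicit integer realisers. Let $e_4$ be a fixed code realising the forward direction of Lemma~\ref{lem:LowLevelHighLevel}.4 and $e_f$ a fixed code realising the stability of $f$. Then $r(x) := \varphi_{e_f}(\varphi_{e_4}(x))$ is a partial recursive function of $x$ producing a realiser of $\eqeffEl{Nat}{f(el_{Nat}(x))}{f(el_{Nat}(x))}$. Standard pair projections extract from $r(x)$ the unicity realiser $e(x)$ and the existence realiser $m(x)$. By the definition of $\Rightarrow$ in the framework and the shape of the unicity clause, $\varphi_{\varphi_{e(x)}(m(x))}(m(x))$ must lie in $\eqeff{Nat}{g(x)}{g(x)} = \{g(x)\}$, so this value equals $g(x)$. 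Hence $g(x) = \varphi_{\varphi_{e(x)}(m(x))}(m(x))$ is partial recursive in $x$ and, by the preceding argument, is defined everywhere, so $g$ is (total) computable.

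The main obstacle is exactly this final extraction step. Because the paper interprets $\exists x \in X, H(x)$ as a plain union $\bigcup_{x \in X} H(x)$ rather than a dependent pair, the realiser $m(x)$ does not itself encode the witness $y = g(x)$. What rescues the argument is the rigidity of equality on $Nat$: $\eqeff{Nat}{y}{y} = \{y\}$ is a singleton whose unique integer is $y$, so the unicity realiser, applied to two copies of $m(x)$, is forced to output $g(x)$ as an integer. This is a program-extraction mechanism specific to the $Nat$ structure.
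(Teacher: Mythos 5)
Your proposal is correct and follows essentially the same route as the paper: define $g(x)$ as the unique witness given by the Existence and Unicity clauses of $\eqeffEl{Nat}{f(el_{Nat}(x))}{f(el_{Nat}(x))}$, prove uniqueness by producing an inhabitant of $E(g(x),g'(x))$, and get computability from the fact that $E(n,n)=\{n\}$ forces any realiser of the relevant equality to be the number $n$ itself. The only difference is cosmetic and lies in the extraction step: the paper reuses its uniqueness argument with $h=g$, so a realiser of $\forall x\in\Nat,\ E(x,x)\Rightarrow E(g(x),g(x))$ applied to $x$ outputs $g(x)$ directly, whereas you dissect the reflexivity realiser of $f(el_{Nat}(x))$ into its Unicity and Existence components and apply the former to two copies of the latter --- the same singleton trick, packaged differently.
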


\begin{proof}

If $E(x, x)$ then $\eqeffEl {Nat} {el_{Nat}(x)} {el_{Nat}(x)}$.
So, $\eqeffEl {Nat} {f(el_{Nat}(x))} {f(el_{Nat}(x))}$.
Hence, there exists $m \in |Nat|$, such that
$\eqeffEl {Nat} {f(el_{Nat}(x))} {el_{Nat}(m)}$.\\
Therefore, $\vDash \forall x \in \Nat, E(x, x) \Rightarrow \exists m \in N,
\eqeffEl {Nat} {f(el_{Nat}(x))} {el_{Nat}(m)}$.

Hence, there exists 
$e \in \forall x \in \Nat, E(x, x) \Rightarrow \exists m \in \Nat,
\eqeffEl {Nat} {f(el_{Nat}(x))} {el_{Nat}(m)}$.
Then $\varphi_e$ exists.
Let $x \in \Nat$.
So $x \in E(x, x)$.
Hence $\varphi_e(x) \downarrow \in \exists m \in \Nat, \eqeffEl {Nat}
{f(el_{Nat}(x))} {el_{Nat}(m)}$.
Therefore, there exists $m \in \Nat$ such that
$\varphi_e(x) \in \eqeffEl {Nat} {f(el_{Nat}(x))} {el_{Nat}(m)}$.
Let $m' \in \Nat$ such that $\varphi_e(x) \in \eqeffEl {Nat}
{f(el_{Nat}(x))} {el_{Nat}(m')}$.
Hence $\vDash \eqeffEl {Nat} {f(el_{Nat}(x))} {el_{Nat}(m)}$ and
$\vDash \eqeffEl {Nat} {f(el_{Nat}(x))} {el_{Nat}(m')}$.
Therefore $\vDash E(m, m')$.
So there exists $k \in E(m, m')$.
Hence $m = m'$.
Therefore there exists a unique $g : \Nat \rightarrow \Nat$ such that for all
$x \in \Nat$, $\varphi_e(x) \downarrow \in \eqeffEl {Nat} {f(el_{Nat}(x))}
{el_{Nat}(g(x))}$.
Hence $e \in \forall x \in Nat, \eqeffEl {Nat}
{f(el_{Nat}(x))} {el_{Nat}(g(x))}$.
Therefore $\vDash \forall x \in Nat,
\eqeffEl {Nat} {f(el_{Nat}(x))} {el_{Nat}(g(x))}$.

Assume $h : \Nat \rightarrow \Nat$ such that $\vDash \forall x \in Nat,
\eqeffEl {Nat} {f(el_{Nat}(x))} {el_{Nat}(h(x))}$.
Hence $\vDash \forall x \in \Nat, E(x, x) \Rightarrow E(g(x), h(x))$.
So, there exists $e \in \forall x \in \Nat, E(x, x) \Rightarrow
E(g(x), h(x))$.
Therefore, $\varphi_e$ exists, and for all $x \in \Nat$,
$\varphi_e(x) \downarrow \in E(g(x), h(x))$.
Hence $\varphi_e(x) = g(x) = h(x)$.
Therefore $g = h$.

If we choose $g = h$, then there exists $e$ such that $\varphi_e$ exists,
and for all $x \in \Nat$, $\varphi_e(x) \downarrow = g(x)$.
Therefore $g$ is computable.

\end{proof}
This theorem itself motivates the construction of effective topos and it
allows us to construct computable functions from our framework.







\section{Conclusion}

We have built a realisability framework and with a different but equivalent
definition of the Effective Topos, we have proved that the Effective Topos
was a topos with an object of natural integers simply by using the high-level
part of this framework
(Theorems \ref{th:FinalProperties}, \ref{th:ProductProperties},
\ref{th:ClosureProperties}, \ref{th:PropProperties}, \ref{th:SubProperties} and
\ref{th:AUC}) and adapting the proof that the category of sets is a topos.
The only difference is that when constructing a morphism, we have to
check stability which is always straightforward with the stability
properties.
Moreover, most of the construction of the high-level of the framework is
facilitated by Theorems~\ref{th:ExtensionTruth}
and ~\ref{th:ExistenceUnicityExtensionFun}.

With this framework we can manipulates algebraic types because we have integers
and the power of Topos Theory.
But it would be better to have the algebraic types as a core feature of the
framework.

As a future work we could make a typing system of high order logic where
the syntax would be trivially inspired of the high-level part of the framework:
the framework would be a trivial model of this system and this would prove
the correctness of the system and the ability of extracting proofs.
We should use an extended version which manipulates for example dependent types
which could be integrated in a future version of this framework.

Therefore, without having any knowledge about category or topos theory,
this framework can be a solid ground for future work in an intuitionistic higher-order logic system with interesting properties on the equality.


\bibliography{Common/abbrev-short,Common/Main,Common/crossrefs,Bib}

\appendix

\section{Full proofs}

\gettoappendix {lem:LowLevelHighLevel}
\gettoappendix {lem:LowLevelHighLevelproof}
\gettoappendix {th:ExistenceUnicityExtensionFun}
\gettoappendix {th:ExistenceUnicityExtensionFunproof}
\gettoappendix {lem:SoundComp}
\gettoappendix {lem:SoundCompproof}
\gettoappendix {th:FinalProperties}
\gettoappendix {th:FinalPropertiesproof}
\gettoappendix {th:Final}
\gettoappendix {th:Finalproof}
\gettoappendix {lem:ProductPreStable}
\gettoappendix {lem:ProductPreStableproof}
\gettoappendix {th:ProductProperties}
\gettoappendix {th:ProductPropertiesproof}
\gettoappendix {th:Product}
\gettoappendix {th:Productproof}
\gettoappendix {lem:PreClosure}
\gettoappendix {lem:PreClosureproof}
\gettoappendix {th:ClosureProperties}
\gettoappendix {th:ClosurePropertiesproof}
\gettoappendix {th:Closure}
\gettoappendix {th:Closureproof}
\gettoappendix {th:PropProperties}
\gettoappendix {th:PropPropertiesproof}
\gettoappendix {th:Prop}
\gettoappendix {th:Propproof}
\gettoappendix {th:SubProperties}
\gettoappendix {th:SubPropertiesproof}
\gettoappendix {th:Sub}
\gettoappendix {th:Subproof}
\gettoappendix {th:Mono}
\gettoappendix {th:Monoproof}
\gettoappendix {th:Carac}
\gettoappendix {th:Caracproof}
\gettoappendix {lem:NProperties}
\gettoappendix {lem:NPropertiesproof}
\gettoappendix {th:NatProperties}
\gettoappendix {th:NatPropertiesproof}

\section{Comparing with other categories}

\subsection{Variant with strict morphisms}

In this subsection we are going to wonder what happens if we had a condition of
strictness on the morphisms.

First we define some kind of Klop construction to our framework:

\newcommand{\Klop}[2]{[{#1} \mid {#2}]}

\begin{definition}[Definition of the Klop Construction]

Assume $X$ is an effective set, $u \in |El(X)|$ and $F \in Prop$.
We	write $\Klop u F \in |El(X)|$ defined by:
${\Klop u F}(x) \eqdef u(x) \wedge F$

\end{definition}

\begin{lemma}[Properties of the Klop Construction]

\label{lem:KlopProperties}

Assume $X$ is an effective set. Then:

\[\vDash \forall u, v \in |El(X)|, F \in Prop,
\eqeffEl X u {\Klop v F} \Leftrightarrow (\eqeffEl X u v \wedge F)
\]

\end{lemma}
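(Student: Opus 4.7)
The plan is to prove the two implications of the equivalence separately, working directly from the definition of $|.=_{El(X)}.|$, which splits into the four clauses of stability of $u$, unicity of $u$, existence of some witness for $u$, and pointwise equivalence with the second argument. The key observation is that the first three clauses are properties of $u$ alone, so they transfer verbatim between $\eqeffEl X u v$ and $\eqeffEl X u {\Klop v F}$; the work is entirely in the equivalence clause.

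For the forward direction, assume $\eqeffEl X u {\Klop v F}$. Unpacking gives stability and unicity of $u$, an $x$ with $u(x)$, and $\forall x,\ u(x) \Leftrightarrow (v(x) \wedge F)$. Applying the pointwise equivalence to the witness produced by existence yields $v(x) \wedge F$, so in particular $F$ holds. Once $F$ is available, the conjunct $F$ in $v(x) \wedge F$ becomes redundant, so the pointwise equivalence collapses to $\forall x,\ u(x) \Leftrightarrow v(x)$. Combined with the stability, unicity and existence already in hand, this gives $\eqeffEl X u v$, and we also obtained $F$.

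For the backward direction, assume $\eqeffEl X u v$ and $F$. The stability, unicity and existence of $u$ are already given. For the pointwise equivalence with $\Klop v F$, note that since $F$ holds, $v(x) \wedge F$ is equivalent to $v(x)$; by hypothesis $v(x) \Leftrightarrow u(x)$, whence $u(x) \Leftrightarrow \Klop v F(x)$. Thus $\eqeffEl X u {\Klop v F}$.

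No step here is a real obstacle; the argument is essentially propositional rearrangement. The only mildly delicate point is that $F$ must be extracted via an \emph{inhabited} witness of $u$, but the existence clause of the hypothesis $\eqeffEl X u {\Klop v F}$ supplies exactly such a witness, so the extraction goes through without any appeal to the machinery of Section~\ref{sec:SP} or \ref{sec:Function}.
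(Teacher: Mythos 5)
Your proof is correct and follows essentially the same route as the paper's: extract $F$ by feeding the existence witness of $u$ through the pointwise-equivalence clause, observe that stability, unicity and existence concern $u$ alone and so transfer unchanged, and adjust only the equivalence clause using the fact that $F$ makes $v(x)\wedge F$ interchangeable with $v(x)$. Nothing further is needed.
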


\begin{proof}

\begin{itemize}

\item If $\eqeffEl X u {\Klop v F}$:
Then, there exists $x \in |X|$ such that $u(x)$.
Hence, we have ${\Klop v F}(x)$.
Therefore $F$.
We also have $\eqeffEl X u u$.
If $u(y)$ then ${\Klop v F}(y)$, so $v(y)$.
If $v(y)$ then ${\Klop v F}(y)$, so $u(y)$.
Therefore $\eqeffEl X u v$.

\item If $\eqeffEl X u v$ and $F$:
Then $\eqeffEl X u u$.
If $u(x)$ then $v(x)$, so ${\Klop v F}(x)$.
If ${\Klop v F}(x)$ then $v(x)$, so $u(x)$.
Therefore $\eqeffEl X u {\Klop u F}$.

\end{itemize}

\end{proof}

\begin{definition}[Definition of $C_{strict}$]

We defined the category $C_{strict}$ as follows:

\begin{itemize}

\item The objects of $C_{strict}$ are the effective sets.

\item The morphisms from $X$ to $Y$ in $C_{strict}$ are the
$f : El(X) \rightarrow El(Y)$ (modulo $\approx$) such that:

\[\vDash \forall u \in |El(X)|, \eqeffEl Y {f(u)} {f(u)} \Rightarrow
\eqeffEl X u u
\]

\item The composition is the usual composition

\end{itemize}

$C_{strict}$ is indeed a category.

\end{definition}

We write:

\begin{itemize}

\item $F(X) \eqdef G(X) \eqdef X$

\item $F(f)(u) \eqdef [f(u) | |u =_{El(X)} u|]$

\item $G(f) \eqdef f$

\end{itemize}

\begin{lemma}[Properties of $F$ and $G$]\strut

\label{lem:StrictProperties}

\begin{itemize}

\item For all $f : El(X) \rightarrow El(Y)$,
$\vDash \forall u \in El(X), |(F f)(u) = f(u)|$

\item $F$ is a functor from \textbf{C} to $C_{strict}$

\item $G$ is a functor from $C_{strict}$ to \textbf{C}

\item $G \circ F = Id_{C}$ and $F \circ G = Id_{C_{strict}}$

\end{itemize}

\end{lemma}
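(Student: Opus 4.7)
The plan is to first establish the key equivalence $(Ff)(u) \approx f(u)$ whenever $u$ is stable (bullet one), and then derive the remaining three bullets as short corollaries. For bullet one, I would take $u \in |El(X)|$ with $\eqeffEl X u u$ (which is what $u \in El(X)$ abbreviates), set $P := \eqeffEl X u u$, and observe that by definition $(Ff)(u) = \Klop{f(u)}{P}$. Applying Lemma~\ref{lem:KlopProperties} with left side $f(u)$, right side $f(u)$, and proposition $P$ gives $\eqeffEl Y {f(u)} {\Klop{f(u)}{P}} \Leftrightarrow (\eqeffEl Y {f(u)} {f(u)} \wedge P)$. The right-hand conjunction holds because $f$ is stable (giving $\eqeffEl Y {f(u)} {f(u)}$) and $u$ is stable (giving $P$), so the equivalence is witnessed.

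For bullet two I need three things: that $Ff : El(X) \to El(Y)$ as an effective-set morphism, that $Ff$ is strict in the sense of $C_{strict}$, and that $F$ preserves identities, composition, and $\approx$. Stability of $Ff$ follows by combining bullet one with stability of $f$: for $\eqeffEl X u {u'}$, both $(Ff)(u)$ and $(Ff)(u')$ are $\approx$-equal to $f(u)$ and $f(u')$ respectively, and $\eqeffEl Y {f(u)} {f(u')}$ by stability of $f$. For strictness, given $\eqeffEl Y {(Ff)(u)} {(Ff)(u)}$, I apply Lemma~\ref{lem:KlopProperties} again, this time with left side $(Ff)(u) = \Klop{f(u)}{P}$, right side $f(u)$, and proposition $P$: the equivalence reads $\eqeffEl Y {\Klop{f(u)}{P}} {\Klop{f(u)}{P}} \Leftrightarrow (\eqeffEl Y {\Klop{f(u)}{P}} {f(u)} \wedge P)$, so in particular $P$ holds, which is exactly $\eqeffEl X u u$. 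Preservation of identity, $\approx$, and composition then reduce via bullet one to the corresponding statements for raw $f$ and $g$; for composition one uses that $F(f)(u)$ is itself stable (by the strictness just proved), so bullet one applies to $g$ at $F(f)(u)$, and stability of $g$ together with bullet one gives $g(F(f)(u)) \approx g(f(u))$.

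Bullet three is immediate: $G$ acts as identity on objects and as the inclusion on morphisms, and every morphism of $C_{strict}$ is by definition already a morphism of $\mathbf{C}$. Bullet four is also a one-liner from bullet one, since $(G \circ F)(f) = F(f) \approx f$ and $(F \circ G)(f) = F(f) \approx f$, which is the required equality modulo $\approx$ in each category (the relation $\approx$ is the same in both). The only mild obstacle is bullet two's strictness check: one must notice that Lemma~\ref{lem:KlopProperties} can be read as saying a Klop construction is inhabited precisely when its attached proposition holds, and then invoke it in the slightly less obvious direction to extract $P$ from self-equality of $\Klop{f(u)}{P}$.
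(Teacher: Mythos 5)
Your proof is correct and takes essentially the same route as the paper, which likewise obtains the first bullet as a corollary of Lemma~\ref{lem:KlopProperties} and treats the remaining bullets as routine; your use of the Klop lemma a second time to extract strictness of $Ff$ (reading it as: self-equality of $[f(u) \mid P]$ yields $P$) is exactly the intended argument. One cosmetic slip: in the composition step, the stability of $F(f)(u)$ for stable $u$ follows from the stability of $Ff$ that you proved just before (applied with $u' = u$), not from strictness, which is the converse implication.
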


\begin{proof}

The first point is a corollary of Lemma \ref{lem:KlopProperties}.
Then the other points are trivial.

\end{proof}

\begin{theorem}

\label{th:Strict}

\textbf{C} and $C_{strict}$ are isomorph, hence equivalent.

\end{theorem}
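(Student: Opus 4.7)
The plan is to observe that Theorem~\ref{th:Strict} is an immediate consequence of Lemma~\ref{lem:StrictProperties}. The four bullets of that lemma already state that $F$ is a functor $\textbf{C} \to C_{strict}$, that $G$ is a functor $C_{strict} \to \textbf{C}$, and, crucially, that the two compositions satisfy $G \circ F = Id_{\textbf{C}}$ and $F \circ G = Id_{C_{strict}}$. These equalities hold on objects (where $F$ and $G$ act as the identity) and on morphisms (modulo $\approx$), which is exactly what is required to exhibit $F$ and $G$ as mutually inverse functors. Such a pair of functors witnesses an isomorphism of categories, and every isomorphism of categories is in particular an equivalence, so the theorem follows.

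Since the theorem reduces to invoking the lemma, the real content sits inside Lemma~\ref{lem:StrictProperties}. The first bullet, $\vDash \forall u \in El(X), \eqeffEl Y {F(f)(u)} {f(u)}$, is a direct application of Lemma~\ref{lem:KlopProperties}: with $v = f(u)$ and proposition $\eqeffEl X u u$, that lemma collapses $\Klop{f(u)}{\eqeffEl X u u}$ to $f(u)$ modulo $\approx$ whenever $\eqeffEl X u u$ holds. From the first bullet, the two composition statements of the fourth bullet are immediate: on any $f \in \textbf{C}$ we have $G(F(f))(u) = F(f)(u) \approx f(u)$, and on a strict $f \in C_{strict}$ we have $F(G(f))(u) = F(f)(u) \approx f(u)$, so both compositions act as the identity modulo $\approx$.

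The main obstacle, once again absorbed into the lemma rather than the present theorem, is checking that $F$ actually produces strict morphisms. For $F(f)$ to belong to $C_{strict}$ we need $\eqeffEl Y {F(f)(u)} {F(f)(u)} \Rightarrow \eqeffEl X u u$. Applying Lemma~\ref{lem:KlopProperties} with $v = f(u)$ and proposition $\eqeffEl X u u$ unfolds $\eqeffEl Y {F(f)(u)} {F(f)(u)}$ into the conjunction $\eqeffEl Y {F(f)(u)} {f(u)} \wedge \eqeffEl X u u$, whose right conjunct is precisely the required strictness condition. Stability of $F(f)$ as a function $El(X) \to El(Y)$ follows from the stability of $f$ combined with the same Klop unfolding. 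With these observations discharged, the isomorphism $\textbf{C} \cong C_{strict}$, and hence the equivalence of the two categories, drops out.
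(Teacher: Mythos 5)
Your proposal is correct and follows the paper's own route: the paper proves Theorem~\ref{th:Strict} simply as a corollary of Lemma~\ref{lem:StrictProperties}, exactly as you do, and your additional unpacking of that lemma (the collapse of the Klop construction via Lemma~\ref{lem:KlopProperties} and the strictness check for $F(f)$) matches the paper's intended, largely omitted, verification.
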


\begin{proof}

Corollary of Lemma \ref{lem:StrictProperties}.

\end{proof}

So having strict morphisms does not change the power of the effective topos.
It is just more complicated to use.

\subsection{The usual effective topos}

\label{sec:Usual}

\begin{definition}[Definition of $C_{usual}$]

We write $C_{usual}$ the usual definition of the effective topos which is
the following:

\begin{itemize}

\item The objects of $C_{usual}$ are the effective sets.

\item The morphisms from $X$ to $Y$ in $C_{usual}$ are the
$F \in (|X| \times |Y|) \rightarrow Prop$ such that:

\begin{itemize}

\item $\vDash \forall x \in |X|, y \in |Y|,
F(x, y) \Rightarrow (\eqeff X x x \wedge \eqeff Y y y)$

\item $\vDash \forall x, x' \in |X|, y, y' \in |Y|,
F(x, y) \Rightarrow \eqeff X x {x'} \Rightarrow \eqeff Y y {y'} \Rightarrow
F(x', y')$

\item $\vDash \forall x, x' \in |X|, y, y' \in |Y|,
F(x, y) \Rightarrow F(x', y') \Rightarrow \eqeff X x {x'} \Rightarrow
\eqeff Y y {y'}$

\item $\vDash \forall x \in |X|, \eqeff X x x \Rightarrow \exists y \in |Y|,
F(x, y)$

\end{itemize}

Modulo the relation $\approx$ defined by:

\[F \approx G \eqdef (\vDash \forall x \in |X|, y \in |Y|,
F(x, y) \Leftrightarrow G(x, y))\]

\item If $F \in (|X| \times |Y|) \rightarrow Prop$ and
$G \in (|Y| \times |Z|) \rightarrow Prop$ then \\
$G \circ F \in (|X| \times |Z|) \rightarrow Prop$ is defined as follows:

$$(G \circ F)(x, z) \eqdef \exists y \in |Y|, F(x, y) \wedge G(y, z)$$

This notion of composition is coherent with $\approx$.

\end{itemize}

\end{definition}

We write:

\begin{itemize}

\item $\Phi(X) \eqdef \Psi(X) \eqdef X$

\item $\Phi(f)(x, y) \eqdef \eqeffEl Y {f(el_X(x))} {el_Y(y)}$

\item $\Psi(F)(u)(y) \eqdef \exists x \in |X|, \eqeffEl X u {el_X(x)} \wedge
F(x, y)$

\end{itemize}

\begin{lemma}[Properties of $C_{usual}$]

\label{lem:UsualProperties}

\begin{itemize}

\item $\Phi$ is a functor from $C_{strict}$ to $C_{usual}$.

\item $\Psi$ is a functor from $C_{usual}$ to $C_{strict}$.

\item $\Psi \circ \Phi = Id_{C_{strict}}$ and $\Phi \circ \Psi =
Id_{C_{usual}}$.

\end{itemize}

\end{lemma}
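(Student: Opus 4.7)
The plan is to prove the three items in order, leveraging the correspondence between the low level and the high level that was established by Lemma \ref{lem:LowLevelHighLevel} (especially parts~3--5) and the extension mechanism of Theorem \ref{th:ExtensionTruth}. For the first bullet, I would begin by fixing a strict morphism $f\col El(X)\rightarrow El(Y)$ and checking that $\Phi(f)(x,y)\eqdef \eqeffEl Y {f(el_X(x))} {el_Y(y)}$ satisfies the four clauses of a $C_{usual}$ morphism: strictness of $f$ gives $\eqeff X x x$ whenever $\Phi(f)(x,y)$ holds (hence also $\eqeff Y y y$ by Lemma~\ref{lem:LowLevelHighLevel}.4 applied to the right-hand term); stability of $f$ gives the $\eqeff X x {x'}$/$\eqeff Y y {y'}$ congruence clause; unicity of $f(el_X(x))$ gives the injectivity-style clause; and existence follows because $\eqeff X x x$ implies $\eqeffEl Y {f(el_X(x))} {f(el_X(x))}$, whence by Lemma~\ref{lem:LowLevelHighLevel}.5 there is some $y$ with $\eqeffEl Y {f(el_X(x))} {el_Y(y)}$. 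Preservation of identity and compatibility with $\approx$ are immediate; for composition one has to unfold the definition of $\Phi(g\circ f)(x,z)$ and use Lemma~\ref{lem:LowLevelHighLevel}.5 to witness the intermediate $y$ in $(\Phi(g)\circ \Phi(f))(x,z)$.

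For the second bullet I would show that $\Psi(F)\col El(X)\rightarrow El(Y)$ takes any $u\in|El(X)|$ with $\eqeffEl X u u$ to an element of $|El(Y)|$ satisfying stability/unicity/existence, by unpacking the four clauses of $F$ and using, in each verification, a witness $x$ with $\eqeffEl X u {el_X(x)}$ obtained from Lemma~\ref{lem:LowLevelHighLevel}.5. Strictness of $\Psi(F)$ is immediate: if $\Psi(F)(u)(y)$ then the witness $x$ gives $\eqeffEl X u u$ directly. Functoriality requires checking $\Psi(Id) \approx Id$ and $\Psi(G\circ F)\approx \Psi(G)\circ \Psi(F)$: the first is a direct application of Lemma~\ref{lem:LowLevelHighLevel}.3 (a singleton equivalent to $el_X(x)$ must agree with $u$), and the second is an exercise in chasing the existential witnesses, using Lemma~\ref{lem:LowLevelHighLevel}.3 again to collapse the chain $u\Rightarrow el_X(x)\Rightarrow y\Rightarrow el_Y(y)\Rightarrow z$.

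For the third bullet, I would first compute $(\Psi\circ\Phi)(f)$ on a strict $f$: by definition this sends $u$ to the predicate $y\mapsto \exists x,\;\eqeffEl X u {el_X(x)}\wedge \eqeffEl Y {f(el_X(x))} {el_Y(y)}$. When $\eqeffEl X u u$, pick $x$ with $\eqeffEl X u {el_X(x)}$ via Lemma~\ref{lem:LowLevelHighLevel}.5 and use stability of $f$ to rewrite this predicate as $y\mapsto\eqeffEl Y {f(u)} {el_Y(y)}$; Lemma~\ref{lem:LowLevelHighLevel}.5 then identifies this predicate with $f(u)$ up to $\approx$. Conversely, $(\Phi\circ\Psi)(F)(x,y)$ unfolds to $\exists x',\;\eqeffEl X {el_X(x)}{el_X(x')}\wedge F(x',y)$, and by Lemma~\ref{lem:LowLevelHighLevel}.4 together with the stability clause of $F$ this collapses to $F(x,y)$, giving $\Phi\circ\Psi\approx Id_{C_{usual}}$.

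The main obstacle I anticipate is the bookkeeping around preservation of composition, particularly for $\Psi$: one has to verify that the doubly-existential predicate obtained from $\Psi(G\circ F)(u)(z)$ is pointwise equivalent (as a member of $|El(Z)|$) to the one obtained from $\Psi(G)(\Psi(F)(u))(z)$, and the nontrivial direction uses the unicity clause of $F$ to ensure that the intermediate $y$ chosen in one unfolding matches the singleton $\Psi(F)(u)$ in the other. Everything else is a mechanical rewriting using the basic bridging lemmas, but this step is where the interaction between low-level existentials and high-level singletons has to be handled carefully.
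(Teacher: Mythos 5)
Your plan is correct and follows essentially the same route as the paper, which simply records the verification as straightforward while singling out exactly the point you also identify: the strictness of $f$ is what yields the first clause for $\Phi(f)$, with everything else a mechanical check via Lemma~\ref{lem:LowLevelHighLevel} and the compatibility of both translations with $\approx$. The only nitpick is a mis-citation in the $\Psi\circ\Phi$ computation, where identifying the predicate $y\mapsto\eqeffEl Y {f(u)}{el_Y(y)}$ with $f(u)$ uses parts 1 and 3 of Lemma~\ref{lem:LowLevelHighLevel} (part 5 only supplies the witness), which does not affect the argument.
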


\begin{proof}

Straightforward.
In particular, we use the strictness of $f$ to prove the strictness of
$\Phi(f)$.

\end{proof}






\begin{theorem}

\textbf{C} and $C_{usual}$ are isomorph, hence they are equivalent.

\end{theorem}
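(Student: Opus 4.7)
The plan is to chain the two isomorphisms that have already been set up in the appendix. Theorem~\ref{th:Strict} provides functors $F : \textbf{C} \to C_{strict}$ and $G : C_{strict} \to \textbf{C}$ with $G \circ F = Id_{\textbf{C}}$ and $F \circ G = Id_{C_{strict}}$. Lemma~\ref{lem:UsualProperties} provides functors $\Phi : C_{strict} \to C_{usual}$ and $\Psi : C_{usual} \to C_{strict}$ with $\Psi \circ \Phi = Id_{C_{strict}}$ and $\Phi \circ \Psi = Id_{C_{usual}}$. I would therefore define $\Phi \circ F : \textbf{C} \to C_{usual}$ and $G \circ \Psi : C_{usual} \to \textbf{C}$, and verify that these are mutually inverse functors.

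The key computations are routine: on the one hand,
\[ (G \circ \Psi) \circ (\Phi \circ F) \;=\; G \circ (\Psi \circ \Phi) \circ F \;=\; G \circ Id_{C_{strict}} \circ F \;=\; G \circ F \;=\; Id_{\textbf{C}}, \]
and on the other hand,
\[ (\Phi \circ F) \circ (G \circ \Psi) \;=\; \Phi \circ (F \circ G) \circ \Psi \;=\; \Phi \circ Id_{C_{strict}} \circ \Psi \;=\; \Phi \circ \Psi \;=\; Id_{C_{usual}}. \]
Each equality is simply the functoriality of composition together with one of the two lemmas. Since both composites are identity functors (not merely naturally isomorphic to them), we obtain a genuine isomorphism of categories, which in particular gives the desired equivalence.

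There is essentially no obstacle here beyond invoking the two preceding results in sequence; all the real work has been done in Theorem~\ref{th:Strict} (showing that the Klop-style truncation $F$ is inverse to the inclusion $G$, so that requiring strictness of morphisms does not change the category up to isomorphism) and in Lemma~\ref{lem:UsualProperties} (showing that the translation between a strict high-level morphism $f : El(X) \to El(Y)$ and a Hyland-style functional relation $F \in (|X|\times|Y|) \to Prop$ is bijective on the nose after quotienting by $\approx$). The only point worth double-checking is that the two $\approx$ relations used in Theorems~\ref{th:Strict} and in the statement of $C_{usual}$ are compatible, i.e.\ that $F \circ G$ and $\Phi \circ \Psi$ really are the identity on the quotiented hom-sets and not merely natural isomorphisms; but this is exactly what Lemma~\ref{lem:UsualProperties} and Lemma~\ref{lem:StrictProperties} assert.
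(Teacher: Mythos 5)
Your proposal is correct and matches the paper's own argument: the paper likewise concludes by composing the isomorphism $\textbf{C} \cong C_{strict}$ from Theorem~\ref{th:Strict} with the isomorphism $C_{strict} \cong C_{usual}$ from Lemma~\ref{lem:UsualProperties}. Your explicit verification that the composite functors are mutually inverse is just a spelled-out version of the same two-step chaining.
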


\begin{proof}

By Lemma \ref{lem:UsualProperties}, $C_{strict}$ and $C_{usual}$ isomorph.
By Theorem \ref{th:Strict}, $\textbf{C}$ and $C_{strict}$ isomorph.
Therefore $\textbf{C}$ and $C_{usual}$ isomorph.

\end{proof}

This legitimates the fact that we call $C$, the category we constructed, by the
name of "effective topos".

\subsection{Naive variant}

\label{sec:Naive}

The category $C_{naive}$ is the naive definition of \textbf{C} without using the
high level tools ($El(X)$, etc ...).

\begin{definition}[Definition of $C_{naive}$]

We define the category $C_{naive}$ as follows:

\begin{itemize}

\item The objects of $C_{naive}$ are the effective sets.

\item The morphisms from $X$ to $Y$ in $C_{naive}$ are the
$f : X \rightarrow Y$ (modulo $\approx$).

\item The composition is the usual composition.

\end{itemize}

$C_{naive}$ is indeed a category.

\end{definition}

With $C_{naive}$, we can adapt most of the work we have done in part 3 with
\textbf{C}.
Except that we cannot prove the axiom of unique choice.
Hence, $C_{naive}$ is not a topos and it would have been a bad choice to choose
it to base our framework on it.

Of course \textbf{C} is not equivalent to $C_{naive}$.

\end{document}